%% file: main_v5.tex
\documentclass[11pt]{article}
\usepackage{color}
\usepackage{epsfig}
\usepackage{epstopdf}
\usepackage[section]{placeins}
\usepackage{graphicx}
\usepackage{amsmath}
\usepackage{amssymb}
\usepackage{amsfonts}
\usepackage{url}
\usepackage{multirow}
\usepackage{array}
\usepackage{latexsym}
\usepackage{natbib}
\usepackage[noline,ruled]{algorithm2e}
\usepackage{bigstrut}
\usepackage{subcaption}
\usepackage{mathrsfs}
\usepackage{scalefnt}

\usepackage[toc,page]{appendix}
\usepackage{comment}
\usepackage[]{mdframed}

\newsavebox{\theorembox}
\newsavebox{\lemmabox}
\newsavebox{\claimbox}
\newsavebox{\factbox}
\newsavebox{\corollarybox}
\newsavebox{\examplebox}
\newsavebox{\remarkbox}
\newsavebox{\assbox}
\newsavebox{\propositionbox}
\newsavebox{\problembox}
\newsavebox{\defbox}

\savebox{\theorembox}{\noindent\bf Theorem}
\savebox{\lemmabox}{\noindent\bf Lemma}
\savebox{\factbox}{\noindent\bf Fact}
\savebox{\corollarybox}{\noindent\bf Corollary}
\savebox{\examplebox}{\noindent\bf Example}
\savebox{\remarkbox}{\noindent\bf Remark}
\savebox{\assbox}{\noindent\bf Assumption}
\savebox{\propositionbox}{\noindent\bf Proposition}
\savebox{\problembox}{\noindent\bf Problem}
\savebox{\defbox}{\noindent\bf Definition}

\newtheorem{ass}{\usebox{\assbox}}
\newtheorem{thm}{\usebox{\theorembox}}

\newtheorem{lem}{\usebox{\lemmabox}}

\newtheorem{cor}{\usebox{\corollarybox}}

\newtheorem{defn}{\usebox{\defbox}}

\def\blackslug{\hbox{\hskip 1pt \vrule width 4pt height 8pt depth 1.5pt
\hskip 1pt}}

\newcommand{\qed}{\mbox{}\hspace*{\fill}\nolinebreak\mbox{$\rule{0.7em}{0.7em}$}}

\newenvironment{proof}[1][\noindent \bf Proof:]{\begin{trivlist}
\item[\hskip \labelsep {\bfseries #1}]}{\end{trivlist}}

\input{symbols.tex}

\allowdisplaybreaks
\begin{document}

%
%
%
%
%

\title{Mind the Gap in the Mining Game}

\author{Kyoung-Kuk Kim\thanks{College of Business, E-mail: {\tt kkim@kaist.ac.kr}}, \quad Donghwa Seo\thanks{Corresponding author, Industrial and Systems Engineering, E-mail: {\tt properitas95@kaist.ac.kr} } \\
\small{\it Korea Advanced Institute of Science and Technology}
\\
}

\date{Sep 2024 }

\maketitle

\baselineskip 18pt
\begin{abstract}
We analyze intentional block delays (mining gaps) in Proof-of-Work blockchain systems, where miners strategically balance mining rewards against operational costs. Using a game-theoretic model, we derive a Nash equilibrium with optimal mining strategies and establish necessary and sufficient conditions for mining gap existence. We demonstrate that mining gaps, when combined with difficulty adjustment algorithms, can destabilize the system. We propose conditions to address sustainability concerns as block rewards decrease and reliance on transaction fees increases. Our findings are illustrated through a two-player game simulation and an analysis of the Bitcoin network, providing insights for blockchain design and policy. This work contributes to understanding strategic mining behavior and its impact on blockchain stability and efficiency.

\vs
\noindent
{\sc Keywords:} blockchain; Proof-of-Work; Bitcoin mining; difficulty adjustment algorithm
\end{abstract}

\section{Introduction}
Cryptocurrencies and blockchain technology have been one of major innovations in the recent history of finance. The core mechanism is decentralized ledger technology, where network nodes maintain a consistent ledger through consensus algorithms. In blockchain networks, data blocks are linked in a chain, each referencing its predecessor through encrypted connections. Block creation is based on consensus protocols, which impact system scalability, sustainability, and stability. Proof-of-Work (PoW) is the most prominent consensus algorithm, supporting cryptocurrencies like Bitcoin, Dogecoin, and Bitcoin Cash.

The economies and design issues of these systems are not fully understood, as noted by \cite{biais2023}. Network participants' aggregate decisions determine system dynamics, and we need to understand the trade-offs among their strategic options. Block creation involves solving complex puzzles defined by the PoW protocol for economic rewards. Various strategic actions or attacks have been studied in the literature. Double spending, executable by entities controlling 51\% of network computing power, is a well-known attack \citep{kroll2013economics, gervais2016security}. Selfish mining, introduced by \cite{eyal2018majority}, involves colluding miners secretly minting blocks for higher profits. Subsequent works by \cite{nayak2016stubborn}, \cite{sapirshtein2017optimal}, and \cite{hansjoerg2022profitability} discuss optimal strategies and attack combinations. Even without hiding the block, miners' strategic actions can make the system vulnerable. Mining is incentivized by block rewards and transaction fees. It is economically viable only when revenue exceeds costs, including hardware investments and operational expenses like electricity. A critical question is whether systems like Bitcoin can remain sustainable with transaction fees alone, given periodic halving of block rewards.

\cite{carlsten2016instability} demonstrate that miners may cease operations unless accumulated fees are sufficient to cover costs. This non-performing computing power, so called mining gap, increases system vulnerability. Building on this, \cite{tsabary2018gap} provide valuable insights through numerical simulations, predicting gap occurrence when block rewards are sufficiently low compared to fees.
Our work extends this research by developing a rigorous analytical framework. We consider multiple miners competing for rewards, each controlling their computing power utilization. Through mathematical analysis, we derive explicit conditions for the inevitability of mining gaps and establish sustainability criteria as block rewards decrease and reliance on transaction fees increases. 

Our key contribution is a theoretical examination of how mining gaps interact with difficulty adjustment algorithms (DAA). The DAA regulates block generation speed towards a pre-defined value, adjusting puzzle difficulty based on network hash power. However, we show that the interaction between strategic mining behavior and DAA can lead to unexpected outcomes. This builds on work by \cite{fiat2019energy}, who demonstrate periodic miner slowdowns to reduce complexity, and studies on `smart mining' and `fickle mining' by \cite{goren2019mind} and \cite{kwon2019bitcoin}. Such an interaction may lead to system instability also through economically motivated and simpler actions of hash power operations. These results complement and extend the numerical findings of previous studies, providing a deeper understanding of blockchain system dynamics. 

Lastly, we complete our analysis with numerical experiments. Firstly, we use a two-player setting where there are a large miner and a small miner in terms of capacity. And we examine what happens to the equilibrium dynamics of the system when various network features such as the dominance of the large miner change. As a by-product, it is shown that hash power concentration incurs more mining gaps but larger utilities for both miners. Indeed, there are economic incentives for miners to coalesce as shown in Section~\ref{subsec:coalition}. Secondly, we apply our results to the Bitcoin network. The halving mechanism together with DAA may result in mining gaps which would make the system vulnerable. We present the stability criteria of block rewards and transaction fees. 


Our model assumes an increasing, concave function of expected rewards over time, reflecting collective user actions. While we don't fully explore user behavior, we discuss implications of our findings for user strategies regarding transaction fees. This is connected to the works on transaction fee dynamics and order arrivals by \cite{brenner2015trends}, \cite{bowden2020modeling}, \cite{kawase2017transaction}, \cite{kasahara2018effect}, \cite{easley2019mining}, and \cite{gebraselase2021transaction}, as well as priority queueing games studied by \cite{huberman2021monopoly} and \cite{li2022analyzing}.

The paper is structured as follows: Section~\ref{sec:Model} presents our model for block generation delays. Section~\ref{sec:Gap Game} analyzes miners' optimal strategies and Nash equilibrium existence. Section~\ref{sec:DAA} examines the existence of stable limits in the long run. Sections~\ref{sec:Simulation} and \ref{sec:Case Study} illustrate findings through a two-player game and Bitcoin network analysis. Section~\ref{sec:Conclusion} discusses implications for blockchain design and future research directions. The proofs of lemmas and theorems are deferred to the appendix. 

\section{The Model}\label{sec:Model}
\subsection{Problem Description}
We consider a continuous-time setting with $N$ miners. Each miner controls his/her mining power from zero to the maximum, say $\gamma_n$ for miner $n$, to maximize expected profits. The block generation time is random with a finite mean, reflecting the probabilistic nature of the PoW protocol. For simplicity, we disregard propagation delays of blocks and transactions, assuming all miners observe the same queue of pending transactions. This allows us to focus on miners' strategic decisions.
We also assume that the queue of pending transactions is continuously sorted in the descending order of transaction fees. At most the top $K$ transactions can fit in a block, where $K$ is determined by the block size limit. When a block is created, the top $K$ transaction requests are removed from the queue. This assumption reflects the typical behavior of miners who prioritize transactions with higher fees to maximize the revenue.

The model incorporates several exogenous factors influencing miner behavior. The function $R(t)$ represents the expected revenue from successful block generation when time $t$ has elapsed since the last block generation. This includes both the block reward and the fees from the transactions in the block. We denote the cost per hash operation by $c$, which is also in the currency of the target network. The probability of successful mining per hash operation, denoted as $w$, is inversely proportional to the PoW difficulty. This winning rate determines the likelihood of block generation.

    \begin{figure}[t]
        \centerline{\includegraphics[width=0.8\textwidth]{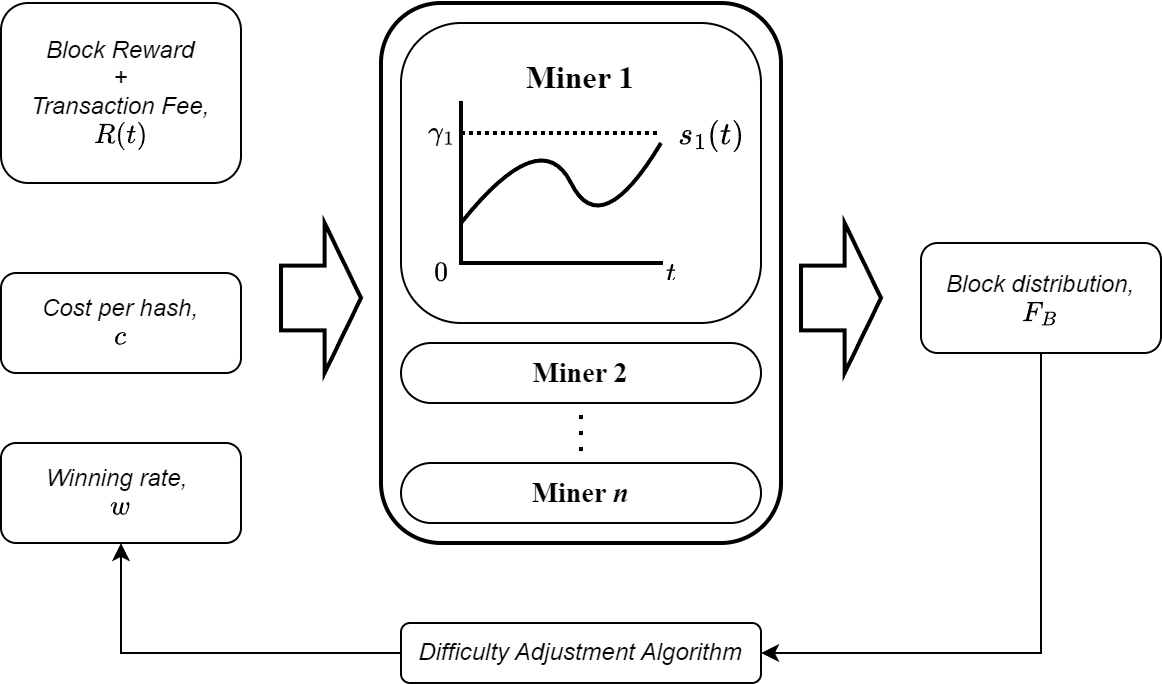}}
        \caption[Model]{The feedback process between model parameters, block distribution, and DAA.} \label{fig:2.2 model}
    \end{figure}

Miner $n$ establishes a hash supply function $s_n(t)$, representing his mining rig utilization at time $t$, bounded by $\gamma_n$. The strategic decisions of all miners collectively determine the probabilistic characteristics of new block generation. The system's DAA then adjusts $w$ to maintain a constant expected block generation time $\mu$ (e.g., 10 minutes for Bitcoin). Figure~\ref{fig:2.2 model} illustrates this feedback process in a single diagram. We define the total mining power function $s(t) = \sum_{n=1}^N s_n(t)$, total capacity $\gamma = \sum_{n=1}^N \gamma_n$, and an auxiliary function $s_{-n}(t) = s(t) - s_n(t)$ for each miner $n$.

\begin{ass}\label{assumption:liminf}
For each miner $n$, the long run limit of $s_{-n}(\cdot)$ is positive, or $\liminf_t s_{-n}(t)>0$. 
\end{ass}

This assumption reflects the belief that at least some miners will eventually be operating due to accumulating transaction fees. It also implies that a new block is generated in finite time. This belief is reasonable because if mining ceases for an extended period, the accumulation of fees would incentivize individual miners to resume operations. 
The potential for high rewards would likely attract opportunistic miners, ensuring some level of continuous mining activity in the long run.
\subsection{Block Distribution and Revenue Function}
The block generation time $B$ follows a distribution that is locally a geometric distribution because of the nature of hash operations in mining. The likelihood of block generation or hazard rate function at $t$ can be written as $f_B(t)/\bar F_B(t) = w \cdot s(t)$ where $f_B$ is the probability density function (PDF) and $\bar F_B$ is the complementary cumulative distribution function (CCDF) of the block generation time $B$. This likelihood is proportional to the winning rate $w$ and the total hash power $s(t)$ at the time. Then, clearly, 
\begin{equation}\label{eq:F_B}
\Bar{F}_B(t)  =     \bbP \left( {\rm Exp}(1) > w\int_0^t s(u)\rd u\right) = \exp\left(-w\int_0^t s(u)\rd u\right).
\end{equation}
Here ${\rm Exp}(1)$ is an exponential distribution with mean 1. When a new block is generated at time $t$ (the elapsed time since the last block generation), the successful miner receives the block reward $BR$ and the transaction fees. As noted earlier, those fees are the top $K$ fees attached to transactions and we write $b_{(i)}(t)$ for the $i$-th largest transaction fee in the queue at time $t$. This means the expected revenue from successful mining at time $t$ is given by $R(t) = BR +\bbE \left[\sum_{i=1}^{K}b_{(i)}(t)\right]$.



We model transaction arrivals as a Poisson process with i.i.d. fees. The incremental expected revenue is determined by new transaction arrivals with fees greater than the current $K$-th largest fee:
\begin{eqnarray*}
R(t+h)-R(t) &=& \bbE\left[ \sum_{i=1}^K b_{(i)}(t+h) - \sum_{i=1}^K b_{(i)}(t) \right] = \lambda h \bbE\left[\left(b-b_{(K)}(t)\right)^+ \right] + o(h). 
\end{eqnarray*}
Here $b$ is the random transaction fee and $\lambda h$ is the probability of a transaction arrival for a small time $h$. The incremental expected revenue is $\left(b-b_{(K)}(t)\right)^+ $ because for an arriving transaction to increase $R$, it must have a fee larger than $b_{(K)}(t)$. From the assumption of i.i.d. arrivals and the fixed block size $K$, we can derive the following property about the $R(t)$ which is imposed as an assumption. 

\begin{ass}\label{assumption:revenue}
The expected revenue $R(t)$ conditional on $B = t$ is twice differentiable, strictly increasing, and concave. In addition, this conditional expected revenue per hash operation $w R(t)$ exceeds per hash operating cost $c$ in the limit. 
\end{ass}
Note that $R(t)$ is increasing because $\bbE\left[\left(b-b_{(K)}(t)\right)^+ \right]>0$. It is concave because $b_{(i)}(t)$ is non-decreasing in $t$ for any transaction sequence and any $i$, leading to non-increasing incremental expected revenue over time.
\subsection{The Mining Game}
The unconditional expected operating profit of miner $n$ is given by 
\begin{eqnarray*}
J_n &=& \int_0^\infty \bbE[ J_n | B = t] f_B(t) \rd t \\
&=& \int_0^\infty \left\{ \frac{s_n(t)}{s(t)} R(t)  - c \int_0^t s_n(u) {\rm d}u \right\} f_B(t) \rd t  \\
&=& \int_0^\infty (w \; R(t) - c) s_n(t) \bar F_B(t) \rd t. 
\end{eqnarray*}
Since $R(t)$ is an increasing function, $\lim_{t \rightarrow \infty} (w R(t) -c )$ exists. If this value is not positive, then it is optimal for miners to shut down, i.e. $s_n(t) = 0$ for any $t$ and any $n$. Thanks to Assumption~\ref{assumption:revenue}, this limit is strictly positive, i.e. $\lim_t R(t) > c/ w$ and we avoid triviality. Next, we rewrite the problem of optimizing $J_n$ as an optimal control problem by introducing a new state variable $x(t) = \int_0^t s(u) \rd u = \int_0^t s_n(u) \rd u+\int_0^t s_{-n}(u) \rd u$ and the control variable $s_n(t)$. Here $x(t)$ represents the number of hash operations attempted up to time $t$. Using \eqref{eq:F_B}, the problem is expressed as
\begin{equation}\label{eq:target}
\begin{split}
\max_{s_n} \quad & J_n = \int_0^\infty \left(w R(t)-c\right)s_n(t)e^{-w x(t)}\rd t  \\
{\rm s.t.} \quad & x'(t) = s(t), \\
& x(0)  = 0, \\
& s_n(t)  \in [0,\gamma_n ].  
\end{split}
\end{equation}

We begin with no particular constraint on the form of $s_n(\cdot)$ except measurability, although we focus on more specific and realistic functional forms in later sections. The state variable is assumed to be absolutely continuous. Note that $s(t)=\sum_{i=1}^N s_n(t)$, thus the behavior of each miner affects the distribution of block generation, making \eqref{eq:target} a differential game. The only observable information for a miner is whether a block is mined or not. However, miners can infer changes in the behavior of other miners by observing changes in the block distribution ex post. Because of this limited real-time information, we consider an open-loop Nash equilibrium. 
Lastly, our model focuses on the dynamics of dominant mining pools rather than individual miners who have little influence on the block distribution. For this reason, we fix the number of miners $N$. This is also because the emergence of a significantly large mining pool has been rare, e.g. just a handful of influential mining pools in the history of the Bitcoin network.  

\begin{lem}\label{lem:existence}
There exists an optimal control for \eqref{eq:target} for any given $s_{-n}(t)$.
\end{lem}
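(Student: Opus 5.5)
Fix a measurable $s_{-n}:[0,\infty)\to[0,\infty)$ satisfying Assumption~\ref{assumption:liminf}. I plan to prove existence of a maximizer directly, by compactness and continuity. The key observation is that once $s_{-n}$ is fixed, $J_n$ is an explicit functional of the control alone. Writing $X_{-n}(t)=\int_0^t s_{-n}(u)\rd u$ and $y(t)=\int_0^t s_n(u)\rd u$, we have $x=y+X_{-n}$ and
\begin{equation*}
J_n(s_n)=\int_0^\infty \left(wR(t)-c\right) e^{-wX_{-n}(t)}\, s_n(t)\,e^{-w y(t)}\rd t .
\end{equation*}
The admissible set $\mathcal{U}=\{s_n \text{ measurable},\ 0\le s_n\le\gamma_n\}$ is convex and weak-$*$ compact in $L^\infty(0,\infty)$. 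It is sequentially compact on each bounded interval $[0,T]$ in the weak $L^2$ topology, and a diagonal argument gives sequential compactness on $[0,\infty)$. The proof then reduces to two facts: $\sup_{\mathcal U}J_n<\infty$, and $J_n$ is sequentially upper semicontinuous (in fact continuous) along weak-$*$ convergent sequences.

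\textbf{Step 1 (finiteness and uniform tail control).} Since $R$ is increasing with finite limit $R(\infty)$, the weight $|wR(t)-c|$ is bounded by some $M$. Assumption~\ref{assumption:liminf} gives $\delta>0$ and $t_0$ such that $s_{-n}(t)\ge\delta$ for $t\ge t_0$. Hence $e^{-wX_{-n}(t)}\le C e^{-w\delta t}$, and
\[
\int_T^\infty |wR-c|\,e^{-wX_{-n}}\,s_n e^{-wy}\rd t \le M\gamma_n C\int_T^\infty e^{-w\delta t}\rd t \to 0
\]
uniformly over $\mathcal U$. This gives finiteness of $\sup J_n$ and reduces continuity to continuity on $[0,T]$. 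The point of the assumption is precisely to prevent a miner from profiting off an infinitely long horizon.

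\textbf{Step 2 (continuity on $[0,T]$; main obstacle).} The difficulty is that the integrand is not linear in the control: it involves the product $s_n(t)e^{-wy(t)}$. I remove this with the identity $s_n e^{-wy}=-\frac{1}{w}\frac{\rd}{\rd t}e^{-wy}$. Setting $g(t)=(wR(t)-c)e^{-wX_{-n}(t)}$, which is absolutely continuous and bounded, integration by parts gives
\[
\int_0^T g\,s_n e^{-wy}\rd t=\frac1w\Big[g(0)-g(T)e^{-wy(T)}\Big]+\frac1w\int_0^T g'(t)e^{-wy(t)}\rd t .
\]
Here $g'=wR'e^{-wX_{-n}}-w s_{-n}g$ is integrable on $[0,T]$, since $R$ is $C^2$ and $s_{-n}$ is locally integrable (I assume local integrability; it holds in particular if $s_{-n}$ is bounded by $\gamma-\gamma_n$).

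If $s_n^k\to s_n$ weak-$*$, then $y^k(t)=\int_0^t s_n^k\to y(t)$ pointwise. Moreover the $y^k$ are equi-Lipschitz with constant $\gamma_n$, so by Arzel\`a--Ascoli the convergence is uniform on $[0,T]$. It follows that $e^{-wy^k}\to e^{-wy}$ uniformly, and the right-hand side converges by dominated convergence. So the truncated functional is continuous, and with Step 1 $J_n$ is continuous on $\mathcal U$.

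\textbf{Step 3 (conclusion).} Take a maximizing sequence $s_n^k\in\mathcal U$. By compactness, pass to a weak-$*$ convergent subsequence with limit $s_n^*\in\mathcal U$, where $\mathcal U$ is closed since the box constraints are preserved under weak-$*$ limits. Continuity then gives $J_n(s_n^*)=\sup_{\mathcal U} J_n$, and $x^*=y^*+X_{-n}$ is absolutely continuous as required, so $s_n^*$ is an optimal control for \eqref{eq:target}.

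Alternatively, one could cite a Filippov--Cesari type existence theorem for infinite-horizon problems. That theorem's velocity-set convexity condition holds because the dynamics and integrand are affine in $s_n$ for fixed state. The direct argument above avoids having to check its remaining hypotheses.
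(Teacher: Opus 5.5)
Apart from one fixable slip (below), your argument is correct, and it takes a genuinely different route from the paper. The paper does not argue directly. It invokes the infinite-horizon existence theorem of Dmitruk and Kuz'kina and checks that theorem's hypotheses A1--A8. The substantive checks are a Filippov-type growth condition (A5) and uniform convergence to zero of the ``functional pieces'' $\int_{T_1}^{T_2}$ over all admissible controls (A8). The paper gets A8 from Assumption~\ref{assumption:liminf}, essentially as in your Step~1.

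You instead run the direct method by hand. You use weak-$*$ sequential compactness of $\{0\le s_n\le\gamma_n\}$ in $L^\infty$, which is legitimate since $L^1(0,\infty)$ is separable. You then prove sequential continuity of $J_n$ through the identity $s_ne^{-wy}=-w^{-1}(e^{-wy})'$. This integration by parts removes the control from the integrand and leaves a functional of the state $y$ alone, and Arzel\`a--Ascoli then upgrades weak-$*$ convergence of controls to locally uniform convergence of states. Your route exploits the special structure: a one-dimensional state, with dynamics and integrand affine in $s_n$ once $s_{-n}$ is fixed. It is self-contained and gives continuity of $J_n$, not just existence of a maximizer. The paper's route is shorter on the page and would cover more general dynamics. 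However, it declares most of the eight hypotheses ``trivially satisfied'' and relies entirely on the external theorem.

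The step that needs repair is in Step~1, where you assert that $R$ has a finite limit, so that $|wR(t)-c|\le M$. Assumption~\ref{assumption:revenue} does not give this. It fails for the affine revenue $R(t)=\alpha t+\beta$ that underlies Corollary~\ref{cor:homogeneous}, the stability analysis, and the Bitcoin application. The fix is the one the paper uses in its A8 check. By monotonicity and concavity, $R(0)\le R(t)\le R(0)+R'(0)t$, so $|wR(t)-c|\le |wR(0)-c|+wR'(0)t$. Your tail bound then becomes $\gamma_n e^{w\delta t_0}\int_T^\infty\bigl(|wR(0)-c|+wR'(0)t\bigr)e^{-w\delta t}\,\rd t$, which still tends to $0$ uniformly over $\mathcal U$. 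It also still gives $\sup_{\mathcal U} J_n<\infty$. Step~2 only needs boundedness of $g$ on $[0,T]$, so it is unaffected, and Steps~2 and~3 then go through unchanged.
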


Recall that Assumptions~\ref{assumption:liminf} and \ref{assumption:revenue} apply throughout the paper. The result of \cite{dmitruk2005existence} is invoked for a proof of this lemma which can be found in the appendix. In the section that follows, we restrict our attention to a certain type of mining power functions which are plausible and implementable in practice. Equilibrium strategies of miners in such a strategy space help us better understand the complex dynamics of blockchain systems.  

\section{Strategic Delay in Block Generation} \label{sec:Gap Game}
In this section, we study the properties of optimal solutions to \eqref{eq:target} and extend our analysis to a game setting where miners choose their own controls $s_n$ for exogenously given model features such as the conditional revenue function $R(t)$, cost rate $c$, and winning rate $w$. The operating cost and winning probability depend on these rates multiplied by the input computing resources. This game involves strategic interactions where miners aim to maximize their profits $J_n$ in \eqref{eq:target}, taking into account the actions of other miners. The strategic nature of this game arises from the fact that a miner's success in mining a block depends not only on their own mining power but also on the total mining power in the network.

The first part of this section characterizes the optimality conditions based on classical deterministic control theory and provides insights into the strategies miners could use to maximize their profits. The second part examines the existence and characteristics of Nash equilibrium in the model, focusing on how the interplay between individual decisions and collective outcomes shapes the dynamics of the mining process. Lastly, all the strategic interactions are clearly demonstrated in the setting of homogeneous miners with affine $R(t)$. 

\subsection{Optimal Control of Miner}
Let us denote the optimal control of miner $n$ by $s_n^*$ and the corresponding state by $x^*$ to distinguish it from the uncontrolled state $x$. The Hamiltonian $H_n$ of \eqref{eq:target}, the adjoint variable $\psi$, the switching function $\sigma_n$, and the auxiliary function $\xi_n$ are defined as follows: 
\begin{equation*}
\begin{aligned}
H_n(x,s_n, \psi, t) &= s_n(t) (w R(t) - c)e^{-w x(t)} + \psi(t) s(t), \\
\psi'(t) &= w s_n^*(t) (w R(t) - c) e^{-w x^*(t)}, \\
\sigma_n(t) &:= (w R(t) - c) e^{-w x^*(t)} + \psi(t), \\
\xi_n(t) &:= R'(t) - (w R(t) - c) s_{-n}(t).
\end{aligned}
\end{equation*}
The maximum principle states that $H_n$ is maximized at $x^*$, $s_n^*$ and it holds that 
$H_n(x^*, s_n^*, \psi, t) \geq H_n(x, s_n, \psi, t)$ for any $x(t)$ and $s_n(t)$. By the maximum principle, $\psi'(t) = - \partial H_n / \partial x$ must be satisfied for an optimal $H_n$. This condition leads to the second equation for $\psi'(t)$. The transversality condition $\lim_{t \rightarrow \infty} \psi(t) = 0$ allows us to express $\psi(t)$ as an integral:
\begin{equation*}
\psi(t) = - \int_t^\infty w s_n^*(u) (wR(u) - c) e^{-w x^*(u)} \rd u.
\end{equation*}

Now we observe that $H_n$ is linear in $s_n$, so that we re-write it as $H_n = s_n(t) \sigma_n(t) + s_{-n}(t) \psi(t)$. This linearity implies that the $s_n^*$ maximizing $H_n$ is determined by the sign of $\sigma_n$. Consequently, the optimal control $s_n^*(t)$ is of bang-bang type; i.e., 
$s_n^*(t) = \gamma_n$ if $\sigma_n(t) > 0$ and 0 if $\sigma_n(t) < 0$. 
It is important to note that the maximum principle does not apply when $\sigma_n(t) = 0$. This situation, known as singular control, requires a separate analysis. The bang-bang nature of the control outside of singular intervals suggests that miners will either operate at full capacity or not at all, depending on the sign of the switching function. The following lemma gives us a hint about the optimal behavior of the miner for intervals where the sign of $\sigma_n$ is known.

\begin{lem}
    \label{lem:long_term_s}
        For each miner, an optimal control $s_n^*$ is zero almost everywhere on  $\{t | w R(t)<c\}$, but we have $s_n^*(t)=\gamma_n$ for all sufficiently large $t$ values. 
\end{lem}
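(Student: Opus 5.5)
Both parts will be read off the sign of the switching function $\sigma_n(t)=(wR(t)-c)e^{-wx^*(t)}+\psi(t)$. The key observation is that $\psi(t)\le 0$ always. Indeed, by Lemma~\ref{lem:existence} an optimal control exists. We may assume $s_n^*$ is never active on a set of positive measure where $wR<c$, since otherwise it would be improved by a direct argument; in fact we argue this directly below. For the first claim we do not need the maximum principle. Because $R$ is strictly increasing, the set $\{t\mid wR(t)<c\}$ is an interval $[0,t_0)$.

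Suppose $s_n^*$ is positive on a subset $A\subset[0,t_0)$ of positive measure. Define $\tilde s_n=s_n^*\mathbf 1_{A^c}$. This lowers $x$ on $[0,\infty)$ by a nonnegative, nondecreasing amount $\delta(t)=\int_0^t s_n^*\mathbf 1_A$, which is constant $\delta_0>0$ for $t\ge t_0$. We then compare payoffs. Over $A$ we drop integrand terms $(wR-c)s_n^*e^{-wx^*}<0$, which is a strict gain. Over $A^c\cap[0,t_0)$ the remaining terms are nonpositive and get multiplied by $e^{w\delta}\ge1$. This could hurt, so it is handled more carefully: we use instead the representation $J_n=\int_0^\infty \mathbb E[J_n\mid B=t]f_B\,dt$, or more simply we remove all activity on $[0,t_0)$. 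Then on $[t_0,\infty)$ every integrand is multiplied by $e^{w\delta_0}>1$. We must check that the tail $\int_{t_0}^\infty(wR-c)s_n^*e^{-wx^*}$ is nonnegative, which holds since $wR\ge c$ there. So the modified control is strictly better, contradicting optimality. Hence $s_n^*=0$ a.e. on $\{wR<c\}$.

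For the second claim, we use the adjoint formula $\psi(t)=-\int_t^\infty w s_n^*(u)(wR(u)-c)e^{-wx^*(u)}du$, valid for $t\ge t_0$. There the integrand is nonnegative, and we bound
\[
|\psi(t)|\le (wR(\infty)-c)\int_t^\infty w s_n^*(u)e^{-wx^*(u)}du .
\]
Since $x^{*\prime}=s_n^*+s_{-n}\ge s_n^*$, a comparison $\int_t^\infty w s_n^* e^{-wx^*}\le \int_t^\infty w x^{*\prime}e^{-wx^*}=e^{-wx^*(t)}$ alone gives only $\sigma_n\ge(wR(t)-c-(wR(\infty)-c))e^{-wx^*(t)}$, which is not enough. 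We therefore sharpen it using Assumption~\ref{assumption:liminf}. For large $u$ we have $s_{-n}(u)\ge\eta>0$. We then write
\[
\int_t^\infty w s_n^*e^{-wx^*}du=\int_t^\infty \frac{s_n^*}{s_n^*+s_{-n}}\,w x^{*\prime}e^{-wx^*}du\le \frac{\gamma_n}{\gamma_n+\eta}e^{-wx^*(t)}.
\]
This uses the fact that $a/(a+b)$ is increasing in $a$. Hence
\[
\sigma_n(t)\ge e^{-wx^*(t)}\Bigl[(wR(t)-c)-\tfrac{\gamma_n}{\gamma_n+\eta}(wR(\infty)-c)\Bigr].
\]
Because $wR(t)-c\uparrow wR(\infty)-c>0$ (Assumption~\ref{assumption:revenue}) and the factor $\gamma_n/(\gamma_n+\eta)<1$, the bracket is eventually strictly positive. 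Thus $\sigma_n>0$ for all large $t$, and the bang-bang rule gives $s_n^*(t)=\gamma_n$.

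The main obstacle is this last estimate. Bounding $\psi$ crudely does not separate $\sigma_n$ from zero, and the positive lower bound on $s_{-n}$ is exactly what yields the strict factor below one. A secondary subtlety is the first part, where shifting the state must be shown not to hurt the remaining terms. I will resolve this by deleting all activity before $t_0$ at once, rather than only on $A$.
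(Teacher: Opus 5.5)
Your first part is correct and is the paper's own argument. Deleting all of miner $n$'s activity on $[0,t_0)$ at once removes only nonpositive terms (strictly negative on $A$), and it multiplies the nonnegative tail by $e^{w\delta_0}\ge 1$.

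The gap is in the second part. Your bound $|\psi(t)|\le (wR(\infty)-c)\int_t^\infty w s_n^*e^{-wx^*}\,\rd u$ and the resulting bracket $(wR(t)-c)-\frac{\gamma_n}{\gamma_n+\eta}(wR(\infty)-c)$ treat $wR(\infty)-c$ as a finite number. Assumption~\ref{assumption:revenue} only asks that $wR(t)$ exceed $c$ in the limit, so $R$ may be unbounded. This is not an exotic case: the affine $R(t)=\alpha t+\beta$ behind Corollaries~\ref{cor:homogeneous}, \ref{cor:homogeneous_w} and \ref{cor:DAA_conv} has $wR(\infty)-c=\infty$. For it your bound is vacuous and the bracket is $-\infty$. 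The paper's proof distinguishes $\lim_u R'(u)>0$ from $\lim_u R'(u)=0$ precisely to cover both situations. As written, you have proved the second claim only for bounded $R$.

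The missing step is to shift the growth of $R$ onto $R'$ by integrating by parts, rather than bounding $wR-c$ by its supremum. Set $\theta=\gamma_n/(\gamma_n+\eta)$ and take $t$ large enough that $wR\ge c$ and $s_{-n}\ge\eta$ on $[t,\infty)$. Your share bound then gives
\[
\int_t^\infty w s_n^*(wR-c)e^{-wx^*}\,\rd u\le\theta\int_t^\infty (wR(u)-c)\,w x^{*\prime}(u)e^{-wx^*(u)}\,\rd u=\theta\Bigl[(wR(t)-c)e^{-wx^*(t)}+\int_t^\infty wR'(u)e^{-wx^*(u)}\,\rd u\Bigr].
\]
The boundary term at infinity vanishes because concave $R$ grows at most linearly while $x^*(u)-x^*(t)\ge\eta(u-t)$. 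Since $R'$ is non-increasing, the last integral is at most $R'(t)e^{-wx^*(t)}/\eta$. Hence $\sigma_n(t)\ge e^{-wx^*(t)}\bigl[(1-\theta)(wR(t)-c)-\theta R'(t)/\eta\bigr]$. This is eventually positive in both cases: $R'(t)\to 0$ when $R$ is bounded, and otherwise $wR(t)-c\to\infty$ while $R'\le R'(0)$.

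For comparison, the paper performs this same integration by parts exactly. It obtains $\sigma_n(t)=-w\int_t^\infty \xi_n(u)e^{-wx^*(u)}\,\rd u$ with $\xi_n=R'-(wR-c)s_{-n}$, and then only needs $\xi_n$ to be eventually negative. Your share estimate $s_n^*/(s_n^*+s_{-n})\le\theta$ is a neat device that makes the bounded case a one-liner, but it cannot avoid this step when $R$ is unbounded.
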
   

As a profit-seeking agent, miners turn off the mining rigs when the expected revenues are clearly smaller than the costs. This may happen at the beginning of a new block generation. However, they eventually turn on the machines to their maximum capacities as it becomes profitable due to the accumulated fees in the queue. To find optimal control in-between, one must analyze the sign of the switching function and consider the singular control case. 

\begin{lem}
\label{lem:singular_interval}
Suppose there exists an interval on which $\sigma_n(t) = 0$. Then the optimal utility of miner $n$ is not affected by any change of $s_n^*$ on that interval.
\end{lem}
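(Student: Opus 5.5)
Let $I=[a,b]$ be an interval on which $\sigma_n\equiv 0$. The plan is to show that the utility $J_n$ depends on $s_n$ restricted to $I$ only through the total hash count $x(b)-x(a)$. Then to show that, when $\sigma_n=0$ on $I$, it does not even depend on that quantity to first order. The cleanest route is to compute $J_n$ directly rather than to perturb. Split $J_n$ into three pieces: the contribution from $[0,a)$, from $I$, and from $(b,\infty)$.

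The $[0,a)$ piece is untouched by a change of $s_n$ on $I$, since $s_{-n}$ is fixed in the open-loop setting. For the tail piece, write $x(t)=x(b)+\int_b^t s(u)\,\rd u$ for $t>b$. It then equals $e^{-w(x(b)-x^*(b))}$ times its value along the optimal path. By the transversality expression for $\psi$, that value is $-\psi(b)\,e^{w x^*(b)}\cdot e^{-w x^*(b)}/w$, or more simply $-\psi(b)/w$ after normalisation. Hence the tail contributes $-\frac{1}{w}\psi(b)\,e^{-w(x(b)-x^*(b))}$.

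On $I$, use $\sigma_n=0$, i.e.\ $(wR(t)-c)e^{-wx^*(t)}=-\psi(t)$. Differentiating this identity gives $\frac{d}{dt}\big[(wR-c)e^{-wx^*}\big]=-\psi'(t)$. The left side is $wR'e^{-wx^*}-w(wR-c)s^*e^{-wx^*}$, and the right side is $-ws_n^*(wR-c)e^{-wx^*}$. Comparing the two shows that $\xi_n(t)=R'(t)-(wR(t)-c)s_{-n}(t)=0$ on $I$. This is where the auxiliary function $\xi_n$ enters. Absolute continuity of $\psi$ and the $C^2$ property of $R$ from Assumption~\ref{assumption:revenue} justify the differentiation almost everywhere.

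With $\xi_n=0$ on $I$, I would compute the $I$-contribution for an arbitrary admissible $s_n$ on $I$, with $y(t)=e^{-wx(t)}$. We have
\begin{equation*}
(wR-c)s_n y=-(wR-c)\tfrac{1}{w}y' -(wR-c)s_{-n}y=-\tfrac{1}{w}\big((wR-c)y\big)'+R'y-(wR-c)s_{-n}y,
\end{equation*}
and the last two terms equal $\xi_n y=0$. So the $I$-integral equals $\frac{1}{w}\big[(wR(a)-c)y(a)-(wR(b)-c)y(b)\big]$, a boundary term. Since $\sigma_n(b)=0$, we have $(wR(b)-c)=-\psi(b)e^{wx^*(b)}$. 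Substituting this shows that the boundary term at $b$ exactly cancels the tail contribution $-\frac1w\psi(b)e^{-w(x(b)-x^*(b))}$. The total is therefore independent of $s_n$ on $I$, which proves the lemma.

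The main obstacle is this bookkeeping: getting the tail normalisation right in terms of $\psi(b)$ and $x^*(b)$. One must also check that a perturbation on $I$ changing $x(b)$ still leaves the tail finite, which holds by Assumption~\ref{assumption:liminf}.
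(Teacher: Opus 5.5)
Your proof is correct, and it takes a different route from the paper's.

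\textbf{How the paper argues.} The paper proves the lemma in two stages.
\begin{itemize}
\item For fixed endpoints, it applies Green's theorem to the region between two trajectories $x_n(\cdot)$. Since $\xi_n=0$ on the interval, $J_{(t_1,t_2)}$ is path-independent.
\item For a moving endpoint $x_n(t_2)$, it evaluates controls that concentrate an extra amount $\Delta$ of hashing in $[t_2-\varepsilon,t_2]$. It sandwiches $J_{(t_1,t_2)}$ between upper and lower bounds and lets $\varepsilon\to 0$. It then checks that the sum with the rescaled tail $\frac1w(wR(t_2)-c)e^{-wx(t_2)}$ does not depend on $\Delta$.
\end{itemize}

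\textbf{How you argue.} You write the integrand as an exact derivative plus a term that vanishes: $(wR-c)s_n y=-\frac1w\big((wR-c)y\big)'+\xi_n y$. This handles both issues at once. For every admissible $s_n$, the contribution of $I$ becomes a boundary term. The boundary term at $b$ then cancels the tail exactly. That cancellation uses the same facts the paper uses: $\sigma_n(b)=0$ and $\psi(b)=-wJ^*_{[b,\infty)}$.

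\textbf{What each route buys.} Yours gives an exact closed form, $J_n=J_{[0,a)}+\frac1w(wR(a)-c)e^{-wx(a)}$, for all perturbations on $I$. It avoids Green's theorem, whose application to trajectories that may cross would need some care. It also avoids the $\varepsilon$-limit. The paper's route mainly adds a geometric picture: a closed $1$-form on the $(t,x_n)$-plane.

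\textbf{Two small remarks.}
\begin{itemize}
\item Your opening plan speaks of independence ``to first order.'' The computation you actually carry out gives exact independence.
\item You do not need Assumption~\ref{assumption:liminf} to keep the perturbed tail finite. It is the positive multiple $e^{-w(x(b)-x^*(b))}$ of the optimal tail, which is already finite.
\end{itemize}
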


This result implies that the bang-bang control is among optimal controls when singular control occurs on intervals, i.e. optimal $s_n^*(t)$ takes on its minimum or maximum values $0$ and $\gamma_n$. As in the next lemma, one can further show the optimality of specific bang-bang controls for any time $t$. Therefore, miner $n$ only needs to consider the optimal timing of turning on his entire mining rig. 

\begin{lem}
\label{lem:increasing_s}
Suppose that $s_{-n}$ is non-decreasing. Then, there exists an optimal control $s_n^*$ that is $\gamma_n {\bf 1}_{{t \geq t^*}}$ for some $t^*$.
\end{lem}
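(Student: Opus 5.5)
The plan is to show that, when $s_{-n}$ is non-decreasing, the switching function $\sigma_n$ changes sign at most once, from negative to positive. The sign structure of $\sigma_n$ is governed by $\xi_n$. On any interval where $s_n^*\equiv\gamma_n$ or $s_n^*\equiv 0$, differentiate $\sigma_n$ using the adjoint equation:
\begin{equation*}
\sigma_n'(t) = \bigl(wR'(t) - w(wR(t)-c)s^*(t)\bigr)e^{-wx^*(t)} + w s_n^*(t)(wR(t)-c)e^{-wx^*(t)} = w\,\xi_n(t)\,e^{-wx^*(t)}.
\end{equation*}
The $s_n^*$ terms cancel, so $\sigma_n' = w\xi_n e^{-wx^*}$ almost everywhere, and $\sigma_n$ is absolutely continuous. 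The sign of $\sigma_n'$ is therefore the sign of $\xi_n(t)=R'(t)-(wR(t)-c)s_{-n}(t)$.

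Next I would show that $\xi_n$ crosses zero at most once, from positive to negative, on the region where $wR-c\ge 0$. There $R'$ is non-increasing (concavity, Assumption~\ref{assumption:revenue}), while $(wR-c)s_{-n}$ is a product of non-negative non-decreasing functions, hence non-decreasing. So $\xi_n$ is non-increasing on $\{wR\ge c\}$. Since $R$ is increasing, this region is a half-line $[t_0,\infty)$. Before $t_0$, Lemma~\ref{lem:long_term_s} already gives $s_n^*=0$ a.e.

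It follows that $\sigma_n$ is unimodal on $[t_0,\infty)$: non-decreasing, then non-increasing. Lemma~\ref{lem:long_term_s} gives $s_n^*=\gamma_n$ for large $t$. Moreover $\sigma_n(t)\to 0$, because $\psi\to0$ by transversality and $e^{-wx^*}\to0$ by Assumption~\ref{assumption:liminf}. Together these show that on the eventual non-increasing branch $\sigma_n\ge 0$, since a non-increasing function tending to $0$ is non-negative.

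Hence the set $\{\sigma_n>0\}$ is an interval $(t^*,\infty)$, up to a possible plateau where $\sigma_n=0$. The set $\{\sigma_n<0\}$ lies to its left, since $\sigma_n$ is non-decreasing there and negative. Any set where $\sigma_n=0$ on an interval is handled by Lemma~\ref{lem:singular_interval}: redefine $s_n^*$ there without changing utility. Sending it to $0$ before $t^*$, choosing $t^*$ as the left end of the positive region, yields the form $\gamma_n\mathbf{1}_{t\ge t^*}$. Isolated zeros are null sets and irrelevant.

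The main obstacle is rigor around the singular set and the derivative formula. $\sigma_n$ depends on $x^*$, which itself depends on $s_n^*$. Modifying $s_n^*$ on a singular interval changes $x^*$ and hence $\sigma_n$ afterward. I would handle this by noting that Lemma~\ref{lem:singular_interval} preserves optimality. The modified control is then itself optimal and satisfies the maximum principle, so the sign analysis can be repeated for it.

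A second subtlety is that unimodality only gives "non-decreasing then non-increasing," not strict monotonicity. I would therefore argue directly: if $\sigma_n(t_1)>0$ and $t_2>t_1$, then $\sigma_n(t_2)\ge0$, either by monotonicity or because the decreasing branch stays non-negative. This is exactly the one-switch property needed.
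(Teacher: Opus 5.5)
Your proposal is correct and follows essentially the same route as the paper. It computes $\sigma_n' = w\,\xi_n e^{-wx^*}$, uses concavity of $R$ and monotonicity of $s_{-n}$ to get a single sign change of $\xi_n$, and concludes that $\sigma_n$ is unimodal, eventually positive, and crosses zero only once from below. Restricting the monotonicity of $\xi_n$ to $\{wR\ge c\}$, with Lemma~\ref{lem:long_term_s} covering the earlier region, is actually a bit more careful than the paper's global claim that $\xi_n$ is non-increasing, which can fail where $wR<c$ if $s_{-n}$ jumps up.
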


Partially operating or decreasing mining capacity is therefore sub-optimal. If such a strategy were optimal, it would mean less valuable future rewards than the current expected revenue. However, accumulated future rewards are actually greater given unsuccessful mining. For the same reason, the underlying assumption of non-decreasing $s_{-n}$ is plausible; e.g., if it is optimal to turn on machines at time $t$, then miners would opt to provide non-decreasing computing power at $t+\varepsilon$ given unsuccessful mining up to time $t$. The result in this subsection shows the optimal response of miner $n$, given the hash supply of other miners. The next subsection deals with optimal responses of multiple miners and studies the existence and characterization of their equilibrium behaviors.

\subsection{Open-loop Nash Equilibrium}\label{subsec:nash}

In a setting where $N$ miners compete for block rewards and accumulated fees, each miner's optimal control depends on the total hash power $s(t)$, which is the collective outcome of miners' strategic decisions. Each hash supply $s_n(\cdot)$ is determined by the physical manipulation of mining machines. Therefore, it is natural to consider this function as a piecewise constant function that cannot jump infinitely often within a finite interval.  This assumption aligns with the realistic limitations of mining hardware and helps streamline our analysis. The following lemma addresses simple mining strategies, where miners activate their machines to full capacity after a specific time.

\begin{lem}\label{lem:simple_s} 
Let us assume $\gamma_1 \leq \gamma_2 \leq\cdots \leq \gamma_N$. Further suppose that controls specified by $s_n(t) = \gamma_n {\bf 1}_{\{t \geq t_n^*\}}$ for each $n$ are optimal. Then, we have $t_1^* \leq t_2^* \leq \cdots \leq t_N^*$ and they are uniquely determined by $\sigma_n(t_n^*) = 0$ for each $n$. If there is no solution to $\sigma_n(t) = 0$, then $t_n^* = 0$. 
\end{lem}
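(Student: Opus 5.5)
Under the hypothesis, every miner plays a threshold strategy, so $s(t)=\sum_m \gamma_m{\bf 1}_{\{t\ge t_m^*\}}$ and each $s_{-n}$ is non-decreasing and piecewise constant. The plan is to work with the switching function $\sigma_n$ directly and derive a differential identity that connects it to $\xi_n$. On any interval where $s_n^*=0$, the adjoint $\psi$ is constant, because $\psi'=w s_n^*(wR-c)e^{-wx^*}$ vanishes there. Differentiating the definition of $\sigma_n$ on such an interval gives
\begin{equation*}
\sigma_n'(t) = w\left(R'(t) - (wR(t)-c)s(t)\right)e^{-wx^*(t)} = w\,\xi_n(t)e^{-wx^*(t)},
\end{equation*}
where the last equality uses $s=s_{-n}$ when $s_n=0$. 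On intervals where $s_n^*=\gamma_n$, the $\psi'$ term cancels the extra $-w\gamma_n(wR-c)e^{-wx^*}$ coming from $x^{*\prime}=s$. Hence the identity $\sigma_n'=w\xi_n e^{-wx^*}$ holds for almost every $t$, whatever the control. Since $R$ is concave and $s_{-n}$ is non-decreasing, $\xi_n$ is non-increasing wherever $wR>c$. So once $\xi_n$ becomes negative it stays negative, and $\sigma_n$ is unimodal: it first increases and then decreases. The transversality condition gives $\sigma_n(t)\to 0$ as $t\to\infty$.

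The next step characterizes $t_n^*$. By Lemma~\ref{lem:long_term_s}, $\sigma_n\ge 0$ for large $t$. Unimodality together with the limit $\sigma_n(\infty)=0$ means $\sigma_n$ is positive on some tail $(\tau,\infty)$ and negative before $\tau$, apart from a possible flat piece. Maximization of the Hamiltonian then forces $s_n^*=0$ before $\tau$ and $s_n^*=\gamma_n$ after it, so $t_n^*=\tau$. By continuity of $\sigma_n$ we get $\sigma_n(t_n^*)=0$ whenever $t_n^*>0$. If $\sigma_n$ has no zero on $(0,\infty)$, it must be positive throughout, and then $t_n^*=0$. For uniqueness, note that $\sigma_n$ is strictly monotone near its crossing point, because $\xi_n\neq 0$ there generically (using $R'>0$ and strict monotonicity of $R$). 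So the crossing is unique. Any flat piece where $\sigma_n=0$ on an interval is handled by Lemma~\ref{lem:singular_interval}, and selecting the bang-bang version pins down $t_n^*$.

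For the ordering, I would compare miners $n$ and $m$ with $\gamma_n\le\gamma_m$. The key observation is that $s_{-n}=s-s_n$, so $\xi_n-\xi_m=(wR-c)(s_n-s_m)$. I would argue by contradiction: suppose $t_m^*<t_n^*$. Then on $(t_m^*,t_n^*)$ we have $s_n=0<\gamma_m=s_m$, which makes $\xi_n>\xi_m$ there since $wR>c$ in this region. To convert this into a statement about the zeros of $\sigma$, I would write the sign condition at the switching time in integrated form. Using $\sigma_n(t_n^*)=0$ and the explicit $\psi$, the condition becomes $(wR(t_n^*)-c)=\int_{t_n^*}^\infty w\gamma_n(wR(u)-c)e^{-w(x^*(u)-x^*(t_n^*))}\rd u$, with the analogous equation for $m$. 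Comparing the two equations should give the contradiction: the right-hand side is monotone in $\gamma$ and in the switching time, while the left-hand side is increasing in $t$.

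I expect this comparison to be the main obstacle. The two equations involve the same aggregate $x^*$ but different capacities and different switching points, so the monotonicity has to be checked carefully. A cleaner alternative, which I would try first, is to use the equivalent first-order condition that $t_n^*$ maximizes $\int_{t}^\infty(wR-c)\gamma_n e^{-wx}\rd u$ with $x$ depending on $t$, and then show that the marginal benefit of delaying is larger for the miner with larger capacity. That yields $t_n^*\le t_m^*$ directly.
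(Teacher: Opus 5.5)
\textbf{There is a genuine gap: the ordering $t_1^*\le\cdots\le t_N^*$ is not proved.}

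Your first two paragraphs follow the paper's route and are essentially fine. That covers the identity $\sigma_n'=w\xi_n e^{-wx^*}$, the single sign change of $\xi_n$, the unimodality of $\sigma_n$, and hence $t_n^*$ being either the unique zero of $\sigma_n$ or $0$. One fix is needed there: replace ``$\xi_n\neq 0$ generically'' by the concrete fact that $\xi_n$ cannot vanish on an interval. On any interval where $s_{-n}\equiv k$, the function $R'-(wR-c)k$ equals $R'>0$ if $k=0$, and is strictly decreasing if $k>0$.

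The ordering is the substantive content of the lemma, and you leave it as a plan (``should give the contradiction'', ``I would try first''). The one concrete step you take also has the wrong sign. On $(t_m^*,t_n^*)$ you have $s_n=0$ and $s_m=\gamma_m$, so $\xi_n-\xi_m=(wR-c)(s_n-s_m)=-\gamma_m(wR-c)<0$. That is, $\xi_n<\xi_m$, not $\xi_n>\xi_m$: miner $n$'s competitors include $m$'s full capacity. With your sign, the derivative comparison points the wrong way and gives no contradiction. Your integrated form does not close as stated either. The equalities $\sigma_n(t_n^*)=0$ and $\sigma_m(t_m^*)=0$ hold at two different times, and comparing them would need a monotonicity in $t$ that you have not shown. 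The claim that ``the marginal benefit of delaying is larger for the larger miner'' is precisely what has to be proved.

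\textbf{The missing idea.} The state $x^*$ is common to all miners, so two switching functions differ only through their adjoints and can be compared at the same instant:
\[
\sigma_n(t)-\sigma_m(t)=-w\int_t^\infty \bigl(s_n^*(u)-s_m^*(u)\bigr)\bigl(wR(u)-c\bigr)e^{-wx^*(u)}\,\rd u .
\]

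The paper's version: take $\gamma_m\le\gamma_n$ and suppose $t_n^*<t_m^*$. Evaluate at $t_n^*$. The factor $\gamma_n-\gamma_m\mathbf{1}_{\{u\ge t_m^*\}}$ is nonnegative, and strictly positive on $[t_n^*,t_m^*)$. Hence $\sigma_m(t_n^*)>\sigma_n(t_n^*)=0$, which contradicts $\sigma_m<0$ before $t_m^*$.

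In your labeling ($\gamma_n\le\gamma_m$, supposing $t_m^*<t_n^*$), evaluate at $t_n^*$, where both miners are on. The display gives $\sigma_n(t_n^*)-\sigma_m(t_n^*)\ge 0$, so $\sigma_m(t_n^*)\le 0$. This contradicts $\sigma_m>0$ on $(t_m^*,\infty)$, which your own unimodality step supplies.

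Your $\xi$-comparison also closes once the sign is corrected. Then $(\sigma_n-\sigma_m)'\le 0$ on $(t_m^*,\infty)$ and $\sigma_n-\sigma_m\to 0$, which yields the same inequality $\sigma_m(t_n^*)\le 0$.
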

This lemma demonstrates that miners with higher capacities tend to start mining later, aligning with the intuition that larger miners can afford to wait for potentially higher rewards. It also provides a formal proof of the simulation-based findings of \cite{tsabary2018gap}.

Our main result on the strategic decisions of $N$ miners is presented in the following theorem. A Nash equilibrium is written as a collection of strategies $s_n^*$'s such that each miner $n$ does not deviate from $s_n^*$ for given $s_{-n}^*$. In this case, such a strategy is open-loop as each miner does not react to others' actions in real time as argued in the previous section. Under the mild assumption of piecewise constant controls of mining rigs, the hash supply functions are simply step functions because it is optimal to turn on the machines in the long run (Lemma~\ref{lem:long_term_s}). And it turns out that a set of controls in Lemma~\ref{lem:simple_s} makes a Nash equilibrium with specific optimal start times. 

\begin{thm}\label{thm:nash}
If there is a Nash equilibrium with step function $s_n^*$'s, then each $s_n^*$ is of form $\gamma_n {\bf 1}_{\{ t \geq t_n^*\}}$ for some $t_n^*$. Conversely, a set of controls specified as such is a Nash equilibrium as long as each $t_n^*$ is the unique solution to $\sigma_n(t_n^*) = 0$. 
\end{thm}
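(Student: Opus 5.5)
The plan has two directions.

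For the forward direction, I fix a Nash equilibrium in which every $s_n^*$ is a step function. The aim is to show each $s_n^*$ has the threshold form.

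First, by Lemma~\ref{lem:long_term_s}, each $s_n^*$ equals $\gamma_n$ for all large $t$ and vanishes a.e. where $wR(t)<c$. Being a step function, each $s_n^*$ has finitely many jumps on bounded intervals, so $s_{-n}^*$ is piecewise constant.

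Second, I show that $s_{-n}^*$ is non-decreasing for every $n$. I argue by contradiction. Let $\tau$ be the last time at which some miner $m$ switches from $\gamma_m$ down to a lower value; such a last time exists because all controls are eventually at full capacity and there are finitely many jumps. Just after $\tau$, every other miner's control is non-decreasing. So on $[\tau,\infty)$ the aggregate $s_{-m}^*$ is non-decreasing, and $\xi_m$ is non-increasing there, since $R'$ is non-increasing by concavity and $(wR-c)s_{-m}$ is non-decreasing once $wR-c>0$.

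Differentiating $\sigma_m$ with the adjoint equation gives
\[
\sigma_m'(t)=w\,e^{-wx^*(t)}\big(R'(t)-(wR(t)-c)s_{-m}^*(t)\big)=w\,e^{-wx^*(t)}\xi_m(t),
\]
because the $s_m^*$ terms cancel. Hence $\sigma_m$ can change sign at most once on $[\tau,\infty)$, from positive to negative. Since $\sigma_m\to 0$ and $\sigma_m>0$ eventually (Lemma~\ref{lem:long_term_s} forces $s_m^*=\gamma_m$, and the optimal control is bang-bang off singular sets), a downward switch at $\tau$ would require $\sigma_m<0$ right after $\tau$ and $\sigma_m\ge0$ right before. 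I will use the sign structure of $\xi_m$ to show this cannot happen, possibly after using Lemma~\ref{lem:singular_interval} to replace singular pieces by full operation. Iterating backward over the finitely many jumps then shows there is no downward switch at all, so every $s_{-n}^*$ is non-decreasing.

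Third, with $s_{-n}^*$ non-decreasing, Lemma~\ref{lem:increasing_s} combined with the single-crossing property of $\sigma_n$ shows that $s_n^*$ is zero before the unique zero of $\sigma_n$ and $\gamma_n$ after it. This gives the form $\gamma_n\mathbf 1_{\{t\ge t_n^*\}}$.

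For the converse, I take the profile $s_n^*=\gamma_n\mathbf 1_{\{t\ge t_n^*\}}$ with $\sigma_n(t_n^*)=0$ uniquely. Each $s_{-n}^*$ is then a sum of upward steps, hence non-decreasing, and Lemma~\ref{lem:increasing_s} says some threshold control $\gamma_n\mathbf 1_{\{t\ge t\}}$ is a best response. I reduce miner $n$'s problem to maximizing over $t$ the function $J_n(t)=\int_t^\infty (wR(u)-c)\gamma_n e^{-wx(u)}\,du$, where $x$ depends on $t$. Its derivative in $t$ is proportional to $-\sigma_n(t)$ evaluated along that control, and $\xi_n$ is eventually monotone. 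Together these show that the first-order condition $\sigma_n(t)=0$ picks out the maximizer. Uniqueness of the zero then excludes other stationary points, and the boundary case $t=0$ is handled as in Lemma~\ref{lem:simple_s}. Thus no unilateral deviation improves $J_n$, and the profile is a Nash equilibrium.

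The main obstacle is the second step: excluding downward switches in an arbitrary step-function equilibrium. Each miner's $\sigma$ depends on the others' controls, so the monotonicity argument must be organised by backward induction over the finitely many jump times. Across each jump I must track carefully that $\xi_m$ changes sign at most once. If this fails, I will instead compare utilities directly: I would show that shifting a miner's "off" interval later, so that the control becomes non-decreasing, weakly increases $J_m$, using that $R$ is increasing and the profile is otherwise fixed.
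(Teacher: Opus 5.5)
Your forward argument follows the paper's route: take the last turn-off time $\tau$, note that $s_{-m}^*$ and hence $\xi_m$ are monotone on $(\tau,\infty)$, get a contradiction from the sign of $\sigma_m$, then apply Lemma~\ref{lem:increasing_s}. Your converse, which differentiates the payoff in the threshold, is a fine and somewhat more explicit variant. But the decisive step is never carried out, and the one claim you make about it is backwards.

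It is $\xi_m$, and hence $\sigma_m'=we^{-wx^*}\xi_m$, that changes sign at most once from positive to negative on $(\tau,\infty)$. The function $\sigma_m$ itself increases and then decreases to its limit $0$. So it is strictly positive on the decreasing phase and can cross zero at most once, from \emph{negative to positive}.

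With the correct picture the contradiction is immediate, and no backward induction over jump times is needed:
\begin{itemize}
\item Miner $m$ operates just before $\tau$. So $wR>c$ on $(\tau,\infty)$ by Lemma~\ref{lem:long_term_s}, and $\sigma_m(\tau)\ge 0$ by the maximum principle and continuity.
\item Unimodality then gives $\sigma_m\ge 0$ on $[\tau,\infty)$.
\item Since $s_m^*<\gamma_m$ just after $\tau$, also $\sigma_m\le 0$ there. Hence $\sigma_m\equiv 0$, and so $\xi_m\equiv 0$, on an interval.
\item On a subinterval $s_{-m}^*\equiv k$, and $R'=(wR-c)k$ cannot hold on an interval. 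If $k>0$, the left side is non-increasing while the right side is strictly increasing. If $k=0$, it forces $R'\equiv 0$.
\end{itemize}
A contradiction at the \emph{last} downward switch already shows there is none at all.

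The second gap concerns singular or intermediate levels. You define $\tau$ only through switches away from $\gamma_m$. So a step-function equilibrium in which some miner later drops from an intermediate level (say $\gamma_m/2$) to $0$ is not covered, and your claim that all other controls are non-decreasing after $\tau$ then fails.

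Your proposed remedy, replacing singular pieces by full operation via Lemma~\ref{lem:singular_interval}, cannot characterize a \emph{given} equilibrium. Changing $s_m^*$ changes $s_{-n}^*$ for every $n\neq m$, so the modified profile need not be an equilibrium and says nothing about the original one. Your fallback fails for a similar reason: showing that a later off-interval \emph{weakly} improves $J_m$ yields no contradiction.

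What is needed is to \emph{exclude} singular arcs. The paper does this via Lemma~\ref{lem:constant_s}: $\sigma_n\equiv 0$ on an interval forces $\xi_n\equiv 0$, which is impossible for piecewise constant $s_{-n}^*$. Equivalently, let $\tau$ be the last downward jump of any size; the argument above then goes through unchanged.

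A smaller point on the converse. The stationary points of $t\mapsto J_n(t)$ are zeros of $g(t)$, the switching function computed along the threshold-$t$ control. This agrees with the profile's $\sigma_n$ only at $t=t_n^*$. So it is not the assumed uniqueness of the zero of $\sigma_n$ that excludes other stationary points. It is the identity $g'=w\xi_n e^{-wx_{-n}}+w\gamma_n g$, which shows that $g$ can cross zero only upward while $\xi_n>0$ and is positive once $\xi_n<0$.
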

This result provides important insights into the nature of Nash equilibrium. Firstly, previous findings of \cite{tsabary2018gap} are indeed an equilibrium behavior. Secondly, we have a practical tool for computing this equilibrium by solving $\sigma_n(t_n^*) = 0$ recursively. For instance, it is straightforward to verify that the condition $\sigma_N(t_N^*) = 0$ is equivalent to
$$
\int_{t_N^*}^\infty \left( R'(t) - (w R(t) - c) \sum_{i=1}^{N-1} \gamma_i \right) e^{- w \sum_{i=1}^N \gamma_i t} \rd t = 0.
$$
For given $t_{n+1}^*, \ldots, t_N^*$, the condition $\sigma_n(t_n^*) = 0$ can be utilized to infer $t_n^*$. This sequential approach simplifies the complex problem of finding its Nash equilibrium, making it computationally tractable. It also means that when a miner considers its Nash equilibrium, it only needs to consider miners that are larger than itself.

Recall that mining gap occurs, by definition, when computing powers in the blockchain system are partially utilized. Theorem~\ref{thm:nash} together with Lemma~\ref{lem:simple_s} imply that mining gap exists if and only if it is optimal for miner $N$ to delay operations. This is succinctly represented in the next theorem.

\begin{thm} \label{thm:gap_condition}
    Suppose that piecewise constant controls are adopted by all miners. The necessary and sufficient condition for the non-existence of a mining gap at equilibrium is given by 
    \begin{equation}\label{eq:gap_condition}
        R(0) \geq cw^{-1}+ \frac{\max_n \gamma_n}{\sum_{i=1}^N \gamma_i - \max_n \gamma_n}  \bbE \left[R(\bar B)-R(0)\right].
    \end{equation}
Here $\bar B$ follows the exponential distribution with rate $w \sum_{i=1}^N \gamma_i$. 
\end{thm}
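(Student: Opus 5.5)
By Theorem~\ref{thm:nash} and Lemma~\ref{lem:simple_s}, an equilibrium with piecewise constant controls has the form $s_n^*(t)=\gamma_n{\bf 1}_{\{t\ge t_n^*\}}$ with $t_1^*\le\cdots\le t_N^*$. A mining gap is absent exactly when all machines run from time $0$, which by the ordering is equivalent to $t_N^*=0$ for the largest miner (capacity $\gamma_N=\max_n\gamma_n$). The plan is therefore to turn ``$t_N^*=0$'' into an inequality on the switching function $\sigma_N$ at time $0$ and then evaluate that inequality explicitly.

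The first step is to express $\sigma_N$ when all smaller miners are already on. For $t\ge t_{N-1}^*$ we have $s_{-N}\equiv\gamma-\gamma_N$, and on $[t_N^*,\infty)$ also $s=\gamma$, so $x^*(t)=x^*(t_N^*)+\gamma(t-t_N^*)$. I would integrate $\psi$ by parts, using $\frac{d}{du}e^{-wx^*(u)}=-w s(u)e^{-wx^*(u)}$, to get
$$\sigma_N(t)=\int_t^\infty\bigl(R'(u)-(wR(u)-c)s_{-N}(u)\bigr)e^{-wx^*(u)}\rd u\Big/\ \text{(suitable positive factor)},$$
i.e.\ an integral of $\xi_N e^{-wx^*}$, matching the displayed identity after Theorem~\ref{thm:nash}. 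Write $\sigma_N(0)\propto\int_0^\infty\xi_N(u)e^{-w\gamma u}\rd u$ for the candidate profile in which everyone is on from $0$.

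The second step is the equivalence: $t_N^*=0$ is a best response (and hence no gap) if and only if $\sigma_N(t)\ge0$ for $t\ge0$ under that profile. By concavity of $R$ (Assumption~\ref{assumption:revenue}), $\xi_N$ is non-increasing, so $\xi_N$ changes sign at most once, from $+$ to $-$. Hence the tail integral $\int_t^\infty\xi_N e^{-w\gamma u}\rd u$ is minimized at $t=0$ or is nonnegative throughout once it is nonnegative at $0$; it ends negative-then-zero, so its sign on $[0,\infty)$ is governed by its value at $0$. Thus no gap $\iff \int_0^\infty\xi_N(u)e^{-w\gamma u}\rd u\ge0$. The converse direction uses the uniqueness clause of Lemma~\ref{lem:simple_s}: if the integral is negative, $\sigma_N(0)<0$, so $s_N^*=0$ near $0$ and a gap exists.

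The final step is the computation. Integrating $R'(u)e^{-w\gamma u}$ by parts gives $-R(0)+w\gamma\int_0^\infty R(u)e^{-w\gamma u}\rd u=\bbE[R(\bar B)]-R(0)$, with $\bar B\sim{\rm Exp}(w\gamma)$. The second term is $-(\gamma-\gamma_N)\bigl(w\bbE[R(\bar B)]-c\bigr)/(w\gamma)$. Multiplying by $w\gamma$ and rearranging $\gamma\,\bbE[R(\bar B)-R(0)]\ge(\gamma-\gamma_N)w\bigl(\bbE[R(\bar B)]-c/w\bigr)\cdot w^{-1}$, then writing $\bbE R(\bar B)=R(0)+\bbE[R(\bar B)-R(0)]$, yields exactly \eqref{eq:gap_condition}. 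I expect the main obstacle to be the second step: rigorously justifying that the sign of the single integral at $t=0$ characterizes optimality for all $t$. This needs the single-crossing property of $\xi_N$ from concavity, and care that integrability holds via Assumption~\ref{assumption:liminf}. I would handle the edge case of equality through the uniqueness statement of Lemma~\ref{lem:simple_s}.
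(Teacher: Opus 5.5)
There is a sign error that, as written, makes your computation deliver the \emph{reverse} of \eqref{eq:gap_condition}. Your overall plan matches the paper: reduce to $t_N^*=0$, test the largest miner's switching function at the all-on profile, and evaluate with $\bar B$. Integrating $\psi$ by parts gives $\sigma_N(t)=-w\int_t^\infty \xi_N(u)e^{-wx^*(u)}\,\rd u$, the identity in the proof of Lemma~\ref{lem:long_term_s}. The factor is $-w$, not a positive constant, so the correct criterion is $\int_0^\infty \xi_N(u)e^{-w\gamma u}\,\rd u\le 0$, not $\ge 0$. Your term-by-term evaluation in the last step is right. But from your inequality $\gamma\,\bbE[R(\bar B)-R(0)]\ge(\gamma-\gamma_N)(\bbE[R(\bar B)]-c/w)$, substituting $\bbE[R(\bar B)]=R(0)+\bbE[R(\bar B)-R(0)]$ gives $\gamma_N\bbE[R(\bar B)-R(0)]\ge(\gamma-\gamma_N)(R(0)-c/w)$. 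That is $R(0)\le cw^{-1}+\frac{\gamma_N}{\gamma-\gamma_N}\bbE[R(\bar B)-R(0)]$, and saying it ``yields exactly'' \eqref{eq:gap_condition} hides the error. The same confusion makes your second step incoherent. Since $\xi_N<0$ eventually, $G(t)=\int_t^\infty\xi_N(u)e^{-w\gamma u}\,\rd u$ tends to $0$ from below, so it is never ``nonnegative throughout'' and is not minimized at $0$. What single crossing of $\xi_N$ gives is that $G$ strictly decreases on $[0,\tilde\tau]$ and then strictly increases to $0$. Hence $\sigma_N=-wG\ge 0$ on $[0,\infty)$ iff $\sigma_N(0)\ge 0$, and in that case $\sigma_N>0$ on $(0,\infty)$. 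With the sign flipped, your necessity direction is exactly the paper's.

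For sufficiency you need more than ``all-on passes the test for miner $N$.'' That shows a gap-free equilibrium exists once you add that every other miner's best response is also all-on. This holds, because $wR\ge c$ under \eqref{eq:gap_condition} gives $\xi_n\le\xi_N$ and hence $\sigma_n\ge\sigma_N$, but you omit it. More importantly, it does not exclude another equilibrium with $t_N^*>0$. The uniqueness in Lemma~\ref{lem:simple_s} is about the switching time for a given profile, not uniqueness of the equilibrium. The paper instead argues directly against a gapped equilibrium. After $t_N^*$ all miners are on, so by memorylessness $\sigma_N(t_N^*)=0$ becomes $\mathsf{F}(t_N^*)=0$ with $\mathsf{F}(t)=\gamma(wR(t)-c)-\gamma_N\bbE[wR(\bar B+t)-c]$. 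Concavity and $\gamma>\gamma_N$ give $\mathsf{F}'>0$, which forces $\mathsf{F}(0)<0$, the negation of \eqref{eq:gap_condition}. Your single-crossing route can close the gap the same way. On $[t_N^*,\infty)$, the gapped equilibrium's $\sigma_N$ equals the positive constant $e^{-wx^*(t_N^*)+w\gamma t_N^*}$ times the all-on $\sigma_N$. Under \eqref{eq:gap_condition} the latter is strictly positive on $(0,\infty)$, so it cannot vanish at $t_N^*>0$. The two arguments are really the same fact, since the all-on switching function equals $\gamma^{-1}e^{-w\gamma t}\mathsf{F}(t)$.
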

This provides an intuitive understanding of the conditions under which a mining gap can occur. The left-hand side of the equation represents the sum of the block reward and the average fee of transactions remaining right after block generation. The first term on the right-hand side is the expected total operating cost because it takes $w^{-1}$ number of hash operations on average for a new block generation, whereas $c$ is the operating cost per unit operation. The term $\bbE[R(B) - R(0)]$ stands for the expected accumulated transaction fees, and the coefficient $\max_n \gamma_n / (\sum_{i=1}^N \gamma_i - \max_n \gamma_n)$ can be interpreted as the odds of the largest miner when all miners are mining diligently. Therefore, if the odds are greater for the largest miner, it requires a larger incentive to make him work. 

The above result outlines three possible scenarios that could lead to a mining gap. The first scenario is when $cw^{-1}$ increases. If the operating cost $c$ or the mining difficulty $w^{-1}$ increases, then the cost of mining may become too large unless there is a large amount of incentive $R(0)$. The second scenario is when the largest miner has a significant dominance. If a single miner controls a substantial portion of the total mining power, he can afford to delay his mining operations for larger revenues, knowing that there is still a high probability of winning. The third and last scenario presents one intriguing and counterintuitive case, namely, the accumulation of large transaction fees during block generation may cause a mining gap. This is due to the strategic decision of miners for more profits, as they may choose to wait for greater rewards. In other words, the mining gap may still occur even when users are willing to pay higher transaction fees. 

\subsection{Homogeneous Network with Affine Revenue}\label{subsec:homogeneous}
Analysis of real Bitcoin data suggests that the reward function $R(t)$ can be approximated linearly, supporting the practical relevance of an affine revenue model. Additionally, current hash rate distribution data indicates that the top six mining pools each hold a relatively uniform share between 10-20\%, collectively accounting for 85\% of the total hash rate. This distribution justifies analyzing these top miners as homogeneous entities, making our simplified model both theoretically interesting and practically applicable.

\begin{cor}\label{cor:homogeneous}
        Suppose that all miners are identical with capacities $\gamma_n = \gamma / N$. Further suppose that the revenue function is given by $R(t) = \alpha t + \beta$ for some constants $\alpha, \beta$. 
Then, the optimal starting time $t^*$ at equilibrum in Theorem~\ref{thm:nash} is given by $t^* = ((cw^{-1} - \beta)/\alpha + 1/((N-1)\gamma w) )^+$. As a result, mining gap does not occur if and only if $\beta \geq c w^{-1} + \alpha / ((N-1) w \gamma)$. 
\end{cor}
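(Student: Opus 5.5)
By symmetry all miners should use the same threshold, so I will look for a symmetric equilibrium $s_n^*=\frac{\gamma}{N}{\bf 1}_{\{t\ge t^*\}}$ and identify $t^*$ through the characterization in Theorem~\ref{thm:nash}, namely $\sigma_n(t^*)=0$. Lemma~\ref{lem:simple_s} gives uniqueness and supplies the convention $t^*=0$ when no root exists. The plan is to rewrite the switching function as an integral of the auxiliary function $\xi_n$, specialize to the symmetric step profile and affine $R$, and solve the resulting equation, which turns out to be linear in $t^*$.

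\emph{Step 1: rewrite $\sigma_n$.} Differentiating along the optimal state gives
\[
\frac{\rd}{\rd t}\Big[(wR(t)-c)e^{-wx^*(t)}\Big]=\big(wR'(t)-w\,s(t)(wR(t)-c)\big)e^{-wx^*(t)} .
\]
Under the step profile, $x^*(t)$ grows linearly at rate $\gamma$ for $t\ge t^*$, so $e^{-wx^*}$ decays exponentially. Since $R$ is affine, this forces $(wR-c)e^{-wx^*}\to 0$ as $t\to\infty$. Integrating from $t$ to $\infty$ and adding the integral formula for $\psi$ (which contributes the $s_n$ part) then yields
\[
\sigma_n(t)=-w\int_t^\infty \xi_n(u)\,e^{-wx^*(u)}\rd u .
\]
This is the same identity that produces the displayed condition after Theorem~\ref{thm:nash}.

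\emph{Step 2: specialize and solve.} For $u\ge t^*$ we have $s_{-n}(u)=m:=(N-1)\gamma/N$ and $x^*(u)=\gamma(u-t^*)$. The condition $\sigma_n(t^*)=0$ therefore becomes
\[
\int_{t^*}^\infty\big(\alpha-m(w\alpha u+w\beta-c)\big)e^{-w\gamma u}\rd u=0 .
\]
Using $\int_{t^*}^\infty e^{-w\gamma u}\rd u=e^{-w\gamma t^*}/(w\gamma)$ and $\int_{t^*}^\infty ue^{-w\gamma u}\rd u=e^{-w\gamma t^*}\big(t^*/(w\gamma)+1/(w\gamma)^2\big)$, the equation reduces to
\[
m\big(w\alpha t^*+\alpha/\gamma+w\beta-c\big)=\alpha .
\]
Its unique solution is
\[
t^*=\frac{cw^{-1}-\beta}{\alpha}+\frac{1}{wm}-\frac{1}{w\gamma},
\qquad\text{where}\qquad
\frac{1}{wm}-\frac{1}{w\gamma}=\frac{1}{(N-1)w\gamma}.
\]
Uniqueness is immediate because the equation is linear in $t^*$.

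\emph{Step 3: nonpositive root and the gap criterion.} When the computed root is nonpositive, no positive solution exists. I then need to check that $\sigma_n>0$ on $(0,\infty)$ so that starting at time zero is optimal. In the affine case this follows because, for $t\ge$ the root, $\int_t^\infty\xi_n e^{-w\gamma u}\rd u$ is a strictly decreasing linear function of $t$ times a positive factor. This gives $t^*=(\cdot)^+$. A mining gap occurs exactly when $t^*>0$, so no gap is equivalent to $(cw^{-1}-\beta)/\alpha+1/((N-1)\gamma w)\le 0$, i.e.\ $\beta\ge cw^{-1}+\alpha/((N-1)w\gamma)$ (assuming $\alpha>0$ from strict monotonicity). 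As a consistency check, Theorem~\ref{thm:gap_condition} gives the same criterion, since $\bbE[R(\bar B)-R(0)]=\alpha/(w\gamma)$ and the odds factor equals $1/(N-1)$.

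\emph{Main obstacle.} The delicate point is Step 1, specifically justifying the integration by parts and the vanishing boundary term at infinity, together with the sign argument in Step 3 for the boundary case $t^*=0$. One caveat: $R$ affine is unbounded, which is slightly outside the spirit of Assumption~\ref{assumption:revenue}. The exponential decay of $e^{-wx^*}$ handles this, and I would note it explicitly.
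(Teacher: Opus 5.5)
Your proposal is correct and follows essentially the same route as the paper: you write $\sigma_n(t)=-w\int_t^\infty \xi_n(u)e^{-wx^*(u)}\,\rd u$, specialize to the symmetric step profile with affine $R$ (so $s_{-n}=\frac{N-1}{N}\gamma$ and $e^{-wx^*(u)}=e^{-w\gamma(u-t^*)}$ for $u\ge t^*$), solve the linear equation $\sigma_N(t^*)=0$, and take the positive part. Your explicit sign check for a nonpositive root and your cross-check against Theorem~\ref{thm:gap_condition} are sound additions that the paper leaves implicit.
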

    
\begin{proof}
    Under the given assumptions, the switching function $\sigma(t)$ can be reformulated as follows: for $t \geq t^*$ and for any $n$, 
    \begin{eqnarray*}
            \sigma_n(t) &=& w \int_{t}^\infty \left(\left(w(\alpha u+\beta)-c\right)\frac{N-1}{N} \gamma-\alpha\right)e^{-w\gamma(u-t^*)}\rd u\\
            &=& \frac{e^{-w \gamma (t - t^*)} }{\gamma} \left[ \frac{N-1}{N} \left( \alpha + w \gamma \alpha t + w \gamma \beta - c \gamma\right) - \alpha \right]. 
    \end{eqnarray*}
    By solving $\sigma_N(t^*) = 0$ and finding a condition for $t^* \geq 0$, we obtain the desired result. \qed 
\end{proof}
Corollary~\ref{cor:homogeneous}'s condition for the absence of a mining gap, $\beta \geq c w^{-1} + \alpha / ((N-1) w \gamma)$, provides valuable insights into miners' strategic behavior. Here, $1/w\gamma$ represents the expected block time when there is no mining gap. The term $1/(N-1)$ represents the odds of winning for each miner. Thus, $\alpha / ((N-1) w \gamma)$ can be interpreted as the relevant expected revenue a miner can obtain by participating in this game.

The homogeneous miners with affine $R$ allow us not only to obtain simple $t^*$ but also to compute the expected block generation time and the utility of miners in closed form. Firstly, the expected block generation time at equilibrium is given by 
$$
\bbE[B] = \bbE[\bar B + t^*] =  \frac{1}{\gamma w}  + t^* = \max\left\{ \frac{1}{\gamma w} , \frac{cw^{-1} - \beta}{\alpha} + \frac{N}{(N-1)\gamma  w}\right\} 
$$
where $\bar B$ is the exponential random variable with rate $\gamma w$. 
As for the utility at equilibrium, we proceed as follows:
\begin{eqnarray*}
J_n^* &=&  \int_{t^*}^\infty (w R(t) - c ) \frac{\gamma}{N} e^{- \gamma( t - t^*)} \rd t \\
&=& \frac{1}{N} \bbE\left[ R(\bar B + t^*) - \frac{c}{w}\right] \\
&=& \max\left\{\frac{1}{N}\left(\frac{\alpha}{\gamma w}+\beta-cw^{-1}\right), \frac{1}{N-1}\frac{\alpha}{\gamma w} \right\}. 
\end{eqnarray*}
The utility at equilibrium, $J_n^*$, never falls below $\alpha / ((N-1)\gamma w)$, which is independent of $\beta$ and $c$. Since $\beta$ contains the block reward and $c$ is the production cost, this observation implies that the mining gap is a strategic response by miners to maintain this minimum level of revenue regardless of these two system components. 

Before we end this section, we consider one important system component, namely the target block generation time $\mu$. As explained earlier, the difficulty adjustment algorithm adjusts the winning rate $w$ to achieve this goal. For the simplified setting of this subsection, it is possible to compute the exact difficulty level that makes this possible. This effectively links strategic decisions of miners to the system's stability. More details are given in the next section. 

\begin{cor}\label{cor:homogeneous_w}
 Suppose that all miners are identical with capacities $\gamma_n = \gamma / N$. Further suppose that the revenue function is given by $R(t) = \alpha t + \beta$ for some constants $\alpha, \beta$. 
Then, the target block generation time $\mu$ is achieved by $w = (\gamma \mu)^{-1}$ without mining gap at equilibrium in Theorem~\ref{thm:nash} if $ \beta \geq \Delta \gamma \mu$. Otherwise, $w = (\Delta + \alpha \gamma^{-1})/(\alpha \mu + \beta)$ and the optimal delay $t^* = (\Delta \mu \gamma  - \beta) / (\Delta \gamma + \alpha)$ where $\Delta = c + \alpha \gamma^{-1} / (N-1)$. 
\end{cor}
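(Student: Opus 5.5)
We plan to build directly on Corollary~\ref{cor:homogeneous}, which gives the equilibrium delay $t^* = \bigl((cw^{-1}-\beta)/\alpha + 1/((N-1)\gamma w)\bigr)^+$, together with the expression $\bbE[B] = 1/(\gamma w) + t^*$ for the expected block time at equilibrium. The target condition is $\bbE[B] = \mu$, and this is an equation in the single unknown $w$. We will solve it by splitting into the two regimes $t^*=0$ (no gap) and $t^*>0$ (gap), and then check in each case that the solution for $w$ is consistent with the regime assumed.

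\emph{No-gap case.} If $t^*=0$, then $\bbE[B]=1/(\gamma w)$, so $\mu$ is attained exactly when $w=(\gamma\mu)^{-1}$. We then substitute this $w$ into the no-gap condition of Corollary~\ref{cor:homogeneous}, namely $\beta \ge cw^{-1} + \alpha/((N-1)w\gamma)$. This gives
\[
\beta \ge c\gamma\mu + \frac{\alpha\mu}{N-1} = \gamma\mu\Bigl(c + \frac{\alpha\gamma^{-1}}{N-1}\Bigr) = \Delta\gamma\mu ,
\]
which is exactly the stated condition. Conversely, whenever $\beta\ge\Delta\gamma\mu$, the choice $w=(\gamma\mu)^{-1}$ makes $t^*=0$ and yields $\bbE[B]=\mu$.

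\emph{Gap case.} If $t^*>0$, then
\[
\mu = \frac{1}{\gamma w} + \frac{cw^{-1}-\beta}{\alpha} + \frac{1}{(N-1)\gamma w}.
\]
Multiplying through by $\alpha w$ gives $\alpha\mu w = \alpha\gamma^{-1} + c - \beta w + \alpha\gamma^{-1}/(N-1) = \Delta + \alpha\gamma^{-1} - \beta w$. This equation is linear in $w$, so $w = (\Delta+\alpha\gamma^{-1})/(\alpha\mu+\beta)$. We then get the delay from $t^* = \mu - 1/(\gamma w)$:
\[
t^* = \mu - \frac{\alpha\mu+\beta}{\gamma\Delta+\alpha} = \frac{\Delta\gamma\mu - \beta}{\Delta\gamma+\alpha},
\]
which matches the claim. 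The consistency requirement $t^*>0$ is then equivalent to $\beta<\Delta\gamma\mu$, the complement of the first case.

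The only delicate point is this consistency check. We need to argue that the two regimes are exhaustive and mutually exclusive, so that for each parameter configuration exactly one of them applies and the resulting $w$ is well defined and positive. Positivity holds because $\alpha>0$ by Assumption~\ref{assumption:revenue} and $\beta+\alpha\mu>0$ (this follows if $R(\mu)>0$, which we assume). Monotonicity of $\bbE[B]$ in $w$ is not needed, since the case split already covers every configuration. The rest is routine algebra.
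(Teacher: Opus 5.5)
Your proposal is correct and follows essentially the same route as the paper: you split into the no-gap and gap regimes using Corollary~\ref{cor:homogeneous} together with $\bbE[B]=1/(\gamma w)+t^*$, solve the resulting linear equation for $w$ in each regime, and note that the self-consistency conditions $\beta\ge\Delta\gamma\mu$ and $\beta<\Delta\gamma\mu$ are complementary, which is exactly the content of the paper's two contradiction arguments. Your explicit algebra for $w$ and $t^*$ (and the positivity remark on $\alpha\mu+\beta$) simply fills in what the paper leaves as ``directly derive.''
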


\begin{proof}
    First, consider $\beta \geq \Delta \gamma \mu$. We can verify that there is no mining gap if $w = (\gamma \mu)^{-1}$, satisfying the condition $\beta \geq \Delta / w$ in Corollary~\ref{cor:homogeneous}. We claim that no $w$ can incur a mining gap in this case. If there were a difficulty $w$ achieving the target generation time $\mu$ while satisfying $\beta < \Delta / w$, then by Corollary~\ref{cor:homogeneous}, $w = (\Delta + \alpha \gamma^{-1})/(\alpha \mu + \beta)$. This, combined with $\beta < \Delta / w$, would imply $\beta < \Delta \gamma \mu$, contradicting our initial assumption.
    
    Now, suppose $\beta < \Delta \gamma \mu$. In this case, no difficulty $w$ can achieve the target $\mu$ without a mining gap. If such a $w$ existed, it would be $w = (\gamma \mu)^{-1}$, leading to $\beta \geq \Delta w^{-1}$, which contradicts our assumption. Therefore, $w$ must be set considering a positive delay $t^*$. From the previous corollary, we can directly derive the expressions for $w$ and $t^*$ as given in the statement.
    \qed
\end{proof}

The difficulty at equilibrium above can also be succinctly written as $w = \max\{ (\gamma \mu)^{-1}, (\Delta + \alpha \gamma^{-1})/(\alpha \mu + \beta)\}$ and $t^* = (\Delta \mu \gamma  - \beta)^+ / (\Delta \gamma + \alpha)$. This result illustrates that miners cease to mine if the system adheres to the fixed target time $\mu$ under insufficient revenue $R(0)$. The right hand side of the critical condition $\beta \geq \Delta \gamma \mu$ in a very large $N$ case is nothing but the operating cost of diligent mining by all miners for the duration of a single block generation. 
Reduced difficulty or higher winning probability as an effort to meet the target $\mu$ when $\beta < \Delta \gamma \mu$ might look advantageous to individual miners. However, it may still induce non-utilization of the computing power of the network and make the system vulnerable to attacks. This phenomenon is not merely an artifact of this simplified setting but a systematic characteristic caused by the trade-off between operating costs and accumulated rewards.

\section{Difficulty Adjustment and Mining Gap}\label{sec:DAA}

\subsection{The Protocol}

Recall that the revenue function $R(\cdot)$ and the cost per hash $c$ are exogenously given, but  the winning rate $w$ is regularly updated by the network protocol. 
This updating method DAA for the case of Bitcoin uses historical block generation times and recalibrates difficulty every 2016 blocks (approximately biweekly). The adjusted winning rate is calculated as
$$
w_{\rm new} = \bw\left( \mu_{\rm old}, w_{\rm old}\right) =  \max\left(\frac{1}{4},\min\left(\frac{\mu_{\rm old}}{\mu},4\right)\right) \cdot w_{\rm old}.
$$
Here, $w_{\rm old}$ is the previous winning rate, $w_{\rm new}$ is the new rate, $\mu_{\rm old}$ is the average of actual block generation times, and $\mu$ is the target time. The formula adjusts the winning rate to maintain a constant product of $w_{\rm old}$ and $\mu_{\rm old}$.Therefore, if $\mu_{\rm old}$ is less (more) than $\mu$, then the protocol decreases (increases) the winning probability so as to make winning more (less) difficult. However, such a change in the winning rate is set to be within 25\% and 400\% range.

For analytical tractability and also based on a rationale that deviating beyond such a range is highly unlikely, we make the following assumption. 

\begin{ass}\label{assumption:DAA}
The winning rate is adjusted by $w_{\rm new} = \bw(\mu_{\rm old}, w_{\rm old}) = \mu_{\rm old} w_{\rm old} / \mu$. 
\end{ass}
New block generation time $\mu_{\rm new}$ is tied to $w_{\rm new}$ and is a function of miners' optimal responses:
\begin{equation}\label{eq:bmu}
\mu_{\rm new} = \bmu(w_{\rm new}) = \int_0^\infty \exp\left( - w_{\rm new} x^*(t) \right) \rd t.
\end{equation}
This DAA results in iterative sequences of winning rates and average block generation times. A successful DAA would yield $\mu = \bmu( w)$ for the target $\mu$ and a suitable winning rate $w$.

\subsection{System Stability}
We denote the outcome of repeated DAA applications by $w_i$ and $\mu_i$, where $\mu_i$ is the average block generation time associated with $w_i$. One natural question is then whether the algorithm actually does what it is designed to achieve. To determine if the algorithm achieves its intended goal, we define steady-state winning rate and $\varepsilon$-stable winning rate.
 
    \begin{defn}
        A winning rate $w$ is in \emph{steady-state} if $\mu = \bmu (w)$. 
    \end{defn}

\begin{defn}
A steady-state winning rate $w$ is said to be \emph{$\varepsilon$-stable} if there exists some positive $\varepsilon$ such that $\lim_{i \rightarrow \infty} w_i = w$ where $w_i = \bw\left( \mu_{i-1}, w_{i-1}\right)$ and $\mu_i = \bmu( w_{i})$ for any initial $w_0$ with $|w_0 - w| < \varepsilon$. 
\end{defn}

Corollary~\ref{cor:homogeneous_w} shows the existence of steady-state $w$ depending on $R$. Even with mining gaps, it's possible to choose $w$ satisfying $\mu = \bmu(w)$. The following theorem proves that steady-state $w$ is guaranteed for the Nash equilibrium in Section~\ref{subsec:nash}.

    \begin{thm}\label{thm:steady_state}
        There exists a steady-state winning rate $w$ at equilibrium in Theorem~\ref{thm:nash}. 
    \end{thm}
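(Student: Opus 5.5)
We will prove the theorem with an intermediate value argument applied to the map $w \mapsto \bmu(w)$. At the Nash equilibrium of Theorem~\ref{thm:nash}, miners use the step strategies $\gamma_n {\bf 1}_{\{t\ge t_n^*\}}$, where $t_n^* = t_n^*(w)$ are the start times determined by $\sigma_n(t_n^*)=0$ (or $t_n^*=0$ when this equation has no solution). In that case the equilibrium state is $x^*(t)=\sum_n \gamma_n (t-t_n^*(w))^+$ and
\[
\bmu(w)=\int_0^\infty \exp\Bigl(-w\sum_{n=1}^N \gamma_n (t-t_n^*(w))^+\Bigr)\rd t .
\]
A steady state is a $w>0$ with $\bmu(w)=\mu$. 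So it suffices to show three things: $\bmu$ is continuous on $(0,\infty)$, $\bmu(w)\to\infty$ as $w\downarrow 0$, and $\bmu(w)\to 0$ as $w\to\infty$. The intermediate value theorem then gives the result.

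\textbf{Continuity.} We first show that each $t_n^*(w)$ depends continuously on $w$. Following the recursion after Theorem~\ref{thm:nash}, $t_N^*$ solves
\[
G_N(t,w):=\int_t^\infty\Bigl(R'(u)-(wR(u)-c)\sum_{i<N}\gamma_i\Bigr)e^{-w\gamma u}\rd u=0,
\]
and each smaller miner's equation is similar, involving $t_{n+1}^*,\dots,t_N^*$. We will use that $G_N$ is jointly continuous, by dominated convergence, since $R$ is concave and therefore grows at most linearly. We will also use that the zero is unique and is a sign change: the integrand is eventually negative by Assumption~\ref{assumption:revenue}, and it is decreasing in $u$ because $R'$ is non-increasing and $wR-c$ is increasing. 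A sign-changing unique root of a continuous function moves continuously with the parameter, including the transition to the case $t_n^*=0$. Proceeding recursively from $n=N$ down to $n=1$ gives continuity of every $t_n^*(w)$. Continuity of $\bmu$ then follows by dominated convergence, once the $t_n^*$ are locally bounded.

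\textbf{Limit as $w\to\infty$.} Since $wR(0)-c>0$ for large $w$ and the integrand above is monotone, we expect $t_n^*(w)$ to stay bounded, and in fact to tend to $0$ at rate $O(1/w)$; the affine case in Corollary~\ref{cor:homogeneous} has exactly this form. Then
\[
\bmu(w)\le \max_n t_n^*(w)+\frac{1}{w\gamma_1}\to 0 .
\]

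\textbf{Limit as $w\downarrow 0$.} Here we have $x^*(t)\le \gamma t$, so
\[
\bmu(w)\ge \int_0^\infty e^{-w\gamma t}\rd t=\frac{1}{w\gamma}\to\infty .
\]

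The main obstacle is the continuity and boundedness of the equilibrium start times $t_n^*(w)$. In particular we need a uniform bound on $t_n^*(w)$ for $w$ in compact sets and for large $w$. We plan to get it from the monotonicity of the integrand $\xi_n$ in the switching-function representation together with Assumption~\ref{assumption:revenue}. Concretely, $\xi_n(u)<0$ once $wR(u)-c$ exceeds $R'(0)/\sum_{i\ne n}\gamma_i$. This forces $t_n^*$ to lie below the corresponding threshold time, and that threshold is continuous and nonincreasing in $w$.
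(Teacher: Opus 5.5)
Your proposal is correct and follows essentially the paper's route: the intermediate value theorem for $\bmu$; continuity of the start times obtained recursively from the largest miner downward (you use stability of a strict sign-change root, where the paper uses the implicit function theorem, which additionally gives the differentiability of $\bmu$ used later); a threshold time after which all miners are on, giving $\bmu(w)\le \bt(w)+(\gamma w)^{-1}\to 0$; and $\bmu(w)\ge(\gamma w)^{-1}\to\infty$ as $w\to 0$. One small correction: your claim that $\xi_n(u)<0$ once $wR(u)-c>R'(0)/\sum_{i\neq n}\gamma_i$ holds only for the largest miner $N$ (for $n<N$ some larger miners may still be off at that time, so $s_{-n}(u)<\sum_{i\neq n}\gamma_i$ and $\xi_n$ can be positive), so you should bound $t_N^*$ this way and control the other start times through the ordering in Lemma~\ref{lem:simple_s}, exactly as the paper does via $\min_n\gamma_{-n}=\gamma_{-N}$.
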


In current Bitcoin networks, the difficulty is physically limited to between $2^{-32}$ and $2^{-256}$.\footnote{In each Bitcoin block, the hash result must fall below a certain target. The highest possible target, set at 0x1d00ffff (roughly $2^{224}$), is embedded directly in the Bitcoin source code. Since the number of all possible hash results is $2^{256}$, this corresponds to a maximum winning probability per hash of $2^{-32}$. On the other hand, the minimum theoretical winning probability is $2^{-256}$.} However, given today's typical computing power, it can be treated as minuscule. Data from March 2024 shows that the Bitcoin network performs $2^{78}$ hashes in 10 minutes, and a single Antminer KS5 performs $2^{53}$ hashes in the same amount of time. Therefore, the physical constraints cover the most extreme cases and are a negligible constraint in our model. However, the existence of a steady state doesn't guarantee stability. Due to fluctuations in $R$ and $c$, the equilibrium system can easily deviate.

Consider the sequence of winning rates $w_i$ and mean block generation times $\mu_i$. The update protocol in Assumption~\ref{assumption:DAA} can be rewritten as
$$
x_{i+1} = f(x_i) + x_i
$$
where $x_i = \ln (w_i/w)$ and $f(x) = \ln \left(\bmu\left(w e^x \right)/\mu \right)$. The convergence of ${w_i}$ to $w$ is equivalent to $\lim_{ i \rightarrow \infty} x_i = 0$. When it comes to the convergence of real sequences, there are some well known elementary convergence criteria. One instance is the existence of a constant $l$ such that $| x_{i+1} / x_i | \leq l < 1$ for all $i$'s. This can be translated into the following simple condition.

\begin{lem}\label{lem:DAA_conv}
Suppose $|f(x)/x + 1| \leq l  $ for some constant $l < 1$ whenever $| x | < \varepsilon_0$ for some $\varepsilon_0$. Then, the given steady-state winning rate $w$ is $\varepsilon$-stable for some suitable $\varepsilon$. 
\end{lem}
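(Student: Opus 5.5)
The plan is to convert the DAA iteration into the scalar recursion $x_{i+1} = g(x_i)$, where
$$g(x) := x + f(x), \qquad x_i = \ln(w_i/w).$$
We then show that $g$ contracts toward $0$ on a neighbourhood of the origin. By construction, $w_{i+1} = \mu_i w_i/\mu$ with $\mu_i = \bmu(w_i)$ under Assumption~\ref{assumption:DAA}. Taking logarithms gives $x_{i+1} = x_i + \ln(\bmu(w e^{x_i})/\mu) = x_i + f(x_i)$.

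Since $w$ is in steady state, $\bmu(w)=\mu$, so $f(0)=0$ and $g(0)=0$. The hypothesis $|f(x)/x+1|\le l$ for $0<|x|<\varepsilon_0$ is exactly the statement $|g(x)|\le l|x|$ there. This inequality also holds trivially at $x=0$.

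Choose $\varepsilon>0$ so that $|x_0| = |\ln(w_0/w)| < \varepsilon_0$ whenever $|w_0-w|<\varepsilon$. This is possible by continuity of $\ln$ at $1$; for instance, take $\varepsilon = w(1-e^{-\varepsilon_0})$, which forces $w_0/w\in(e^{-\varepsilon_0},e^{\varepsilon_0})$ and in particular $w_0>0$.

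We then argue by induction that every iterate stays in $(-\varepsilon_0,\varepsilon_0)$ and satisfies $|x_i|\le l^i|x_0|$. If $|x_i|<\varepsilon_0$, the hypothesis gives $|x_{i+1}| = |g(x_i)| \le l|x_i| < \varepsilon_0$. The invariance of the interval is what makes it legitimate to apply the local hypothesis at every step. Since $0\le l<1$, we get $x_i\to 0$, and continuity of $\exp$ gives $w_i = we^{x_i}\to w$, which is $\varepsilon$-stability.

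The only delicate point is well-posedness of the iteration. Each $w_i$ must be a positive winning rate for which $\bmu(w_i)$ is finite and positive, so that $f(x_i)$ is defined. This follows because $x_i$ stays in the neighbourhood where the hypothesis presupposes $f$ is defined, and because Assumption~\ref{assumption:liminf} ensures $\bmu$ is finite. Apart from that, the argument is a routine contraction-type estimate.
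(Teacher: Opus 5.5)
Your proposal is correct and is essentially the paper's argument. The paper gives no separate proof, only the remark that writing the update as $x_{i+1}=x_i+f(x_i)$ turns the hypothesis into the elementary ratio criterion $|x_{i+1}/x_i|\le l<1$; your version also supplies the two details the paper leaves implicit, namely the choice $\varepsilon=w(1-e^{-\varepsilon_0})$ that guarantees $|x_0|<\varepsilon_0$, and the induction showing the iterates never leave $(-\varepsilon_0,\varepsilon_0)$, so the local hypothesis applies at every step.
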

The above condition is essentially about the variation in $\bmu$ when the winning rate changes. More specifically, straightforward calculations show that it is equivalent to $\mu / \delta^{1+l} \leq \bmu( \delta w) \leq \mu/ \delta^{1-l}$ for the case of $\delta  > 1$. Similar inequalities are obtained for $\delta  < 1$. 

A graphical illustration of $f(\cdot)$ is given in Figure~\ref{fig:2.4.DAA_conv}. For given $x$ or winning rate $w e^x$, $\bmu(w e^x) \geq (\gamma w)^{-1} e^{-x}$ where equality holds when there is no mining gap. Therefore, $f(x) \geq - x - \ln (\gamma w \mu)$. Also, the equality holds for sufficiently large $x$ values. Now suppose initial $x_0$ or $w_0 = w e^{x_0}$ is given. Then, the next value $x_1$ is obtained by the zero of the function $y = - x + x_0 + f(x_0)$. Sequentially, $x_i$'s on the $x$-axis are determined. The figure shows one such convergent case of $x_i$'s to the origin. 

    \begin{figure}[htb!]
        \centering
        \includegraphics[width=0.5\textwidth]{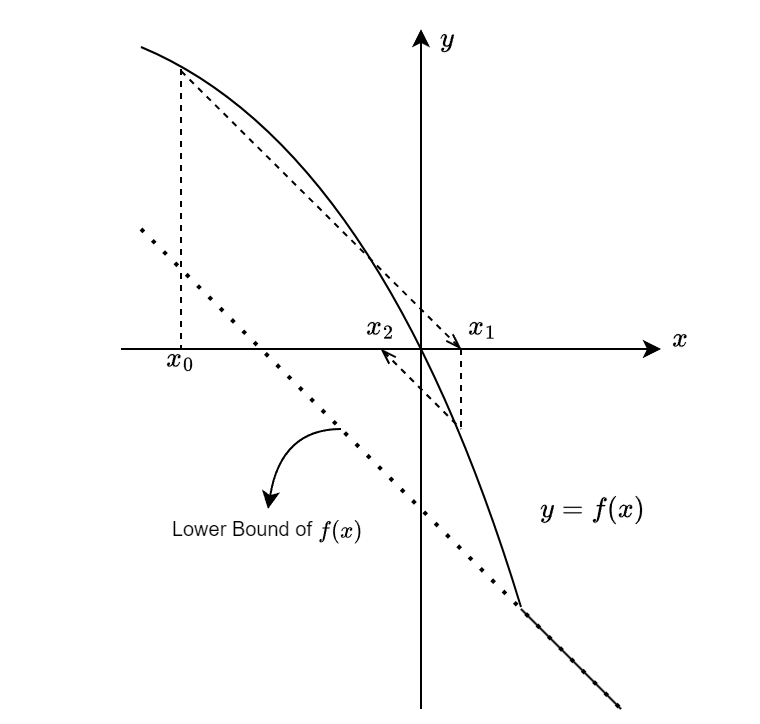}
        \caption{Example of difficulty update path}
        \label{fig:2.4.DAA_conv}
    \end{figure}

 While Lemma~\ref{lem:DAA_conv} is elementary and the underlying idea is applicable to any real sequence, the following result uses the characteristics of $\bmu$ more heavily. We utilize the differentiability of $\bmu$ as argued in the proof of Theorem~\ref{thm:steady_state}. 
 
\begin{thm}\label{thm:DAA_conv}
Suppose $w$ is a steady-state winning rate and $\mu$ is the target block generation time. If $\bmu'(w) >  - 2 \mu / w$, then $w$ is $\varepsilon$-stable for some suitable $\varepsilon$. If $\bmu'(w) < -2 \mu / w$, then there is no such $\varepsilon$. 
\end{thm}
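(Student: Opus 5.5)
We treat the DAA as the one-dimensional iteration $x_{i+1} = g(x_i)$ with $g(x) := x + f(x)$, where $x_i=\ln(w_i/w)$ and $f(x)=\ln(\bmu(we^x)/\mu)$, exactly as set up before Lemma~\ref{lem:DAA_conv}. Since $w$ is in steady state, $\bmu(w)=\mu$, so $f(0)=0$ and $x=0$ is a fixed point of $g$. By the argument in the proof of Theorem~\ref{thm:steady_state}, $\bmu$ is differentiable at $w$. Hence $f$ is differentiable at $0$, and by the chain rule
$$
f'(0) = \frac{w\,\bmu'(w)}{\bmu(w)} = \frac{w\,\bmu'(w)}{\mu}, \qquad g'(0) = 1 + \frac{w\,\bmu'(w)}{\mu}.
$$
The whole theorem then reduces to the classical linearization criterion for a fixed point of a scalar map. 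The fixed point is attracting if $|g'(0)|<1$ and repelling if $|g'(0)|>1$.

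\textbf{Step 1: sign of $\bmu'$.} We first record that $\bmu'(w)\le 0$. Increasing the winning rate cannot lengthen the expected block time. We intend to argue this from \eqref{eq:bmu}, using that $\bmu(w')\ge (\gamma w')^{-1}$ and that the equilibrium start times respond continuously to $w'$. If this monotonicity turns out to be delicate, it is not essential: the condition $g'(0)<1$ also follows whenever $\bmu'(w)<0$, and the case $\bmu'(w)=0$ gives $g'(0)=1$, which lies outside the strict hypotheses only if one insists. In any case, the hypothesis $\bmu'(w)>-2\mu/w$ gives $g'(0)>-1$. Combined with $\bmu'(w)\le 0$, this yields $g'(0)\in(-1,1]$. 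We will need $g'(0)<1$ strictly, which holds when $\bmu'(w)<0$. Under a mining gap and near the no-gap regime, $\bmu(w')\approx(\gamma w')^{-1}$, which has strictly negative derivative, so we expect strictness to hold. This is the step I expect to require the most care.

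\textbf{Step 2: stable case.} Suppose $|g'(0)|<1$. Pick $l$ with $|g'(0)|<l<1$. Since $f(x)/x \to f'(0)$ as $x\to 0$, there is $\varepsilon_0$ such that $|f(x)/x+1| = |g(x)/x| \le l$ for $0<|x|<\varepsilon_0$. This is exactly the hypothesis of Lemma~\ref{lem:DAA_conv}, which gives $\varepsilon$-stability. Concretely, $|x_{i+1}|\le l|x_i|$ keeps the iterates inside $(-\varepsilon_0,\varepsilon_0)$ and forces $x_i\to0$, i.e.\ $w_i\to w$. Only differentiability at the single point $w$ is used here, not continuity of $\bmu'$.

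\textbf{Step 3: unstable case.} Now suppose $\bmu'(w)<-2\mu/w$, so $g'(0)<-1$. Choose $L$ with $1<L<|g'(0)|$. By differentiability there is $\varepsilon_1$ such that $|g(x)|\ge L|x|$ for $|x|<\varepsilon_1$. Take any candidate $\varepsilon>0$ and any initial $x_0\neq 0$ with $|w_0-w|<\varepsilon$.

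If $x_i\to0$, then eventually $|x_i|<\varepsilon_1$ for all large $i$. From that point $|x_{i+1}|\ge L|x_i|$, so $|x_i|$ grows geometrically, unless some $x_i=0$ exactly. The latter would require $g(x_{i-1})=0$ with $x_{i-1}\neq0$ small, which is impossible because $|g(x)|\ge L|x|>0$ there. This contradicts $x_i\to 0$.

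Hence convergence fails for every such $x_0\ne0$, and no $\varepsilon$ can satisfy the definition of $\varepsilon$-stability. The boundary case $\bmu'(w)=-2\mu/w$ is excluded by the statement, so nothing more is needed there.
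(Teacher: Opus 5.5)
Your reduction to $x_{i+1}=g(x_i)$ and your expansion argument in the unstable case follow the paper. Step 1, however, is a genuine gap, and your stable case depends on it. Linearization needs $|g'(0)|<1$. The hypothesis $\bmu'(w)>-2\mu/w$ gives only $g'(0)>-1$, and $g'(0)<1$ is equivalent to $\bmu'(w)<0$, which you never prove. Note also that $\bmu'(w)=0$ lies \emph{inside} the hypothesis, not outside it. The theorem therefore asserts stability in that case, while your criterion says nothing at $g'(0)=1$. The heuristics you sketch do not supply the sign either. The bound $\bmu(w')\ge(\gamma w')^{-1}$ together with continuity of the start times carries no information about the sign of $\bmu'(w)$. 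And under a mining gap $\bmu(w')$ is not close to $(\gamma w')^{-1}$: in the homogeneous affine case it equals $t^*(w')+(\gamma w')^{-1}$ with $t^*(w')>0$.

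The missing ingredient is the monotonicity asserted in the proof of Theorem~\ref{thm:steady_state}. Equilibrium start times are non-increasing in the winning rate, so $x^*(t;w')\ge x^*(t;w)$ for $w'>w$. The paper uses this to get the strict sign property: $f(x)<0$ for $x>0$ and $f(x)>0$ for $x<0$. That gives $x_{i+1}/x_i<1$ with no upper condition on $f'(0)$, while $f'(0)>-2$ gives $x_{i+1}/x_i>-1$ near $0$.

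You can also close your own Step 1 with the same fact. For $h>0$, $\bmu(w+h)\le\int_0^\infty e^{-(w+h)x^*(t;w)}\,\rd t$, with equality at $h=0$. Hence $\bmu'(w)\le-\int_0^\infty x^*(t;w)e^{-wx^*(t;w)}\,\rd t<0$, so $g'(0)\in(-1,1)$ and Lemma~\ref{lem:DAA_conv} applies with a uniform factor $l<1$. This is slightly cleaner than the paper's pointwise bound $|x_{i+1}/x_i|<1$, which needs an extra continuity argument to force $x_i\to0$.

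A smaller point concerns Step 3. Your claim that every orbit with $x_0\neq0$ fails to converge is overstated. An orbit that has left $(-\varepsilon_1,\varepsilon_1)$ can return by landing exactly on $0$ from a predecessor that is not small, and your exclusion of $x_i=0$ only covers small predecessors. The paper glosses over the same point. Strictly, negating $\varepsilon$-stability only requires, in each neighbourhood, one starting point whose orbit never hits $0$ exactly.
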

%
Studies like \cite{goren2019mind} have shown that miners can manipulate Bitcoin's DAA for profit. Theorem~\ref{thm:DAA_conv} suggests that even without malicious intent, similar instability can occur in mining gap situations if the stated condition is met. When unstable, block times may oscillate wildly, severely impacting network usability and predictability. Such volatility not only complicates Bitcoin transactions but also undermines system reliability and user trust. The potential for network destabilization and increased vulnerability to attacks makes this an undesirable scenario that should be avoided. 

\subsection{Homogeneous Network with Affine Revenue}
The homogeneous case in this subsection allows for explicit system stability requirements. For a steady-state winning rate $w$ without mining gap, $w \bmu'(w) / \mu = -1$, ensuring $\varepsilon$-stability by Theorem~\ref{thm:DAA_conv}. The following corollary addresses the mining gap case.

 \begin{cor}\label{cor:DAA_conv}
 Suppose that all miners are identical with capacities $\gamma_n = \gamma/ N$. Further suppose that the revenue function is given by $R(t) = \alpha t + \beta$ for some constants $\alpha$, $\beta$. Let $w$ be a given steady-state winning rate with a mining gap, and $\mu$ be the target block generation time. If $\beta  < \alpha \mu$, then $w$ is $\varepsilon$-stable for some suitable $\varepsilon$. If $\beta > \alpha \mu$, then there is no such $\varepsilon$. 
 \end{cor}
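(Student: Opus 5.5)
We reduce the statement to Theorem~\ref{thm:DAA_conv} by computing $\bmu(w)$ explicitly in the homogeneous affine setting and evaluating its derivative at the steady state.

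\emph{Step 1: closed form for $\bmu$.} By Corollary~\ref{cor:homogeneous}, for a winning rate $w$ the equilibrium start time is
\[
t^*(w)=\Big(\frac{c w^{-1}-\beta}{\alpha}+\frac{1}{(N-1)\gamma w}\Big)^+ .
\]
Since all miners switch on simultaneously at $t^*$, the equilibrium state is $x^*(t)=\gamma (t-t^*)^+$. Then \eqref{eq:bmu} gives
\[
\bmu(w)=t^*(w)+\frac{1}{\gamma w}.
\]
In the mining-gap regime $t^*>0$, and this can be rewritten as
\[
\bmu(w)=\frac{\Delta}{\alpha w}-\frac{\beta}{\alpha}+\frac{1}{\gamma w}, \qquad \Delta=c+\frac{\alpha\gamma^{-1}}{N-1},
\]
which is consistent with Corollary~\ref{cor:homogeneous_w}. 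A mining gap at the given steady state $w$ means the strict inequality $\beta<\Delta/w$ holds. By continuity it therefore holds on a neighbourhood of $w$, so on that neighbourhood $\bmu$ is given by the smooth formula above. This local smoothness is exactly what Theorem~\ref{thm:DAA_conv} requires.

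\emph{Step 2: the derivative at the steady state.} Differentiating,
\[
\bmu'(w)=-\frac{1}{w^2}\Big(\frac{\Delta}{\alpha}+\frac{1}{\gamma}\Big).
\]
The steady-state condition $\bmu(w)=\mu$ gives
\[
\frac{1}{w}\Big(\frac{\Delta}{\alpha}+\frac{1}{\gamma}\Big)=\mu+\frac{\beta}{\alpha}.
\]
Substituting, we obtain
\[
\bmu'(w)=-\frac{1}{w}\Big(\mu+\frac{\beta}{\alpha}\Big).
\]

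\emph{Step 3: apply Theorem~\ref{thm:DAA_conv}.} The stability condition $\bmu'(w)>-2\mu/w$ becomes $\mu+\beta/\alpha<2\mu$, i.e.\ $\beta<\alpha\mu$ (using $\alpha>0$, since $R$ is strictly increasing). Likewise, $\bmu'(w)<-2\mu/w$ is equivalent to $\beta>\alpha\mu$. Theorem~\ref{thm:DAA_conv} then yields $\varepsilon$-stability in the first case and its failure in the second.

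\emph{Main obstacle.} The only delicate point is justifying Step~1 near the steady state. We must confirm that $x^*$ under the Theorem~\ref{thm:nash} equilibrium is $\gamma(t-t^*)^+$ for every $w$ near the steady state, and that the gap regime persists there. The persistence follows from the strict inequality $\beta<\Delta/w$ and continuity in $w$. If $\beta\ge 0$ is not assumed, one should additionally check that $\beta<\Delta/w$ remains compatible with the steady-state equation. That equation already forces $w=(\Delta+\alpha\gamma^{-1})/(\alpha\mu+\beta)>0$, so no further issue arises.
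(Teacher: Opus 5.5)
Your proposal is correct and takes essentially the same route as the paper. You write $\bmu(w)=t^*(w)+1/(\gamma w)$ in closed form in the gap regime near the steady state, use $\bmu(w)=\mu$ to obtain $w\bmu'(w)/\mu=-1-\beta/(\alpha\mu)$, and invoke Theorem~\ref{thm:DAA_conv}; your use of the strict inequality $\beta<\Delta/w$ to show the gap regime persists near $w$ simply spells out what the paper means by ``for all sufficiently small $x$.''
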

 
 \begin{proof}
 Recall that we computed $\bbE[B]$ in Section~\ref{subsec:homogeneous}. When there is a mining gap, we have
 $$
 \bmu(w e^x) = \frac{c w^{-1} e^{-x} - \beta}{\alpha} + \frac{N}{(N-1)\gamma w} e^{-x}
 $$
 for all sufficiently small $x$. In particular, $c/\alpha + N/((N-1)\gamma)  = (\mu + \beta/ \alpha) w$ from $\bmu(w) = \mu$. 
  Straightforward calculations yield $w \bmu'(w) / \mu = -1 - \beta/(\alpha \mu)$. This is greater than $-2$ if and only if $\beta < \alpha \mu$. The case of no stability is also a direct consequence of Theorem~\ref{thm:DAA_conv}. 
\qed
 \end{proof}

Some appropriate size of $\varepsilon$ in the statement of the above corollary can be more explicit by using criteria such as Lemma~\ref{lem:DAA_conv}, although it is not pursued further to economize on space. One intriguing implication of the above result is that a larger block reward (or higher $\beta$) with steady-state winning rate may make the system stability not achievable. Or, in other words, if the accumulated fee per unit time is insufficient relative to $\beta$, then the DAA may fail to accomplish the very objective that it is designed. 
In this simplified setting, we can then divide the whole region of $\alpha$ and $\beta$ into three. The first is the case in which there is no mining gap thanks to the condition $\beta \geq \Delta \gamma \mu$ in Corollary~\ref{cor:homogeneous}. The second is the case in which there is a mining gap but the given steady-state winning rate in the same corollary is $\varepsilon$-stable for some $\varepsilon$; $\beta < \min\{\Delta\gamma\mu, \alpha \mu\}$. The last is the case of a mining gap with no system stability: $\alpha \mu < \beta < \Delta\gamma\mu$. 

We end this section with some simple numerical tests. Let us assume that $\mu = 1$, $N = 10$, $\gamma = 1$, $c = 2$, and $R(t) = 2t+3$. The operating cost rate is intentionally set high to induce the situation of a mining gap. Corollary~\ref{cor:homogeneous} implies that there are a unique steady-state winning rate $w = 22/9$ and a mining gap $t^* = 13/22$. Corollary~\ref{cor:DAA_conv} tells us that system stability is not achievable for this winning rate. 
The upper left panel of Figure~\ref{fig:DAA_conv} shows the graph of function $f(x)$ and the straight line $y = -x - \ln (\gamma w \mu)$. The upper right panel of the figure then shows two sample paths when the update algorithm starts at two different $x_0$. Rather than converging, the DAA creates sequences of $x_i$'s that oscillate. In terms of the winning rate, see the lower left panel of the figure. 
Combining Corollaries~\ref{cor:homogeneous} and ~\ref{cor:DAA_conv}, 
We can find three possible regions in the $\alpha\mu$-$\beta$ plane, as in the bottom right panel of the figure. The point $(2,3)$, corresponding to $(\alpha\mu, \beta)$ of the revenue function $R(t)=2t+3$, is located in the orange-colored region, and any point in the region means that the relevant system produces a mining gap and there is no stable steady-state winning rate. If, however, the point moves to the blue-colored region, then the system may still have a mining gap but a stable steady-state winning rate exists. As long as $\beta$ is large enough (white part), there is no mining gap. 

\begin{figure}[t]
     \centering
     \begin{subfigure}{0.48\textwidth}
         \centering
         \includegraphics[width=\textwidth]{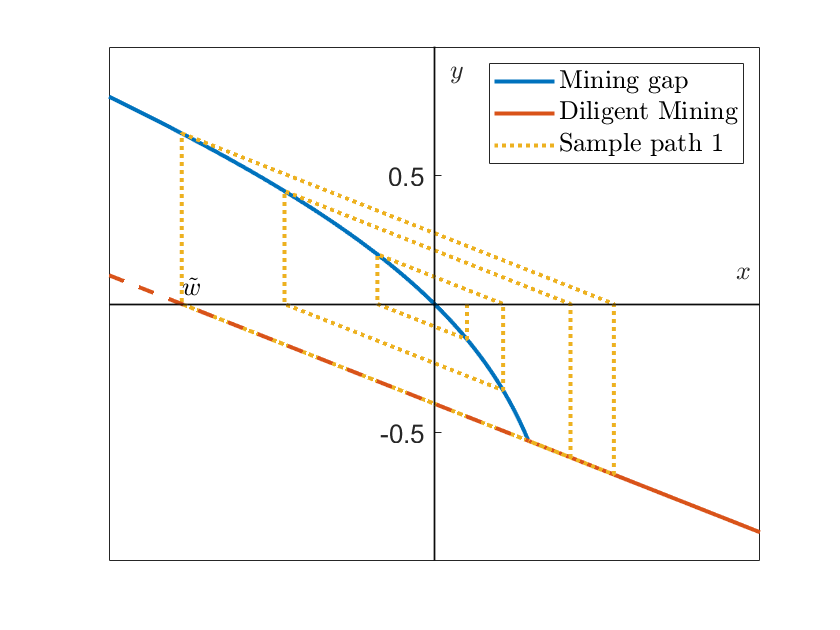}
         \caption{Non-convergent sample path 1}
     \end{subfigure}
     \hfill
     \begin{subfigure}{0.48\textwidth}
         \centering
         \includegraphics[width=\textwidth]{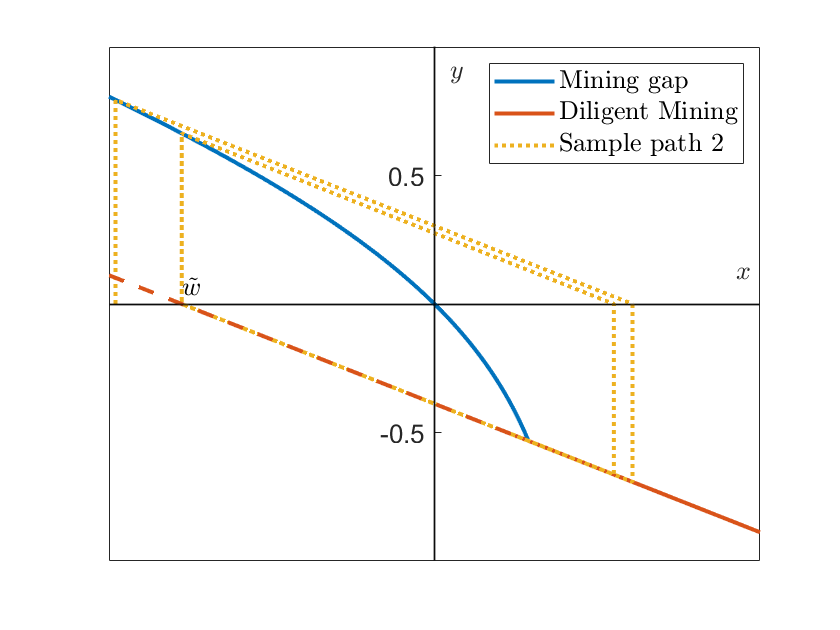}
         \caption{Non-convergent sample path 2}
     \end{subfigure}     
     \hfill
     \begin{subfigure}{0.48\textwidth}
         \centering
         \includegraphics[width=\textwidth]{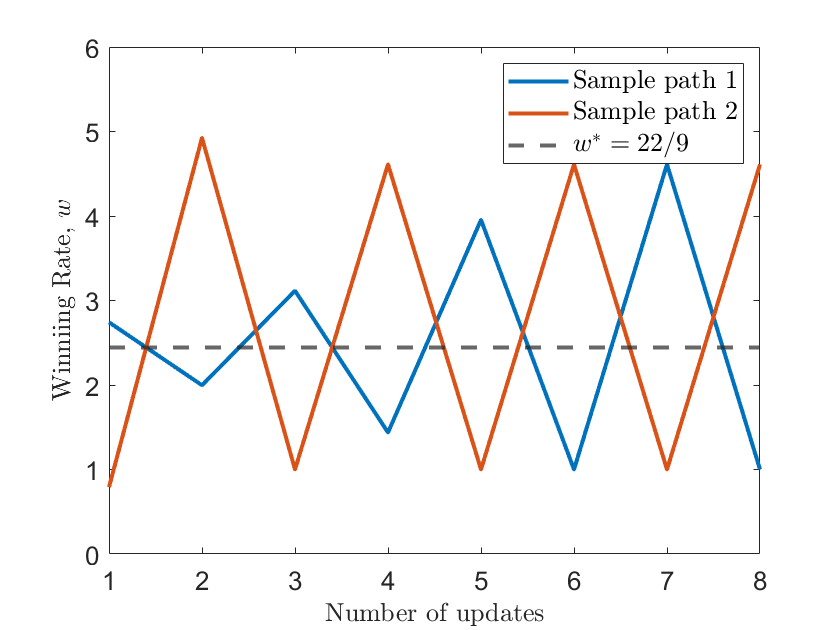}
         \caption{Difficulties of sample path}
     \end{subfigure}
     \hfill
     \begin{subfigure}{0.48\textwidth}
         \centering
         \includegraphics[width=\textwidth]{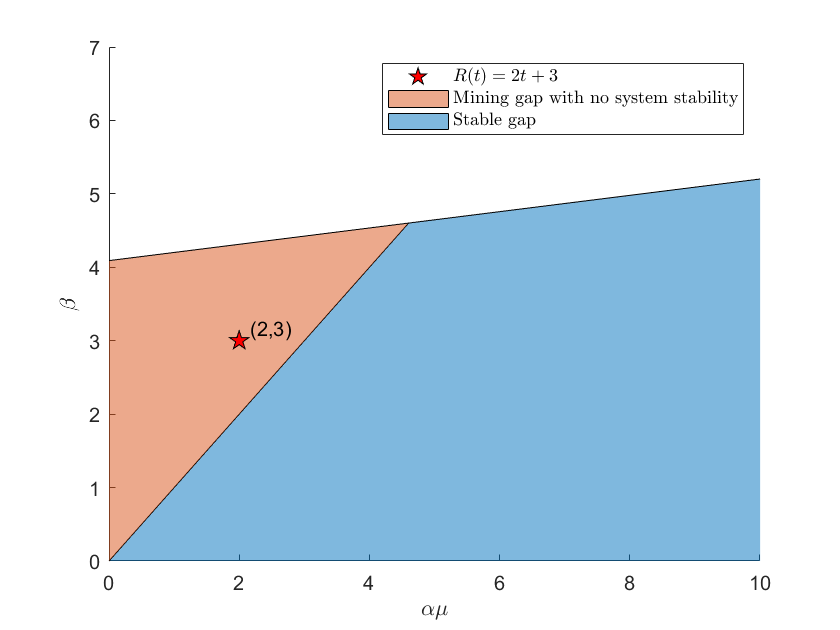}
         \caption{$\alpha\mu$-$\beta$ plane and three regions}
     \end{subfigure}
     \caption{Interaction between DAA and mining gap}
     \label{fig:DAA_conv}
\end{figure}

\section{Further Considerations}\label{sec:Simulation}
This section examines the features of the target system numerically, incorporating  miners' strategic decisions and the difficulty adjustment protocol. We focus on a two-player case. While such a scenario is unlikely in real-world Proof-of-Work systems, and a dominant player with over 50\% hash power would likely pursue more direct manipulation strategies, we adopt this simplified model for its analytical tractability and for the ease of visualization. Our focus is solely on the mining gap phenomenon and its implications.

\subsection{Sensitivity Analysis}

{\bf Optimal Response.} We consider three revenue functions: (1) linear $R(t) = t+2$, (2) square root $R(t) = 1.13\sqrt{t}+2$, and (3) log $R(t) = 5.03 \log(t+1)$. The mean block generation time without mining gap, $(\gamma w)^{-1}$, is set to $\bbE[\bar B]=1$. All revenue functions yield an expected revenue of $\bbE[R(\bar B)]= 3$. We assign 30\% of the computing power to the small miner ($\gamma_1= 0.3 \gamma$) and 70\% to the large miner ($\gamma_2 = 0.7 \gamma$). The operating cost rate and winning rate are $c = 2\gamma^{-1}$ and $w = \gamma^{-1}$, respectively.

Figure~\ref{subfig:R1} shows the graphs of three different revenue functions together with the mean block generation time. On the right are there optimal responses of the two miners. As noted in Lemma~\ref{lem:simple_s}, the small miner activates earlier in all cases. In addition, depending on the shape of the revenue function, the mining gap is smaller if the initial reward is larger (linear versus log). And in the case of the same initial reward (linear versus square-root), the mining gap is smaller if the growth of the revenue is steeper at the beginning. 

    \begin{figure}[htb!]
     \centering
     \begin{subfigure}{0.48\textwidth}
        \centering
         \includegraphics[width=\textwidth]{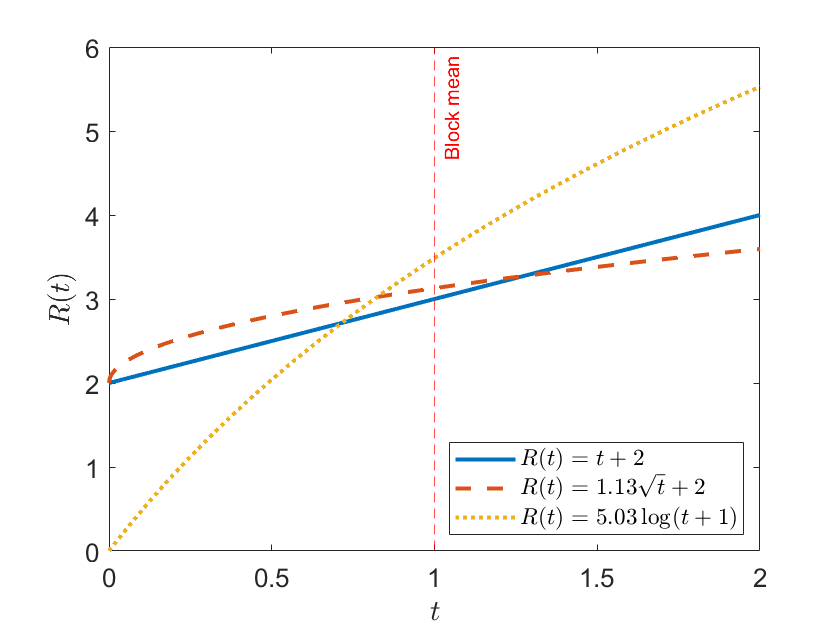}
         \caption{Three different revenue functions.}
         \label{subfig:R1}
     \end{subfigure}
     \hfill
     \begin{subfigure}{0.48\textwidth}
        \centering
         \includegraphics[width=\textwidth]{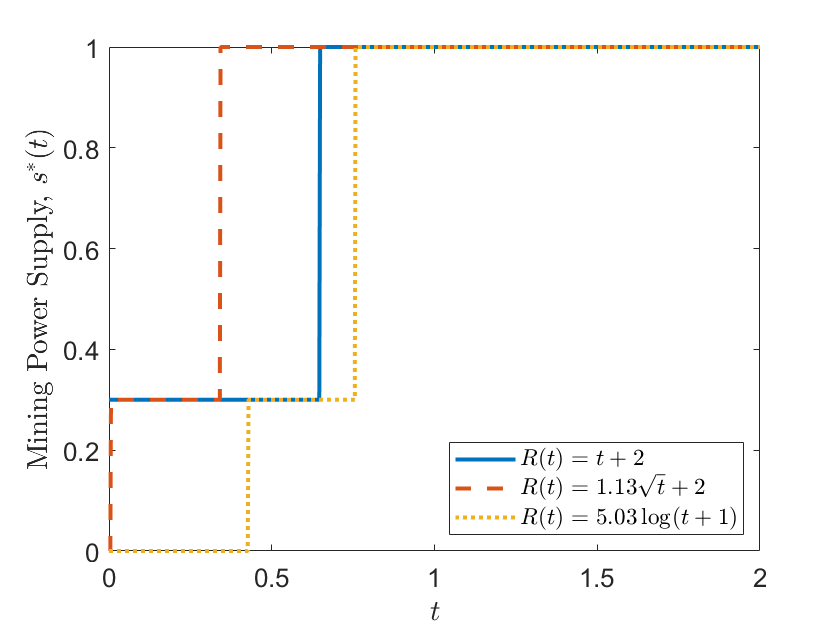}
         \caption{Optimal responses of miners}
         \label{subfig:R2}
     \end{subfigure}
     \caption{The optimal responses of the two miners under three types of revenue functions}
    \label{fig:2.5.R comparision}
    \end{figure}

\vs 
\noindent 
{\bf Steady-state Analysis.} For given parameters and an initial winning rate $w_0$, we apply the DAA to generate a sequence of winning rates $w_i$ and check for convergence to a steady state. Our numerical example above is designed to induce this convergence of winning rates. As a result, for a given value of $c$ and the revenue function $R(t)$, we obtain the optimal responses of the two miners at the Nash equilibrium in the steady-state. Using $R(t) = t+2$, we vary dominance of the large miner from $\gamma_2 = 0.5 \gamma$ to $0.95 \gamma$, with $c = 2\gamma^{-1}$. 

Figure~\ref{subfig:Dom3} shows the steady-state winning rate increasing with the large miner's dominance. Figure~\ref{subfig:Dom1} demonstrates that despite higher winning rates, the large miner delays activation while the small miner starts earlier. Figure~\ref{subfig:Dom2} shows relatively stable optimal utilities for both miners across dominance levels. This is the combined effect of higher winning rates and the distinct optimal responses of the two miners. 
The lower right panel, Figure~\ref{subfig:Dom4}, provides some additional information. For each dominance level of the large miner, we find the steady-state Nash equilibrium, and then use that to calculate the following equation to get the utilization level of the total computing power per block generation. Mathematically, utilization is defined as $\gamma^{-1} \int_0^\infty s^*(t) f_B(t)\rd t$. 
The total utilization of mining rigs is shown to decrease as the dominance level increases. And it is only around 60\% at $\gamma_2 = 0.5 \gamma$ and drops below $0.1 \gamma$ if the dominance is much greater. This finding resonates with some of the previous findings in the literature on low utilization of system resources in blockchains, as shown in \cite{tsabary2018gap}.

\begin{figure}[htb!]
     \centering
     \begin{subfigure}{0.48\textwidth}
         \centering
         \includegraphics[width=\textwidth]{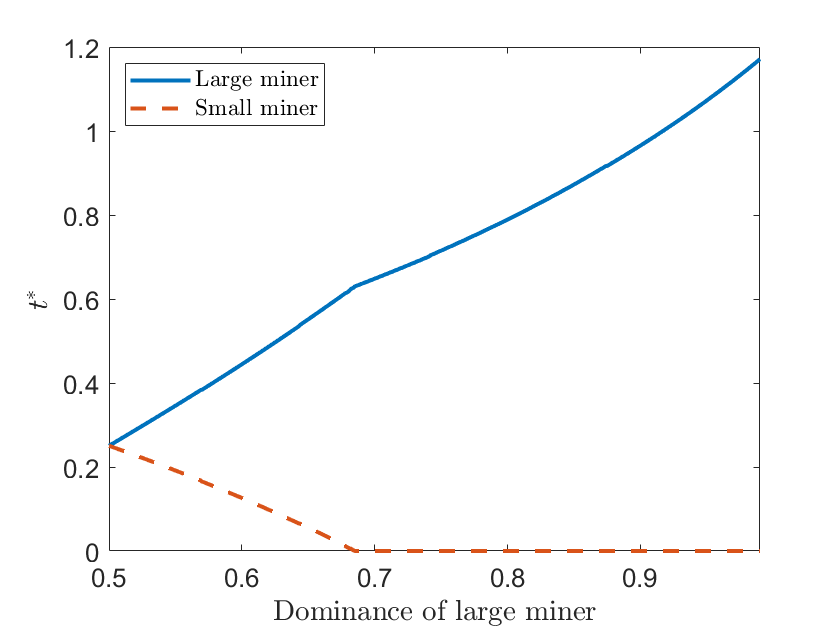}
         \caption{Optimal starting   time}
         \label{subfig:Dom1}
     \end{subfigure}
     \hfill
     \begin{subfigure}{0.48\textwidth}
         \centering
         \includegraphics[width=\textwidth]{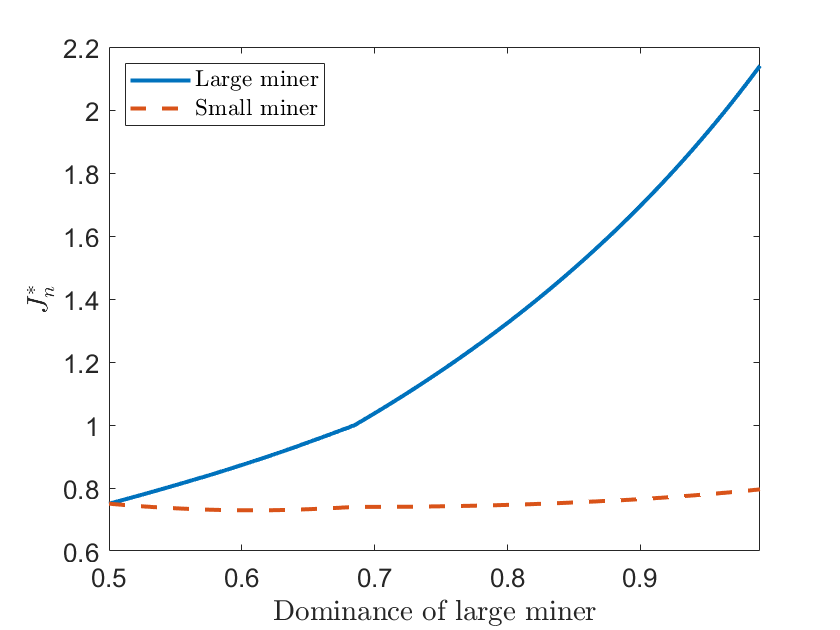}
         \caption{Utility of miners}
         \label{subfig:Dom2}
     \end{subfigure}
     \hfill
     \begin{subfigure}{0.48\textwidth}
         \centering
         \includegraphics[width=\textwidth]{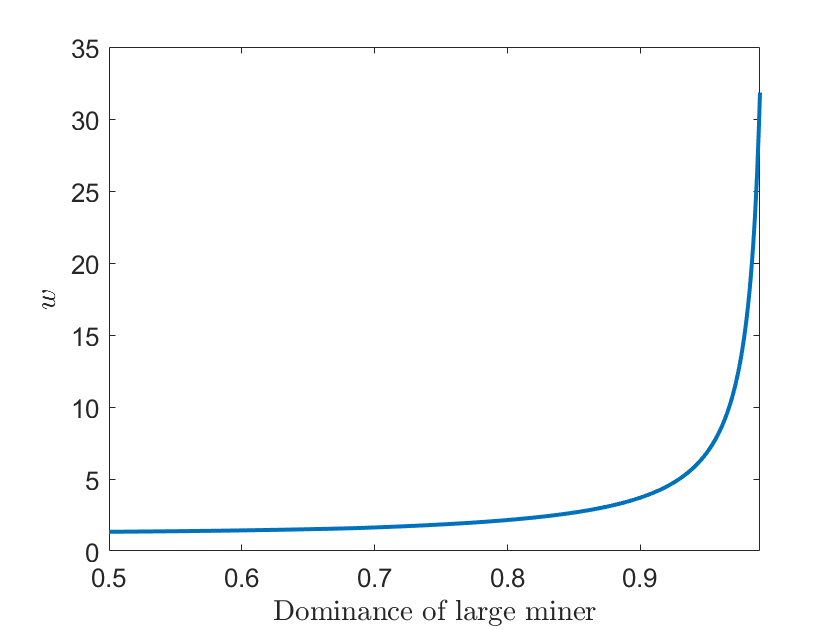}
         \caption{Difficulty}
         \label{subfig:Dom3}
     \end{subfigure}
     \hfill
     \begin{subfigure}{0.48\textwidth}
         \centering
         \includegraphics[width=\textwidth]{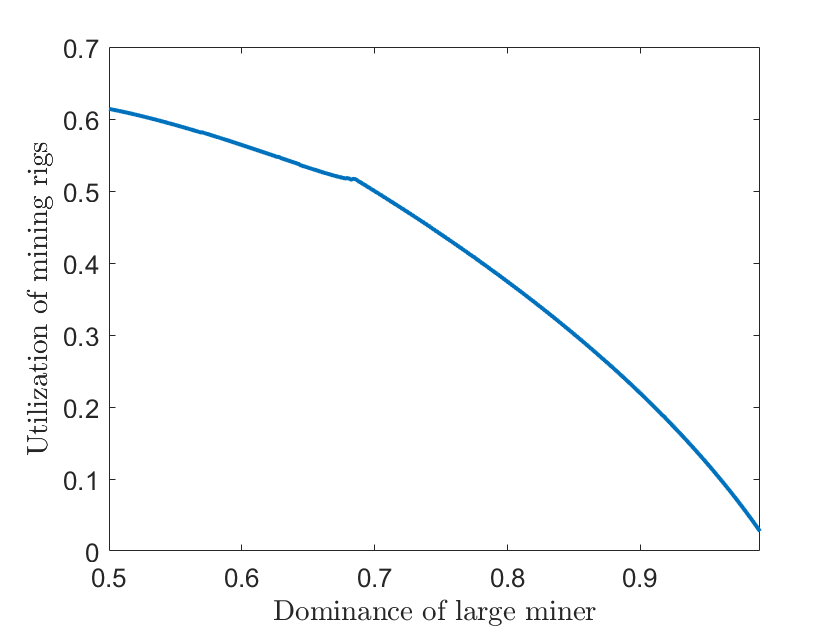}
         \caption{Utilization of mining rigs}
         \label{subfig:Dom4}
     \end{subfigure}
          \hfill
     \caption{Behaviors of miners as the dominance of the large miner varies. At each dominance level, optimal responses are computed and the system is run until the sequence of difficulties converges.}
     \label{fig:2.5.Dominance}
\end{figure}
\vs
\noindent
{\bf Effect of Cost.} We examine the effect of varying operating cost rate c, fixing the large miner's dominance at 70\% and using $R(t) = t+2$. Again, the DAA is applied until the sequence of winning rates converges. The simulation outcomes shown in the four panels of Figure~\ref{fig:2.5.cost} are as expected. A higher cost rate means decreased profits and thus the optimal responses of the miners are to delay operation while winning rates are settled higher. Nevertheless, the decrease in the optimal utilities is inevitable. Under-utilization of the computing power prevails as well.

\begin{figure}[htb!]
     \centering
     \begin{subfigure}{0.48\textwidth}
         \centering
         \includegraphics[width=\textwidth]{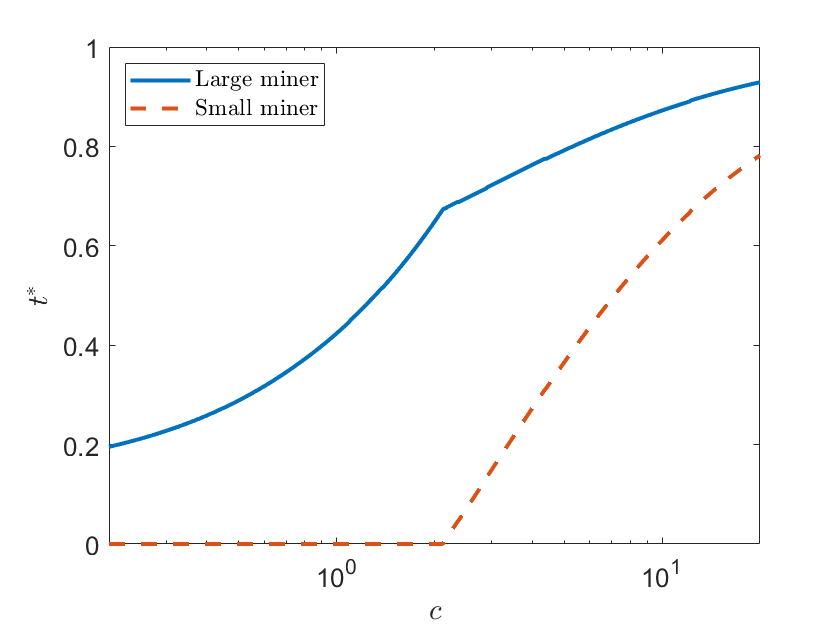}
         \caption{Optimal starting time}
         \label{subfig:cost1}
     \end{subfigure}
     \hfill
     \begin{subfigure}{0.48\textwidth}
         \centering
         \includegraphics[width=\textwidth]{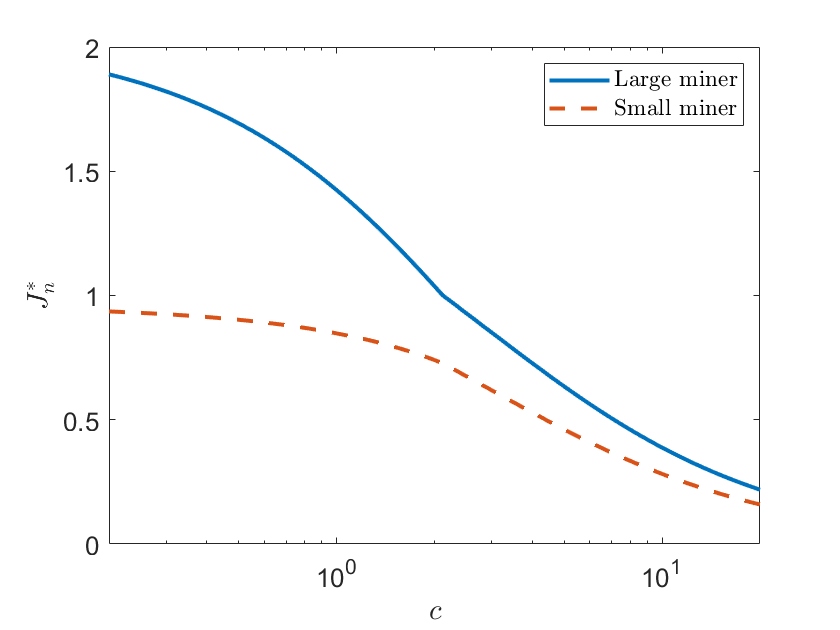}
         \caption{Utility of miners}
         \label{subfig:cost2}
     \end{subfigure}
     \hfill
     \begin{subfigure}{0.48\textwidth}
         \centering
         \includegraphics[width=\textwidth]{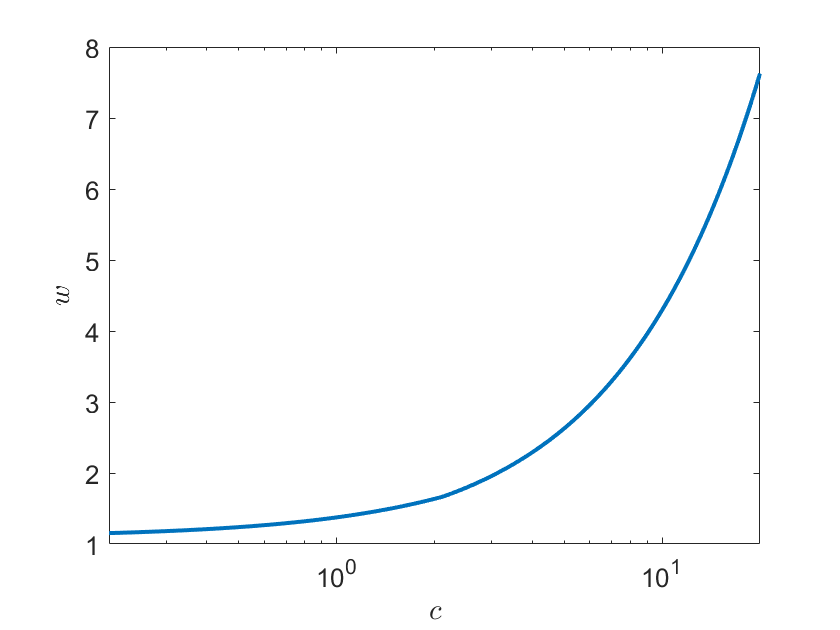}
         \caption{Difficulty}
         \label{subfig:cost3}
     \end{subfigure}
     \hfill
     \begin{subfigure}{0.48\textwidth}
         \centering
         \includegraphics[width=\textwidth]{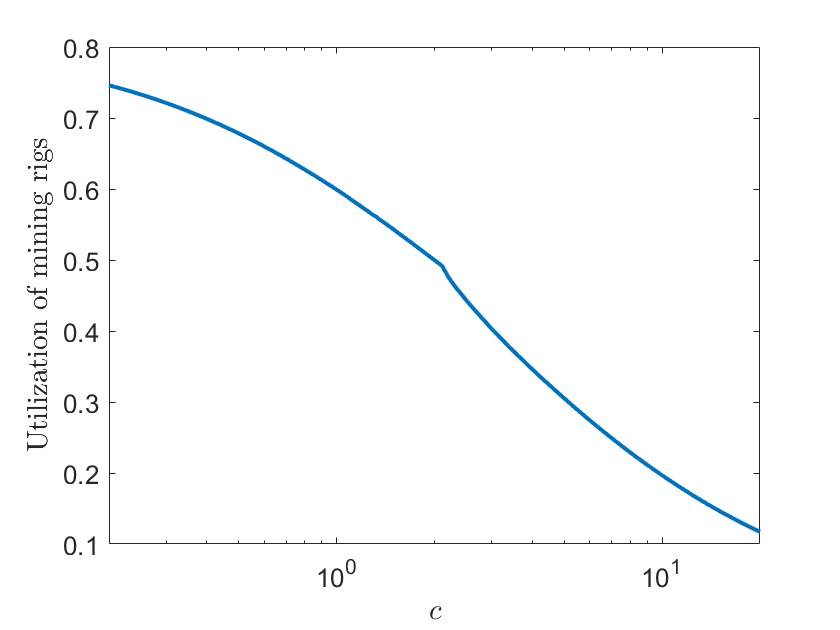}
         \caption{Utilization of mining rigs}
         \label{subfig:cost4}
     \end{subfigure}
     \caption{Behaviors of the two miners as the cost $c$ varies. Naturally, higher costs cause adverse effects for mining profitability.}
     \label{fig:2.5.cost}
\end{figure}

\subsection{Coalition}\label{subsec:coalition}

The issue of centralization, that is, the actual phenomenon that miners form a large mining pool which acts as a single miner, has been discussed in the literature \citep{eyal2015miner, eyal2018majority}. The fundamental principle of blockchain systems is to maintain a decentralized network of miners and users without a single governing entity. Nevertheless, our analysis below reveals that the concentration of hashing power can be profitable due to strategic mining gaps. For instance, in Figure~\ref{subfig:Dom2} is it not difficult to see that the sum of the utilities of large and small miners increase as the dominance of the large miner increases. This is achieved while the utility of the small miner is not much affected.

To see this, let us consider miners $n$ and $m$ in a given network where $t_i^*$'s are optimal responses derived in Theorem~\ref{thm:nash}. Assume that at least one of $t_n^*$ and $t_m^*$ is positive and that miners $n$ and $m$ form a coalesced mining pool, which leads to a new network. Then the utility of this coalition is 
$$
J_{n,m}^* = \max \int (wR(t) - c)s_{n,m}(t) e^{- x(t)} \rd t
$$
where $s_{n,m}(t)$ is the operating strategy of the coalition with the maximal capacity $\gamma_n+\gamma_m$. Since the set of feasible strategies subsumes the strategy of operating $\gamma_n$ at $t_n^*$ and $\gamma_m$ at $t_m^*$, it is obvious $J_{n,m}^* \geq J_n^* + J_m^*$. More interestingly, one can show that optimal starting times of miners in the new network are greater, if not equal, than optimal starting times in the original network. 

Lemma~\ref{lem:nash compare} below is helpful in this regard. Recall that optimal responses increase in the increasing order of capacities in Lemma~\ref{lem:simple_s} and Theorem~\ref{thm:nash}. For miners with capacities greater than $\gamma_n + \gamma_m$, start times are not changed. Let us denote the miner whose capacity is the largest among miners with capacities smaller than $\gamma_n + \gamma_m$ by $\triangle$. It is not difficult to argue that the start time of miner $\triangle$ in the original network is smaller than that of the coalition in the new network, thanks to the lemma below. This in turn implies that the start time of miner $\triangle$ in the new network is greater than the start time of the same miner in the original network. Applying the same argument to the miners in the decreasing order of capacities, we arrive at the desired conclusion.

\begin{lem}
    \label{lem:nash compare}
    Consider a miner with capacity $\gamma_n$ in two different networks of miners. For the respective Nash equilibria $\{s_i^*\}$ and $\{\tilde s_i^*\}$ as in Theorem~\ref{thm:nash}, suppose that miner $n$'s capacity in one network is relatively smaller at all times than the other, say $ s_{-n}^*(\cdot) \leq \tilde s_{-n}^*(\cdot)$, then the optimal response time in the former network is greater, or simply $ t_n^* \geq \tilde t_n^*$. 
\end{lem}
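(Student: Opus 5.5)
The idea is to reduce the comparison of start times to a comparison of switching functions. By Theorem~\ref{thm:nash}, in each network miner $n$ uses $\gamma_n\mathbf{1}_{\{t\ge t_n^*\}}$, and $t_n^*$ (resp.\ $\tilde t_n^*$) is the unique zero of the corresponding switching function $\sigma_n$ (resp.\ $\tilde\sigma_n$). If there is no zero, the start time is $0$ by Lemma~\ref{lem:simple_s}.

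The first step is to write $\sigma_n(t)$ for $t\ge t_n^*$ without the adjoint. Plug the form of $\psi$ into the definition of $\sigma_n$ and integrate by parts, using $\frac{d}{du}e^{-wx^*(u)}=-w s(u)e^{-wx^*(u)}$. On $[t_n^*,\infty)$ we have $s_n^*=\gamma_n$, and the boundary term vanishes at infinity by Assumption~\ref{assumption:liminf}. This gives
\[
\sigma_n(t)=\int_t^\infty \bigl(R'(u)-(wR(u)-c)s_{-n}^*(u)\bigr)e^{-wx^*(u)}\,\rd u=\int_t^\infty \xi_n(u)e^{-wx^*(u)}\,\rd u ,
\]
and the analogous formula holds for $\tilde\sigma_n$ with $\tilde\xi_n$ and $\tilde x^*$. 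This is the identity behind the display after Theorem~\ref{thm:nash}.

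The second step is the comparison. Since $s_{-n}^*\le\tilde s_{-n}^*$, we get $\xi_n\ge\tilde\xi_n$ wherever $wR\ge c$. By Lemma~\ref{lem:long_term_s}, the relevant region lies where $wR-c\ge0$, at least past the point where $\xi_n$ changes sign. Moreover, $\xi_n$ is non-increasing for $t$ large: $R'$ is non-increasing by concavity, and $(wR-c)s_{-n}^*$ is non-decreasing once $wR\ge c$, because $s_{-n}^*$ is a non-decreasing step function. So $\xi_n$ changes sign at most once, from $+$ to $-$. The same holds for $\tilde\xi_n$, and the crossing point of $\tilde\xi_n$ comes no later than that of $\xi_n$.

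The third step concludes by contradiction. Suppose $t_n^*<\tilde t_n^*$. The zero $t_n^*$ of $\sigma_n$ is characterized by the tail integral of $\xi_n$, weighted by $e^{-wx^*}$, vanishing there. For $t\ge t_n^*$ the weights differ only by the factor $\exp\bigl(-w\int_{t_n^*}^t (s-\tilde s)\bigr)$, which is non-decreasing in $t$ since $s\le\tilde s$ past both start times. I would normalize both weights at $\tilde t_n^*$ and use the single-crossing property with a Chebyshev/Karlin-type monotone-ratio argument. The conclusion is that $\tilde\sigma_n(t)<0$ wherever $\sigma_n(t)\le0$ on the positive tail, that is, for $t\ge t_n^*$. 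Then $\tilde\sigma_n$ has no zero at or after $t_n^*$, contradicting $\tilde t_n^*>t_n^*$.

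The main obstacle is exactly this weighted comparison. The weights $e^{-wx^*}$ and $e^{-w\tilde x^*}$ differ, and increasing $s_{-n}$ both lowers $\xi$ and reweights the tail. The fix is to show that the tilt puts more mass on later times, where $\xi$ is negative, which is favorable. Making this rigorous, including the boundary case $\tilde t_n^*=0$, is where I expect the work to be.
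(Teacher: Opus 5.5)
Your Steps 1--2 reproduce what is in fact the paper's entire argument. The paper notes $\tilde\xi_n\le\xi_n$, concludes that $\tilde\xi_n$ changes sign earlier, and simply asserts that this ``in turn leads to a smaller $\tilde t_n^*$.'' You are right that this last implication is not automatic. However, the mechanism you propose for Step~3 points the wrong way, so as it stands the proof has a gap.

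First, fix the sign: $\sigma_n(t)=-w\int_t^\infty\xi_n(u)e^{-wx^*(u)}\,\rd u$. With your sign, ``$\sigma_n\le 0$ for $t\ge t_n^*$'' would contradict the bang-bang rule, so your Step~3 only makes sense as a statement about the tail integral.

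The real problem is the direction of the tilt. Once both miners are on, $\tilde s\ge s$, so the ratio $h=e^{-w\tilde x^*}/e^{-wx^*}$ (normalized to $1$ at the reference time $t$) is non-increasing. The more competitive network therefore discounts the late region, where $\xi_n<0$, more heavily, not less. Since $\xi_n$ changes sign from $+$ to $-$ at some $\hat\tau$, the single-crossing (Chebyshev) bound gives, for $t\le\hat\tau$,
\[
\int_t^\infty\xi_n e^{-wx^*}h\,\rd u\;\ge\; h(\hat\tau)\int_t^\infty\xi_n e^{-wx^*}\rd u .
\]
So the reweighting alone pushes toward a later start, which opposes the effect of $\tilde\xi_n\le\xi_n$. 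Integration by parts has split an unambiguous comparison into two competing effects, and no monotone-ratio argument nets them. Moreover, on $[t_n^*,\tilde t_n^*)$ the ratio is not even monotone, because miner $n$'s own hash differs there.

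The missing idea is to compare before integrating by parts. For a start time $t$ against a given $s_{-n}$, put
\[
V(t)=\gamma_n\int_t^\infty (wR(u)-c)\exp\Bigl(-w\int_t^u\bigl(s_{-n}(v)+\gamma_n\bigr)\rd v\Bigr)\rd u,\qquad \Phi(t)=wR(t)-c-wV(t).
\]
The following identities hold:
\begin{itemize}
\item The payoff from starting at $t$ is $J(t)=e^{-wx_{-n}(t)}V(t)$, and differentiating gives $J'(t)=-\gamma_n e^{-wx_{-n}(t)}\Phi(t)$.
\item At the start time, $\sigma_n(t_n^*)=e^{-wx^*(t_n^*)}\Phi(t_n^*)$.
\item Your Step~1 computation shows $\Phi(t)=-w\int_t^\infty\xi_n(u)\exp\bigl(-w\int_t^u(s_{-n}+\gamma_n)\bigr)\rd u$, hence $\Phi'=w\xi_n+w(s_{-n}+\gamma_n)\Phi$.
\end{itemize}

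Two facts then finish the proof.
\begin{itemize}
\item[(a)] By Lemma~\ref{lem:long_term_s}, $wR-c\ge 0$ on $[t_n^*,\infty)$. So $\tilde s_{-n}\ge s_{-n}$ directly gives $\tilde V(t_n^*)\le V(t_n^*)$, that is, $\tilde\Phi(t_n^*)\ge\Phi(t_n^*)\ge 0$. This is where the two effects are netted.
\item[(b)] At any zero of $\tilde\Phi$, the derivative has the sign of $\tilde\xi_n$. Also $\tilde\Phi>0$ from the point where $\tilde\xi_n$ turns negative onward. Hence $\tilde\Phi$ crosses zero at most once and only upward, so $\tilde J$ is quasi-concave.
\end{itemize}

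Therefore $\tilde\Phi\ge 0$ on $[t_n^*,\infty)$, and the unique zero of $\tilde\sigma_n$, if any, lies in $[0,t_n^*]$. This gives $\tilde t_n^*\le t_n^*$, including the boundary case $t_n^*=0$, with no contradiction hypothesis needed. Evaluating both networks with miner $n$ switching on at the same $t$ is also what removes the differing own-hash issue on $[t_n^*,\tilde t_n^*)$.
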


This result implies that a miner would start earlier if the environment is more competitive or if competitors have larger capacities. Figure~\ref{subfig:Dom1} shows this result graphically in the case of two miners. If miners form coalitions instead of installing new capacities, the combined utilities are greater. However, miners choose to delay start times, and there may be significant impacts on the efficiency and resource utilization of the network, potentially leading to increased transaction fees and reduced network throughput. 

Lastly, we would like to point out that a more powerful large miner does not necessarily extract profit from a smaller miner, contrary to intuition. We define the utility per unit capacity as $J_n^*/\gamma_n$. Figure~\ref{subfig:Dom5} illustrates that this utility per capacity increases for the small miner as hashing power concentrates on the large miner. The utility per capacity for the large miner only begins to increase at the point where the small miner operates without gaps (i.e., $t_2^* = 0$), marked by the vertical line in the figure.
This phenomenon can be explained as follows: As the large miner becomes more powerful, larger mining gaps are created, leading to lower difficulties. This benefits the small miner who starts working earlier. The large miner only benefits from improved unit profitability when the difficulty is sufficiently low so that the small miner no longer needs to create a mining gap.

\begin{figure}
\centering
\begin{minipage}{.5\textwidth}
  \centering
  \includegraphics[width=1\linewidth]{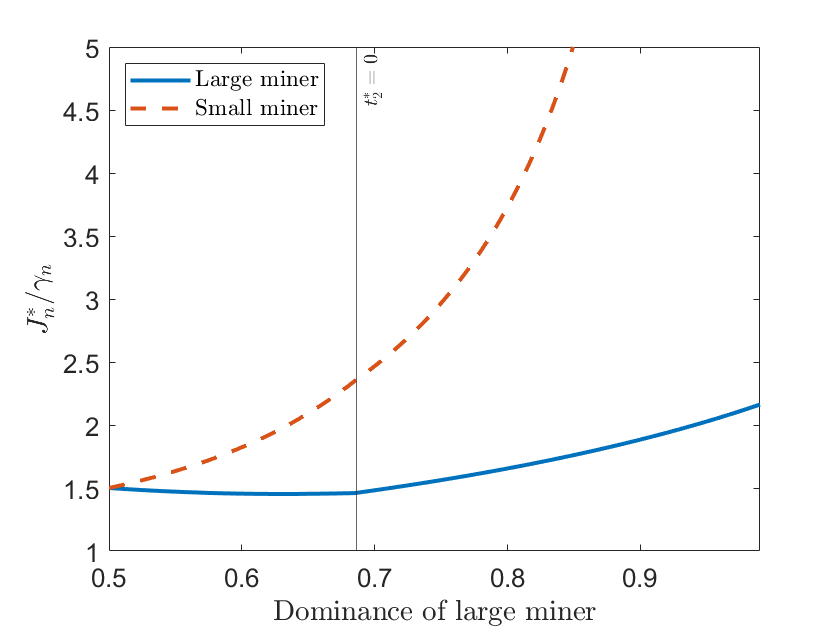}
  \captionof{figure}{$J_n^*/\gamma_n$ at steady-state equilibrium}
  \label{subfig:Dom5}
\end{minipage}%
\begin{minipage}{.5\textwidth}
  \centering
  \includegraphics[width=.9\linewidth]{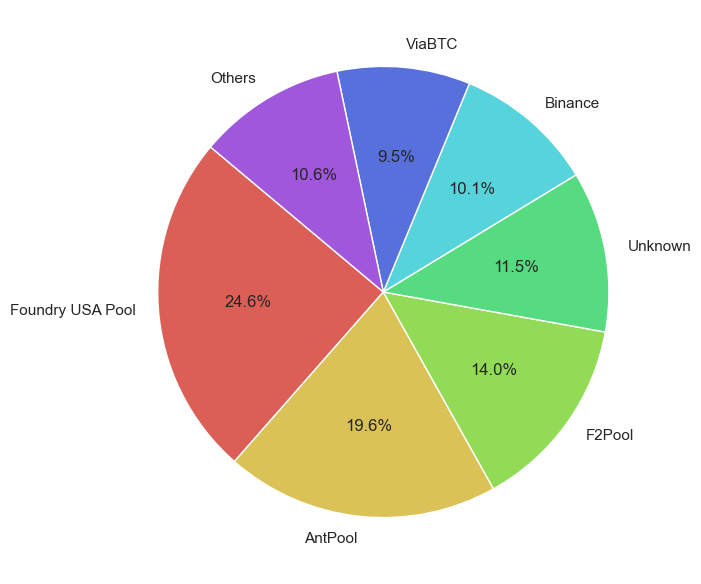}
  \captionof{figure}{Mining pools and their shares}
  \label{fig:2.6.miningpool}
\end{minipage}
\end{figure}

\section{Bitcoin Network}\label{sec:Case Study}

This section applies our model to the Bitcoin network, using data from January 1, 2022, to December 31, 2023. We calibrate the model and explore its implications for the Bitcoin system. The top six mining pools own 89.4\% of the total network capacity, with the largest pool having about 24.6\%. See Figure~\ref{fig:2.6.miningpool}. For simplification, we assume equal capacities for $N=7$ miners, each computed as 70,962,243 terahashes per second (TH/s).


The model incorporates block generation time $F_B$, revenue function $R$, cost rate $c$, and winning rate $w$. Figure~\ref{fig:2.6.ecdf of block} presents the empirical distribution function of 2016 block generation times, with the red line showing the fitted exponential distribution. This visual inspection supports modeling block generation with exponential distributions, assuming no mining gap. Figure~\ref{fig:2.6.hashrate} illustrates fluctuations in computing capacity (TH/s), including sharp declines due to factors like China's mining ban. Close examination reveals no strategic or intentional mining gaps. The estimated total network capacity on December 31, 2023 was $\gamma = 496,735,701$ TH/s.
The total capacity $\gamma$ is estimated using the Bitcoin network's target block generation time of 10 minutes and difficulty information from the distributed ledger. This difficulty determines the probability $p$ of successful mining per hash operation. The network requires $p^{-1}$ hashes or $p^{-1} 10^{-12}$ TH for new block generation, translating to $(\gamma p)^{-1} 10^{-12}$ seconds without mining gaps. Equating this with average sample block generation times yields $\gamma$. With block generation time modeled by an exponential distribution (rate $w \gamma$), we have $w = p ; 10^{12}$.

\begin{figure}
\centering
\begin{minipage}{.48\textwidth}
  \centering
  \includegraphics[width=1\linewidth]{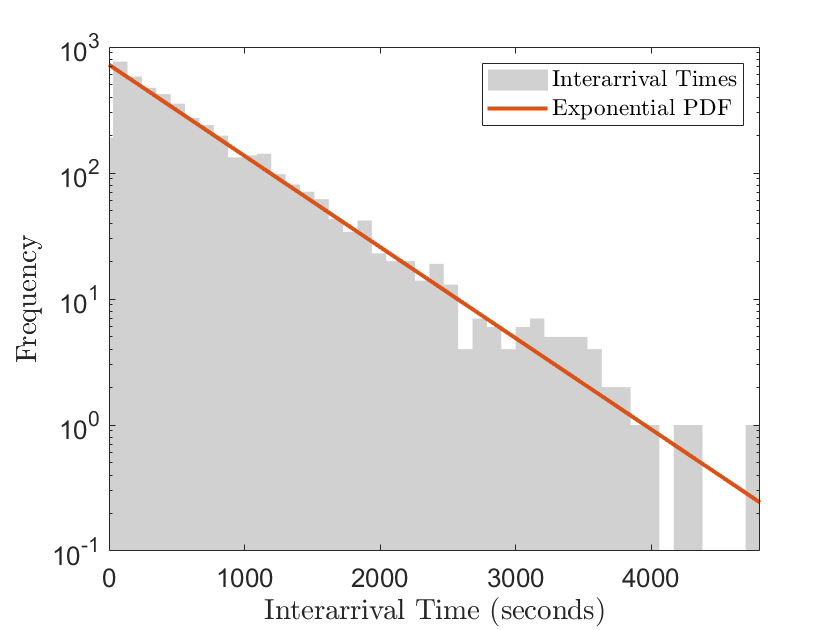}
  \captionof{figure}{Empirical distribution function of block generation time in a log scale.}
  \label{fig:2.6.ecdf of block}
\end{minipage}%
\hspace{5mm}
\begin{minipage}{.48\textwidth}
  \centering
  \includegraphics[width=1\linewidth]{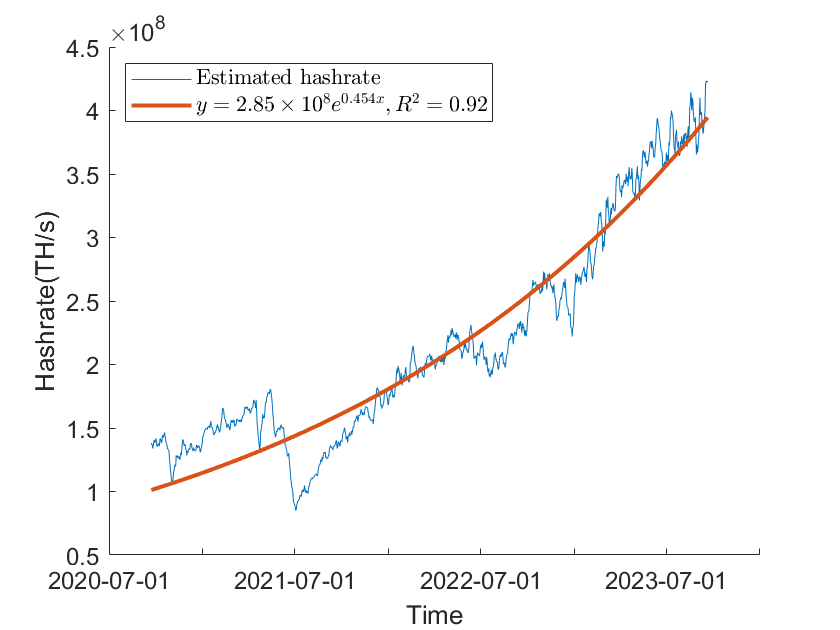}
  \captionof{figure}{Estimated hashrates and the fitted curve.}
  \label{fig:2.6.hashrate}
\end{minipage}
\end{figure}

 

As for the revenue function, actual revenues must be obtained from individual nodes. Since this is hardly feasible, we utilize the real-time mempool data published on the {\tt mempool.observer} site. Transactions arriving at the node operated by the site are shown in Figure~\ref{fig:2.6.Best1MB}. As already noted in the introduction, miners include transactions with the highest fees per unit size (in terms of {\tt vByte}) in a block with a 1MB limit. The vertical axis of the figure shows those best fees (normalized by transaction sizes). Periodic sharp declines happen at moments of block generation. We see that Assumption~\ref{assumption:revenue} reasonably holds. For the sake of tractability, we assume a linear revenue $R(t) = \alpha t + \beta$.
To estimate the parameter $\alpha$, we face the limitation of not having access to the complete historical queue. Thus, we approximate $\alpha$ by using daily transaction fees and the number of new blocks per day. For $\beta$, we use the fixed reward and the market price of a single Bitcoin.
It is important to acknowledge the bias introduced by the presence of up to 1MB of transaction fees in the mempool immediately after block generation. Although these fees contribute to $\beta$, they are excluded from its estimation due to their negligible proportion relative to the block reward. Conversely, while not part of $\alpha$, their inclusion in its calculation leads to a slight overestimation. 

    \begin{figure}[htb!]
        \centering
        \includegraphics[width=\textwidth]{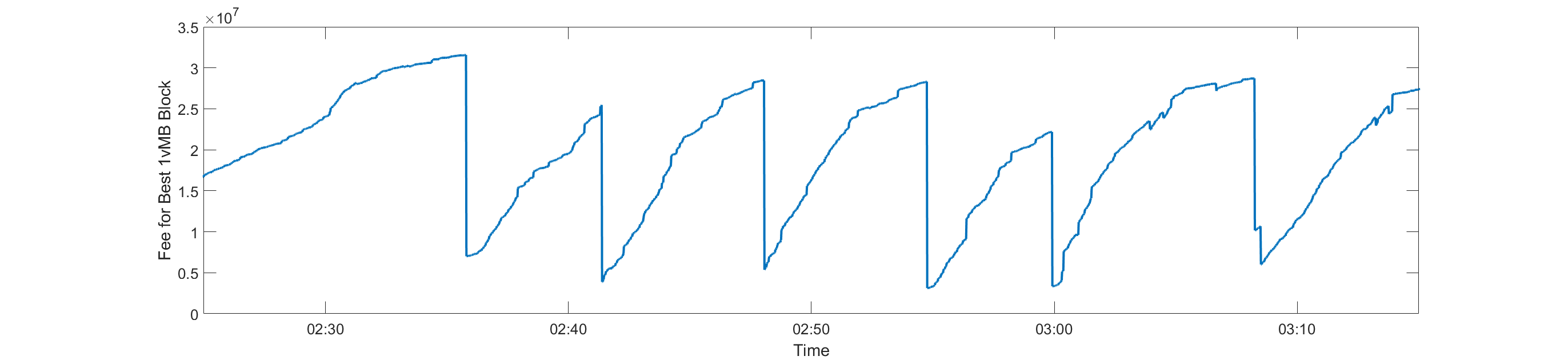}
        \caption{Sum of the fees included in the best 1MB.}
        \label{fig:2.6.Best1MB}
    \end{figure}
    
The last component is the operating cost parameter $c$. There are bottom-up and top-down methods for estimating this. The former means we estimate the cost from electricity rates and the performance of mining machines, but this approach has the disadvantage that there is too much variation among machines, countries, and companies. The other top-down approach, instead, uses break-even points reported by miners. 
Based on \cite{hayes2019}, average production costs of Bitcoin have been reported, e.g., by \cite{jpmorgan}. JP Morgan's estimates, revised in September 2023, come from the data of the Cambridge Bitcoin Electricity Consumption Index, showing that it costs around \$18,000 to mine one Bitcoin. Given the amount of BTC per block, the operating cost is set equal to $cw^{-1}=\$112,500$.
    
Our previous analysis in the case of homogeneous miners with an affine revenue function is applicable to this blockchain system with those input values. In particular, we find that there is no mining gap if the economic reward right after block generation is greater than or equal to $\$112,500 + \alpha\mu/6$. If $\beta$ does not exceed this threshold, then we have two cases. If $\beta$ is smaller than $\alpha\mu$, then there is a steady state winning rate $w$ which is $\varepsilon$-stable for suitable $\varepsilon$. But if $\beta$ is greater than $\alpha\mu$ (but smaller than the threshold for mining gap), then the winning rates by the DAA may not converge and the system does not achieve the target block generation time. Figure~\ref{fig:2.6.Real Data alpha beta plane} shows the division of $\alpha\mu$-$\beta$ plane into the three regions of `no mining gap', `mining gap with no system stability', and `stable mining gap', and the corresponding data points for three years. Actual $\alpha\mu$ values are computed as daily transaction fees divided by the number of blocks for each date, and $\beta$ is 6.25 BTC multiplied by the average daily price of Bitcoin. Note that the dotted lines are computed using the data from December 31, 2023, and the production cost uses the estimate from JP Morgan in September 2023. 

   \begin{figure}[htb!]
        \centering
        \includegraphics[width=0.7\textwidth]{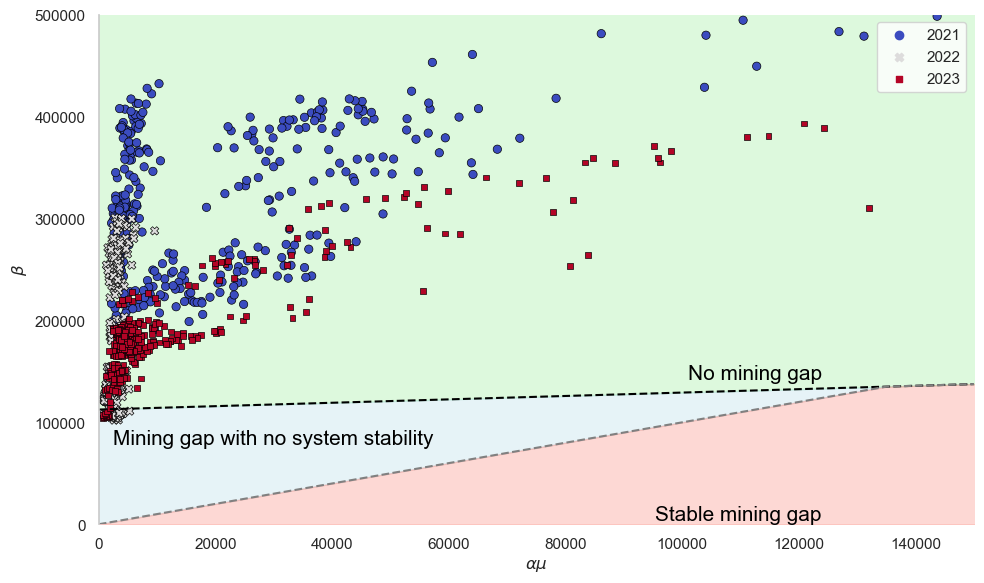}
        \caption{Illustration of the stability of the Bitcoin network. The plane is divided into three regions and the historical data of Bitcoin prices are shown.}
        \label{fig:2.6.Real Data alpha beta plane}
    \end{figure}

According to our analysis, the figure shows that the Bitcoin network in 2022 and 2023 almost fell within the region of the mining gap with no system stability. The primary motivation for mining, predominantly block rewards rather than transaction fees, is expected to bring down the current point of $(\alpha\mu,\beta)$ particularly at the time of block reward halving. However, we also note that the operating cost in our analysis is converted from dollars to BTC. If the operating cost in fiat money is relatively small compared to Bitcoin or if the Bitcoin price in fiat money increases, then the system may remain in the no mining gap region. 
The graph also suggests that even allowing for mining gaps, maintaining system stability would require transaction fee levels (above \$135,000 per block) that are currently difficult to achieve. Such a goal seems only possible with a highly congested transaction queue, extracting greater revenues from users. 

\section{Concluding Remarks}\label{sec:Conclusion}

In this paper, we analyzed the formation and potential ramifications of mining gaps, a phenomenon that might arise within Bitcoin and analogous cryptocurrencies based on the proof-of-work consensus protocol. Through theoretical analysis together with numerical experiments, we have not only corroborated the findings in the literature about mining gaps but have also derived specific conditions for such gaps. 
Our closed-form expression, particularly in the case of homogeneous miners with an affine revenue function, provided plausible scenarios that result in mining gaps. They are, in brief, contingent upon the sustainability of economic rewards for miners such as block rewards, transaction fees, and operating costs. 

Other findings of our study pointed to the potential instability of a blockchain system when mining gaps are combined with difficulty adjustment algorithms. We showed that the intended stable dynamics of the network might break down in the sense that the sequence of mean block generation times based on updated winning rates (or difficulties) may not converge to the  target block generation time, particularly when transaction fees are low relative to block rewards. 
Our analysis further shed light on the formation of coalitions. We found that coalesced miners have higher profitability and that coalitions may lead to larger mining gaps across the network. 
%
In response to these challenges, a shift in protocol design may be necessary with the objective of attenuating the effects of mining gap. Actually, some fundamental changes in reward mechanisms have been proposed, including the fee-burning mechanism used by Ethereum. Other alternatives are to modify the difficulty adjustment algorithm or to increase transaction throughput per hour.

This study is not without limitations. Most notably, we assume that miners base their decisions on the target cryptocurrency instead of fiat money, earning the two sources of incentives: block rewards and transaction fees. This makes sense in small time windows. However, the price appreciation of the cryptocurrency is another important source of economic rewards. Thus, consideration of such a factor would be a highly valuable future research direction.  Another limitation is that we treat transaction fees as an exogenous component. Actual transactions from users form a priority queueing game and respond to the dynamics of block generations, as discussed in \cite{huberman2021monopoly}. There could be externalities as well that incur more or less engagement of users in the network, such as the price spike of the cryptocurrency or security breaches of the network. Nevertheless, we believe our analysis offers a convenient tool for gauging potential problems and future directions when such an environment changes.

\section*{Acknowledgement}
    This work was supported by the National Research Foundation of Korea (NRF) funded by the Korea Government (MSIT) under Grant RS-2023-00278082.  

\section*{Disclosure of interest}
    There are no relevant financial or non-financial competing interests to report.

\bibliographystyle{plainnat}
\bibliography{Ch2_ref}


\begin{appendix}
\section{Proofs of Lemmas}

\begin{proof}[Proof of Lemma~\ref{lem:existence}]
    Following \cite{dmitruk2005existence}, we verify the total of eight conditions for \eqref{eq:target}, A1 to A8. Among them, A1 to A4, and A6 as well as A7 are trivially satisfied. One condition A5, which is referred to as Filippov condition, translates to that there exists some constant $k$ such that $x s_n \leq k ( x^2 + 1)$ for any $t \geq 0$ and for any $x \in [0, \gamma t]$ and $s_n \in [0, \gamma_n]$. This condition holds if we set $k=\gamma$. 
    
    The last condition A8 requires that an integral $\int_{T_1}^{T_2} (w R(t) - c) s_n(t) e^{- w x(t)} \rd t$ called functional piece converges to zero uniformly in all admissible strategies as $T_1$ and $T_2$ go to infinity. In fact, the limiting condition in Assumption~\ref{assumption:revenue} says $R(t) > c/w$ for sufficiently large $T_1$ and $T_2$. We also have $s(t) \geq \eta > 0$ for all large $t$ values thanks to Assumption~\ref{assumption:liminf}. Let us use the symbol $\tau$ to denote the time beyond which those two inequalities hold. Then, we note 
    \begin{eqnarray*}
    \left| \int_{T_1}^{T_2} (w R(t) - c) s_n(t) e^{- w x(t)} \rd t \right| &\leq&  \int_{T_1}^{T_2} (w R(t)  - c) \gamma_n e^{-w \eta (t- \tau)} \rd t \\
    & \leq & \int_{T_1}^{T_2} ( w R(\tau) + R'(\tau) (t - \tau) - c) \gamma_n e^{-w \eta(t - \tau)} \rd t 
    \end{eqnarray*}
    where we use the fact that $s_n(t)$ is bounded above by $\gamma_n$ and $R(\cdot)$ is concave. The right hand side can be readily shown to be expressed as 
    $$
    A \left[ e^{- w \eta T_1} - e^{-w \eta T_2} \right] + B \left[ T_1 e^{-w \eta T_1} - T_2 e^{-w \eta T_2} \right].
    $$
    Since this value converges to zero as $T_1, T_2$ increase, A8 holds. The existence of an optimal control follows from the main theorem of \cite{dmitruk2005existence}. 
    \qed
\end{proof} 
\begin{proof}[Proof of Lemma~\ref{lem:long_term_s}]
    If $w R(0) < c$, there exists a unique $\tau$ such that $wR(\tau) = c$. For any admissible control $s_n$, define $\tilde s_n(t) = s_n(t) {\bf 1}{{ t \geq \tau }}$ and their corresponding utilities are denoted by $J_n$ and $\tilde J_n$. We show $J_n \leq \tilde J_n$:
    \begin{eqnarray*}
    J_n - \tilde J_n &=& \int_0^\infty (w R(t) -c) s_n(t) e^{-w x(t)} \rd t  - \int_0^\infty (w R(t) - c) \tilde s_n(t) e^{-w \tilde x(t)} \rd t \\
    &=& \int_0^\tau (w R(t) -c) s_n(t) e^{-w x(t)} \rd t + \int_\tau^\infty (w R(t) - c) s_n(t) \lt( e^{ - w x(t)} - e^{-w \tilde x(t)} \rt) \rd t. 
    \end{eqnarray*}
    Both terms are negative, implying $J_n \leq \tilde J_n$. Thus, any optimal control $s_n^*$ remains optimal when $s_n^*(t) = 0$ for $t < \tau$.
    
    The first term is negative because $w R(t) < c$ on $[0, \tau)$ and the second term is also negative because $w R(t) > c$ on $(\tau, \infty)$ but $x(t) = \int_0^t s_n(u) \rd u \geq  \int_\tau^t s_n(u) \rd u = \int_0^t \tilde s_n(u) \rd u = \tilde x(t)$.  Both terms are negative, implying $J_n \leq \tilde J_n$. Thus, any optimal control $s_n^*$ remains optimal when $s_n^*(t) = 0$ for $t < \tau$. In fact, if an admissible control $s_n$ is positive on a subset of $[0, \tau)$ with a positive measure, then we immediately see that the control is suboptimal. 
    Note that if $w R(0) \geq c$, then there is nothing to prove as $\{t | w R(t) < c \}$ is empty thanks to Assumption~\ref{assumption:revenue}. 

    For the limiting value of an optimal control, we derive:
    \begin{eqnarray*}
    \sigma_n'(t) &=&  w R'(t) e^{-w x^*(t)} -w (w R(t) - c) s^*(t) e^{-w x^*(t)} + \psi'(t) \\
    &=& \Big[ w R'(t) - w s_{-n}(t) (w R(t) - c) \Big] e^{-w x^*(t)} \\
    &=& w \xi_n(t) e^{-w x^*(t)}.
    \end{eqnarray*}
    Integrating this over $[t, T]$, we obtain 
    $\sigma_n(t)  = \sigma_n(T) - w \int_t^T \xi_n(u) e^{-w x^*(u)} \rd u.$
    By sending $T$ to infinity, we get $\sigma_n(t) = - w \int_t^\infty \xi_n(u) e^{-w x^*(u)} \rd u$. 

    Using Assumption~\ref{assumption:revenue} and the properties of $R(t)$, we show $\xi_n(u)$ becomes negative for large $u$. If $\lim_u R'(u) > 0$, then $\liminf_u (w R(u) - c) s_{-n}(u) = \infty$. If $\lim_u R'(u) = 0$, then $\liminf_u (w R(u) - c) s_{-n}(u) > 0$. This implies $\sigma_n(t)$ is positive for sufficiently large $t$, thus $s_n^*(t) = \gamma_n$. \qed
\end{proof}
\begin{proof}[Proof of Lemma~\ref{lem:singular_interval}]
    Consider a singular control occurring in an interval $(t_1,t_2)$, where $\sigma_n(t)=0$. Define $\displaystyle J_{I}=\int_I (wR(t)-c)s_n(t)e^{-wx(t) } \rd t$ for interval $I$, and let $J_n^*$ and $J_I^*$ denote the optimal utility of miner $n$ over the entire interval and over $I$, respectively. We have $J_n = J_{[0,t_1]} + J_{(t_1,t_2)} + J_{[t_2, \infty)}$. Changes in $s_n$ on $(t_1,t_2)$ affect only $J_{(t_1,t_2)}$ and $J_{[t_2, \infty)}$ via $x_n(t_2)$. 

    Figure~\ref{fig: Singular control} illustrates two possible trajectories $\Gamma_1$ and $\Gamma_2$ for $x_n(t)$ on $(t_1, t_2)$. Define $\Gamma = \Gamma_1 - \Gamma_2$ and let $D$ be the region bounded by $\Gamma$. Abusing notation, the utility $J_n$ for the curve $\Gamma_i$ is denoted by $J_{\Gamma_i}$. Then, $J_{\Gamma_1} - J_{\Gamma_2}$ becomes the line integral over $\Gamma$ where the path of integration is counterclockwise. Applying Green's Theorem:
    \begin{eqnarray*}
    J_{\Gamma_1} - J_{\Gamma_2} &=& \oint_\Gamma (w R(t) -c) s_n(t) e^{- w x(t)} \rd t \\
    &=& \oint_\Gamma (w R(t)  - c) e^{ -w (x_{-n}(t) + x_n )} \rd x_n \\
    &=& \iint_D \frac{\partial}{\partial t} \left[ (w R(t)  - c) e^{ -w (x_{-n}(t) + x_n )} \right] \rd t \rd x_n \\
    &=& \iint_D e^{-w x(t)} w \xi_n(t) \rd t \rd x_n. 
    \end{eqnarray*}
    Since $\sigma_n(t) = 0$ implies $\sigma_n'(t) = 0$, and $\sigma_n'(t) = w \xi_n(t) e^{- w x^*(t)}$, we have $\xi_n(t) = 0$ on $(t_1, t_2)$. Thus, $J_{\Gamma_1} = J_{\Gamma_2}$, indicating that any control $x_n(t)$ yields the same utility given fixed endpoints at $t_1$ and $t_2$.
    
    \begin{figure}[t]
    \includegraphics[width=0.5\textwidth]{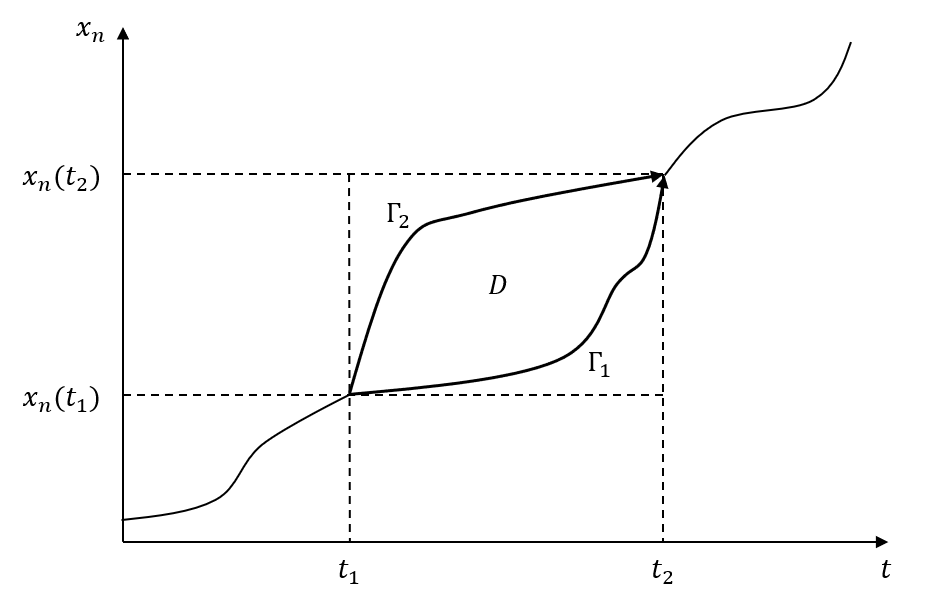}
    \centering
    \caption{Two possible trajectories for $x_n(t)$ and region $D$ by simple closed curve $\Gamma = \Gamma_1 - \Gamma_2$. }
    \label{fig: Singular control}
    \end{figure}

    On the other hand, we have $\psi(t_2) = - w J_{[t_2, \infty)}^*$ according to the definition of $\psi$. Since we have $\sigma_n(t_2) = 0$ and $\sigma_n(t) = (w R(t) - c) e^{-w x^*(t)} + \psi(t)$, we have 
    $$
    J_{[t_2, \infty)}^* =\frac{1}{w} \left( w R(t_2) - c \right) e^{-w x^*(t_2)}.
    $$
    Now, consider a control $s_n$ on $(t_1, t_2)$ defined as $s_n(t) = \Delta \varepsilon^{-1} {\bf 1}_{ \{t_2 - \varepsilon \leq t \leq t_2 \}}$ for some $\Delta$, with $s_n(t) = s_n^*(t)$ outside $(t_1, t_2)$. We can show: 
    \begin{eqnarray*}
    J_{[t_2,\infty)} &=& \int_{t_2}^\infty (w R(t) - c) s_n(t) e^{- w x(t)} \rd t \\
    &=& e^{-w x_n(t_2)} \int_{t_2}^\infty (w R(t) - c) s_n^*(t) e^{- w x_{-n}(t)  - w \int_{t_2}^t s_n^*(u) \rd u} \rd t  \\
    &=& e^{ w (x_n^*(t_2) - x_n(t_2) ) } J_{[t_2, \infty)}^* \\
    &=& \frac{1}{w} (w R(t_2) - c) e^{ - w x(t_2)  }. 
    \end{eqnarray*}  
    For $J_{(t_1, t_2)}$, we derive:
    \begin{eqnarray*}
    J_{(t_1, t_2)} &=& \int_{t_1}^{t_2} (w R(t) - c) s_n(t) e^{-w x(t)} \rd t \\
    &=& \frac{\Delta}{\varepsilon} \int_{t_2 - \varepsilon}^{t_2} (w R(t) - c)  e^{-w x_{-n}(t)} e^{ - w x_n^*(t_1)} e^{ - w \Delta \varepsilon^{-1}(t - t_2+ \varepsilon) } \rd t. 
    \end{eqnarray*}   
    This quantity is upper bounded by : 
    \begin{eqnarray*}
    \lefteqn{ \frac{\Delta }{\varepsilon } (w R(t_2) - c) e^{-w x_{-n}(t_2-\varepsilon)} e^{-w x_n^*(t_1)} \int_{t_2 - \varepsilon}^{t_2}  e^{ - w \Delta \varepsilon^{-1}(t - t_2+ \varepsilon) } \rd t } && \\
    & = & \frac{\Delta }{\varepsilon } (w R(t_2) - c) e^{-w x_{-n}(t_2-\varepsilon)} e^{-w x_n^*(t_1)} \frac{\varepsilon}{w \Delta} \left[ 1 - e^{- w \Delta} \right] \\
    & = &  \frac{1}{w} (w R(t_2) - c) e^{-w x_{-n}(t_2-\varepsilon)} e^{-w x_n^*(t_1)}  \left[ 1 - e^{- w \Delta} \right].
    \end{eqnarray*}
    
    Likewise, it is lower bounded by $w^{-1} (w R(t_2-\varepsilon) - c) e^{-w x_{-n}(t_2)} e^{-w x_n^*(t_1)}  \left[ 1 - e^{- w \Delta} \right]$. Since $J_{(t_1, t_2)}$ is not affected by the choice of $\varepsilon$ and $R(t)$, $x_{-n}(t)$ are continuous, we have
    $$
    J_{(t_1, t_2)} = \frac{1}{w} (w R(t_2) - c) e^{- w x_{-n}(t_2)} e^{-w x_n^*(t_1)} \left[1 - e^{- w \Delta}\right].
    $$
    Finally, by noting that $x(t_2) = x_{-n}(t_2) + x_n(t_2) = x_{-n}(t_2) + x_n^*(t_1) + \Delta$, it is straightforward to check that $J_{(t_1,t_2)} + J_{[t_2, \infty)}$ does not depend on $x_n(t_2)$. Therefore, we conclude that any choice of $s_n(t)$ on $(t_1, t_2)$ is optimal. 
    \qed
\end{proof} 

\begin{proof}[Proof of Lemma~\ref{lem:increasing_s}]
    First, consider the case where $\{t | w R(t) < c \}$ is nonempty. As in Lemma~\ref{lem:long_term_s}, there exists a unique $\tau$ such that $w R(\tau) = c$, and we set $s_n^*(t) = 0$ on $[0, \tau)$. The adjoint variable $\psi$ satisfies: 
    $$
    \psi(t) = - \int_\tau^\infty w s_n^*(u) (w R(u) - c) e^{-w x^*(u)} \rd u. 
    $$
    Thus, $\psi(t)$ is constant on $[0, \tau)$ and equal to $\psi(0)$, which is strictly negative. Consequently, $\sigma_n(t) = (w R(t) - c)e^{- w x^(t)} + \psi(t) < 0$ on $[0, \tau)$.

    Now, $\xi_n(t) = R'(t) - (w R(t) - c) s_{-n}(t)$ is non-increasing, with $\xi_n(0) > 0$ and $\xi_n(t)$ becoming negative for large $t$ as argued in the proof of Lemma~\ref{lem:long_term_s}. It is easy to see that $\xi_n(\cdot)$ cannot be zero at two distinct time points, say $a, b$. If it were, then $R' \equiv 0$ on the interval $(a,b)$, and this is a contradiction to strictly increasing $R$. There must be a single $\tilde\tau > \tau$ such that $\xi_n(t) > 0$ for $t < \tilde\tau$ and $\xi_n(t) < 0$ for $t > \tilde\tau$. Furthermore, $\tilde\tau$ cannot be smaller than or equal to $\tau$ because, otherwise, $\sigma_n(t)$ would start to decrease before $\tau$ and it would stay non-positive, contradicting the conclusion of Lemma~\ref{lem:long_term_s}. 

    As a result, $\sigma_n(t)$ strictly increases from a negative value to a maximum at $\tilde\tau$, then decreases, staying positive for $t \geq \tilde\tau$. We can find a unique $t^*$ such that $\sigma_n(t^*) = 0$, with the switching function changing from negative to positive at this point. The optimal control is thus $s_n^*(t) = \gamma_n {\bf 1}_{\{t \geq t^*\}}$.
    
    For the case where $w R(0) \geq c$, we consider two subcases:
    
    (i) If $\xi_n(0) > 0$, similar arguments apply. There's a unique $\tilde\tau$ where $\xi_n$ changes sign. If $\sigma_n(0) < 0$, there's a unique positive $t^*$ yielding $s_n^*(t) = \gamma_n {\bf 1}_{\{t \geq t^*\}}$. If $\sigma_n(0) \geq 0$, then $t^* = 0$ and $s_n^*(t) = \gamma_n$ for all $t$.
    
    (ii) If $\xi_n(0) \leq 0$, then $\xi_n(t) < 0$ for all $t > 0$, leading to decreasing $\sigma_n(\cdot)$. Since $\sigma_n(t)$ is positive for large $t$, it must be positive throughout, making $t^* = 0$ and $s_n^*(t) = \gamma_n$ for all $t$.
    \qed
\end{proof}
\begin{proof}[Proof of Lemma~\ref{lem:simple_s}]
    From Lemma~\ref{lem:long_term_s}, recall that $\sigma_n(t) = - w \int_t^\infty \xi_n(u) e^{-w x^*(u)} \rd u$. 
    
    When $w R(0) < c$, Lemma~\ref{lem:long_term_s} implies each $t_n^*$ must be at least $\tau = \inf\{t | wR(t) - c = 0\}$. The function $\xi_n(\cdot)$ is decreasing, becoming strictly decreasing for $t \geq \min_i t_i^* \geq \tau$. Since $\xi_n(t) = R'(t) > 0$ on $[0,\min_i t_i^*)$, $\xi_n$ is strictly decreasing on $[\min_i t_i^*, \infty)$, and $\xi_n(t)$ becomes negative for large $t$, there exists a unique $\tilde\tau$ where $\xi_n(\cdot)$ changes sign from positive to negative at $t = \tilde\tau$. As in Lemma~\ref{lem:increasing_s}, $\sigma_n(\cdot)$ is negative on $[0, \tau)$. The behavior of $\xi_n(\cdot)$ implies $\sigma_n'(t) > 0$ for $t < \tilde\tau$ and $\sigma_n'(t) < 0$ for $t > \tilde\tau$. Thus, $\sigma_n(\cdot)$ is strictly increasing on $[0, \tilde\tau)$, maximal at $\tilde\tau$, and then strictly decreasing. This ensures a unique $t_n^*$ such that $\sigma_n(t_n^*) = 0$, as $s_n^* \equiv 0$ would be contradictory.
    
    For $w R(0) \geq c$, we follow Lemma~\ref{lem:increasing_s}. If $\xi_n(0) > 0$, there's a unique positive $t_n^*$ solving $\sigma_n(t_n^*) = 0$ when $\sigma_n(0) < 0$; otherwise, $t_n^* = 0$. If $\xi_n(0) \leq 0$, $\sigma_n(\cdot) > 0$ always, so $t_n^* = 0$.
    
    Now, consider for $m < n$:
    \begin{eqnarray*}
    \lefteqn{ \sigma_n(t_n^*) - \sigma_m( t_n^*)} && \\
    &=& - w \int_{t_n^*}^\infty \big( \xi_n(u) - \xi_m(u) \big) e^{-w x^*(u)} \rd u \\
    &=& w \int_{t_n^*}^\infty (w R(u) - c) \left( - \gamma_n + \gamma_m {\bf 1}_{\{ t \geq t_m^*\}} \right) e^{-w  x^*(u)} \rd u. 
    \end{eqnarray*}
    
    If $t_m^* > t_n^*$, this equals:
    $$
    - w \int_{t_n^*}^{t_m^*} (wR(u) - c) \gamma_n e^{-w x^*(u)} \rd u - w \int_{t_m^*}^\infty (w R(u)-c)(\gamma_n - \gamma_m) e^{-w x^*(u)} \rd u.  
    $$
    This is strictly negative, implying $\sigma_m(t_n^*) > \sigma_n(t_n^*) = 0$. However, $\sigma_m(t_m^*) = 0$ and $\sigma_m(\cdot)$ is negative for $t < t_m^*$, a contradiction. Thus, $t_m^* \leq t_n^*$. This holds even when there's no solution to $\sigma_n(t) =0$, leading to $t_m^*= t_n^* = 0$. 
    \qed

\end{proof}

\begin{proof}[Proof of Lemma~\ref{lem:nash compare}]
As noted in Lemma~\ref{lem:simple_s}, $\sigma_n(t) = -w \int_t^\infty \xi_n(u) e^{-w x^*(u)}\rd u$. Here, $\xi_n(u) = R'(u) - (wR(t)-c)s_{-n}^*(u)$. From the same lemma, we know that $\xi_n(\cdot)$ is positive on $[0, \tilde \tau)$ and negative on $(\tilde\tau, \infty)$. From the relation $\sigma_n'(t) = w\xi_n(t) e^{-wx^*(t)}$, we concluded that $\sigma_n(t)$ increases from a negative value to a maximum positive value at $\tilde\tau$, and stays positive (at least for the Nash equilibrium in Theorem~\ref{thm:nash}). In this case, therefore, the optimal starting time $t_n^*$ is the time $t$ before $\tilde\tau$ such that $\sigma_n(t)$ crosses zero. For miner $n$ with the same capacity $\gamma_n$ but in a different network with greater $\tilde s_{-n}^*(u)$, the function $\tilde \xi_n(u) = R'(u) - (wR(u)-c) \tilde s_{-n}^*(u)$ is not greater than $\xi_n(u)$ at all times. This implies a smaller $\tilde \xi_n$ hits zero at a smaller time value than $\xi_n$, which in turn leads to a smaller $\tilde t_n^*$. \qed

\end{proof}

\section{Proofs of Theorems}

\begin{proof}[Proof of Theorem~\ref{thm:nash}]
    Lemma~\ref{lem:long_term_s} ensures the existence of $\bar t$ such that $s_n^*(t) = \gamma_n$ for all $t \geq \bar t$ and all $n$. Define $\Gamma_n$ as the largest interval including $(\bar t, \infty)$ where $s_n^*(t) = \gamma_n$, and let $\bar t_n = \inf \Gamma_n$. As there are finitely many miners, $\min \bar t_n$ is well-defined.
    
    Suppose piecewise controls $s_n^*$'s form a Nash equilibrium. By Lemma~\ref{lem:constant_s} below, each $s_n^*$ is 0 or $\gamma_n$. We argue by contradiction that $s_n^*$ cannot jump from $\gamma_n$ to 0. Let $t_{\rm off}$ be the last time any miner turns off machines (if such time exists). Assume miner $n$ turns off at $t_{\rm off}$. Then $s_{-n}(t)$ increases for $t \geq t_{\rm off}$, making $\xi_n(t)$ decreasing.
    
    We claim $\sigma_n'(t_{\rm off}) \leq 0$. If $\sigma_n'(t_{\rm off}) > 0$, $\sigma_n$ is strictly increasing near $t_{\rm off}$. This leads to a contradiction with the assumption that $s_n^*$ changes from $\gamma_n$ to 0 at $t_{\rm off}$. Thus, $\xi_n(t_{\rm off}) \leq 0$ since $\sigma_n'(t) = w \xi_n(t)e^{-w x^*(t)}$. Similarly, $\xi_n(\bar t_n) \geq 0$.
    
    Next, we claim $\sigma_n'(t_{\rm off}) \leq 0$. Suppose $\sigma_n'(t_{\rm off}) > 0$. Then, in a neighborhood of $t_{\rm off}$, $\sigma_n$ is strictly increasing. Also, there exist $t_a < t_{\rm off} < t_b$ such that $\sigma_n(t_a) \leq 0$ and $\sigma_n(t_b) > 0$ or $\sigma_n(t_a) < 0$ and $\sigma_n(t_b) \geq 0$. (Due to the property of a switching function.)  The former case implies $s_n^*(t_b) = \gamma_n$ and the latter implies $s_n^*(t_a) = 0$. This is a contradiction since $s_n^*$ is assumed to change from $\gamma_n$ to 0 at $t_{\rm off}$. 
    
    Consequently, $\xi_n(t) = 0$ on $[t_{\rm off}, \bar t_n]$, implying $s_{-n}(t) = 0$ on that interval. This leads to $R'(t) = 0$, contradicting the assumption that $R$ is strictly increasing. We conclude that each $s_n^*$ is non-decreasing. Lemma~\ref{lem:increasing_s} then implies $s_n^*(t) = \gamma_n {\bf 1}_{\{ t \geq t^*_n\}}$ for some $t_n^*$.
    
    Conversely, consider controls $s_n^*$ defined as these simple step functions. As this makes each $s_{-n}$ non-decreasing, Lemma~\ref{lem:increasing_s} suggests this as a Nash equilibrium candidate. The problem reduces to optimizing $t_n^*$'s, which by Lemma~\ref{lem:simple_s}, have unique solutions satisfying $\sigma_n(t_n^*) = 0$.
    \qed
\end{proof}

\begin{lem}\label{lem:constant_s}
    Suppose that $s_{-n}$ is piecewise constant. Then, there exists an optimal control $s_n^*$ that takes values 0 or $\gamma_n$ only.
\end{lem}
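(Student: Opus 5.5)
We start from an optimal control $s_n^*$, which exists by Lemma~\ref{lem:existence}, and modify it so that it takes only the values $0$ and $\gamma_n$ without lowering $J_n$. By the maximum principle, $s_n^*(t)=\gamma_n$ wherever $\sigma_n(t)>0$ and $s_n^*(t)=0$ wherever $\sigma_n(t)<0$. So the only possible intermediate values sit on the zero set $Z=\{t:\sigma_n(t)=0\}$, and the whole task is to handle $Z$.

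\textbf{Step 1: structure of $\sigma_n$.} As in the proof of Lemma~\ref{lem:long_term_s}, $\sigma_n$ is absolutely continuous with
\[
\sigma_n'(t)=w\,\xi_n(t)e^{-wx^*(t)},\qquad \xi_n(t)=R'(t)-(wR(t)-c)s_{-n}(t).
\]
Take a piece $(a,b)$ on which $s_{-n}$ equals a constant $k$. There $\xi_n$ is continuous, and if $k>0$ it is strictly decreasing, since $R'$ is non-increasing and $wR-c$ is strictly increasing. If $k=0$, then $\xi_n=R'$, which is positive almost everywhere because $R$ is strictly increasing. In either case $\xi_n$ has at most one zero on $(a,b)$, apart from a degenerate case where it is identically zero. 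So on each piece, $\sigma_n$ is piecewise strictly monotone with at most two monotone stretches, unless $\xi_n\equiv 0$ on a subinterval.

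\textbf{Step 2: $Z$ is a finite set plus intervals.} By Step 1, on each constancy piece of $s_{-n}$ the set $Z$ consists of at most finitely many isolated points together with possibly one interval on which $\sigma_n\equiv0$. The isolated points have measure zero, so changing $s_n^*$ there changes neither $x^*$ nor $J_n$; we set $s_n^*$ equal to $0$ at those points. Only countably many constancy pieces are involved, so the modified points still form a null set.

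\textbf{Step 3: singular intervals.} On any interval where $\sigma_n\equiv0$, Lemma~\ref{lem:singular_interval} says the optimal utility does not depend on how $s_n^*$ is chosen there. Its proof shows that only the total mass $\Delta=\int s_n$ on the interval matters. We therefore replace $s_n^*$ on such an interval $(t_1,t_2)$ by $\gamma_n\mathbf 1_{[t_2-\Delta/\gamma_n,\,t_2]}$. This is a bang-bang control with the same $x_n(t_2)$, so $J_n$ is unchanged, and it is admissible because $\Delta\le\gamma_n(t_2-t_1)$.

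The main obstacle is justifying that these modifications keep the control optimal globally: the adjoint $\psi$ and $\sigma_n$ could in principle shift after $x$ changes inside a singular interval. We handle this by doing all replacements at once and checking that the state agrees with $x^*$ at every interval endpoint. Then $J_n$ is unchanged, so the new control is again optimal. Any issue with accumulation of pieces is avoided because $s_{-n}$ is piecewise constant and jumps only finitely often on bounded intervals, and beyond a large time $s_n^*=\gamma_n$ by Lemma~\ref{lem:long_term_s}.
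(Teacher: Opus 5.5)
Your proposal is correct, and its core (Steps 1--2) is the paper's argument. On each constancy piece of $s_{-n}$ with value $k$, $\xi_n=R'-(wR-c)k$ is strictly decreasing for $k>0$ and equals $R'>0$ for $k=0$, so $\sigma_n$ has only isolated zeros and the singular set is null. The paper phrases this as a contradiction: infinitely many zeros of $\sigma_n$ on one piece would give infinitely many zeros of $\xi_n$ by Rolle's theorem. Your Step~3 is harmless but not needed, since your own Step~1 already rules out $\xi_n\equiv 0$ on any subinterval, so no singular intervals occur and the appeal to Lemma~\ref{lem:singular_interval} can be dropped.
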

\begin{proof}
    Suppose that a singular control occurs on a set of positive measure on some compact interval. Given that $s_{-n}$ is piecewise constant by assumption, it can take only finitely many constant values. Thus, there must exist an interval where $\sigma_n(t) = 0$ and $s_{-n}(t) = k$ (constant) for infinitely many $t$ values.

    Applying Roll's Theorem, we conclude that $\sigma_n'(t) = 0$ must occur infinitely often in this interval. Recall that $\sigma_n'(t) = w \xi_n(t) e^{-w x^*(t)}$, implying the existence of a sequence of $t$ values where $\xi_n(t) = 0$, or equivalently, $R'(t) - (w R(t) -c ) k = 0$.

    This leads to a contradiction unless $k=0$, as $R'(t) - (w R(t) -c )k$ is strictly decreasing and cannot be zero infinitely often for $k \neq 0$. However, if $k = 0$, then $R'(t) = 0$ infinitely many times, implying $R'$ must be zero on an interval (due to its non-increasing and non-negative properties). This contradicts the assumption that $R$ is strictly increasing.
    \qed 
\end{proof}

\begin{proof}[Proof of Theorem~\ref{thm:gap_condition}]
    Without loss of generality, assume miner $N$ is the largest. First, $w R(0) \geq c$ is necessary; otherwise, by Lemma~\ref{lem:long_term_s}, $s_N^*(t) = 0$ for $t < \tau = \inf{t | wR(t) = c}$, implying $t_N^* > 0$. At equilibrium, $$ \xi_N(0) = R'(0) - (w R(0) - c) s_{-N}^*(0) $$ and $\sigma_N(t) = - w \int_t^\infty \xi_N(u) e^{- w x^*(u)} \rd u$ from Lemma~\ref{lem:long_term_s}.
    
    For $t_N^* = 0$, $\sigma_N(0) \geq 0$ is necessary; otherwise, $s_N^*(\cdot) = 0$ near zero, contradicting $t_N^* = 0$. For $t \geq t_N^*$, 
    \begin{eqnarray*}
    \sigma_N(t) &=& w   \int_{t}^\infty \left((wR(u)-c)\gamma_{-N}-R'(u)\right)e^{-wx^*(u)}\rd u \\
    &=& \frac{\gamma_{-N}}{\gamma}\int_{t}^\infty \left(w R(u)-c \right) w \gamma e^{-wx^*(u)} \rd u
     - w \int_{t}^\infty R'(u)e^{-wx^*(u)} \rd u \\
     &=& k \frac{\gamma_{-N}}{\gamma} \int_t^\infty \left(w R(u) - c \right) f_{\bar B}(u) \rd u  - k w  \int_t^\infty R'(u) \bar F_{\bar B}(u) \rd u \\
    &=& k \frac{\gamma_{-N}}{\gamma}  \mathbb{E}\left[ \left(w R(\bar B) - c \right){\bf 1}_{\bar B \geq t}\right]
    + k w R(t) \bar F_{\bar B}(t)   - k w \mathbb{E}\left[ R(\bar B){\bf 1}_{\bar B \geq t} \right]
    \end{eqnarray*}
    where $\gamma_{-N} = \sum_{i=1}^{N-1} \gamma_i$, $\gamma = \sum_{i=1}^N \gamma_i$, and $k = e^{-w \int_0^{t_N^*} s^*(u) \rd u } e^{w \gamma t_N^*}$.
    If $t_N^* = 0$, $\sigma_N(0) \geq 0$ is equivalent to
    $$
    \frac{\gamma_{-N}}{\gamma} \mathbb{E}\left[ (w R(\bar B) - c) \right] + w R(0) - w \mathbb{E}\left[ R(\bar B)\right] \geq 0.
    $$
    which leads to \eqref{eq:gap_condition}.

    For sufficiency, suppose $t_N^* > 0$. Then $\sigma_N(t_N^*) = 0$ (Lemma~\ref{lem:simple_s}). Using the conditional distribution of $\bar B$,
    $$
    \frac{\gamma_{-N}}{\gamma} \mathbb{E}\left[ w R(\bar B + t_N^*) - c \right] + w R(t_N^*) - w \mathbb{E}\left[ R(\bar B + t_N^*)\right] = 0.
    $$
    This is equivalent to ${\sf F}(t_N^*) = 0$ where
    $$
    {\sf F}(t) = \gamma (w R(t) - c) - \gamma_N \mathbb{E}\left[ w R(\bar B + t) - c\right].
    $$
    Note ${\sf F}'(t) = \gamma w R'(t) - \gamma_N \mathbb{E}\left[ w R'(\bar B + t) \right] > 0$ due to Assumption~\ref{assumption:revenue}. Thus, ${\sf F}(0) < 0$, which is equivalent to
    $$
    \frac{\gamma_{-N}}{\gamma} \mathbb{E}\left[ (w R(\bar B) - c) \right] + w R(0) - w \mathbb{E}\left[ R(\bar B)\right] < 0.
    $$
    This completes the proof.
    \qed
\end{proof}

\begin{proof}[Proof of Theorem~\ref{thm:steady_state}]
    Theorem~\ref{thm:nash} shows that optimal responses of miners at equilibrium are simple step functions $\gamma_n {\bf 1}{{t \geq t^*_n}}$. Each $t_n^*$ is a decreasing function of $w$, as higher winning rates accelerate all processes. Let $x^*(t; w)$ denote $x^*$. Then, for $w{\rm new} \geq w_{\rm old}$, we have $x^*(t; w_{\rm new}) \geq x^*(t; w_{\rm old})$, implying $\bmu(w_{\rm new}) \leq \bmu(w_{\rm old})$ from \eqref{eq:bmu}.
            
     We claim that each $t_n^*$ is continuously decreasing in $w$. Assume $\gamma_1 \leq \gamma_2 \leq \cdots \leq \gamma_N$ without loss of generality. By Lemma~\ref{lem:simple_s}, $t_n^*$ is the unique solution to $\sigma_n(t) = 0$ and $t_1^* \leq \cdots \leq t_N^*$. The largest $t_N^*$ is determined by:
     $$
     \int_{t}^\infty \xi_N(u) e^{-w x^*(u)} \rd u = 0
     $$
     where $\xi_N(t) = R'(t) - (w R(t) - c) s_{-N}(t) = R'(t) - (w R(t) -c) \gamma_{-N}$ for $t \geq t_N^*$ and 
     $$
     x^*(t) = \int_0^t s^*(u) \rd u  = \sum_{n=1}^N \gamma_n ( t - t_n)^+. 
     $$
    This equation is equivalent to $ {\sf G}_N(t, w) := \int_{t}^\infty \xi_N(u) e^{- w \gamma u } \rd u = 0$. Note that $\partial_t {\sf G}_N = - \xi_N(t_N^*) e^{- w \gamma t_N^*}$ cannot be zero, ensuring $t_N^*$ is continuous and differentiable in $w$ by the Implicit Function Theorem.
    
    The next largest $t_n^*$ ($t_{N-1}^*$ if $\gamma_{N-1} < \gamma_N$) can be computed using $t_N^*$. For convenience, assume that it is indeed $t_{N-1}^*$. We can proceed similarly by defining a function ${\sf G}_{N-1}(t, w) := \int_t^{t_N^*} \xi_{N-1}(u) e^{- w \gamma_{-N} u} \rd u + e^{w \gamma_N t_N^*} \int_{t_N^*}^\infty \xi_{N-1}(u) e^{-w \gamma u} \rd u$. The desired $t_{N-1}^*$ is a solution to ${\sf G}_{N-1}(t,w) = 0$. We can check $\partial_t {\sf G}$ cannot be zero at $t = t_{N-1}^*$ for the same reason as above. Hence, $t_{N-1}^*$ is also continuously decreasing and differentiable in $w$. This argument can be repeated for all other $t_n^*$'s. One consequence of this continuity and \eqref{eq:bmu} is that $\bmu(w)$ is  a differentiable function of $w$. 
            
    If there's no mining gap when $w = (\gamma \mu)^{-1}$, block distribution $B$ follows an exponential distribution with rate $w\gamma = \mu^{-1}$, satisfying $\bmu(w) = \mu$. If there's a mining gap, the winning rate must be adjusted higher to approach the target $\mu$.
             
    For general $w$, let us define $\bt(w) = \inf\{t : R'(t) \leq (wR(t) - c) \min_n \gamma_{-n} \}$. Here, $\min_n \gamma_{-n}$ is equal to $\gamma_{-N}$ from the assumption on capacity sizes. As in the second paragraph of this proof above, $\xi_N(t_N^*)$ must be positive. The newly defined $\bt(w)$ is therefore greater than $t_N^*$.  This leads to $s^*(t) \geq \gamma {\bf 1}_{\{t \geq \bt(w)\}}$ and thus
    $$
    \bmu(w) = \int_0^\infty \exp\left( - w x^*(t) \right) \rd t \leq \int_0^\infty \exp\left( - w \int_0^t \gamma {\bf 1}_{\{u \geq \bt(w)\}} \rd u \right) \rd t = \bt(w) + \frac{1}{\gamma w}.
    $$
    This inequality shows that $\lim_{w \rightarrow \infty} \bmu(w) \leq \lim_{w \rightarrow \infty}\left( \bt(w) + (\gamma w)^{-1} \right)$. However, the convergence of $\bt(w)$ to 0 is obvious because we have continuous $R(\cdot)$  and positive $R(0)$ (recall that there are always sufficiently large transaction requests in the mempool). Therefore, $\lim_{w \rightarrow \infty} \bmu(w) = 0$.  On the other hand, it is clear that $\lim_{w \rightarrow 0} \bmu(w) = \infty$. By the Intermediate Value Theorem, we conclude that there exists $w$ with  $\bmu(w) = \mu$. 
    \qed
\end{proof}
\begin{proof}[Proof of Theorem~\ref{thm:DAA_conv}]
    Suppose $\bmu'(w) > -2 \mu / w$. We have $f'(x) = \bmu'(w e^x) w e^x / \bmu(w e^x)$ and $f'(0) = \bmu'(w) w / \mu$, so $f'(0) > -2$. As $\lim_{x \rightarrow 0} f(x)/x = f'(0)$, we can find an $\varepsilon$-neighborhood of 0, $B_\varepsilon(0)$, where $f(x_0) / x_0 > -2$ for any $x_0 \in B_\varepsilon(0)$. This implies $x_1 / x_0 > -1$.
    
    On the other hand, we note that if $x_0 > 0$, then $f(x_0) < 0$ (see Figure~\ref{fig:2.4.DAA_conv}) and thus $x_1 /x_0 < 1$. If $x_0 < 0$, then $f(x_0) > 0$ and thus $x_1 / x_0 < 1$ again. Combining this with the observation in the previous paragraph, we get $|x_1/x_0| < 1$. Therefore, $x_1$ stays in $B_\varepsilon(0)$. Reating the same argument, we obtain a decreasing sequence $x_i$'s inside the same neighborhood. Consequently, $\lim_i x_i = 0$. 
    
    Now suppose $\bmu'(w) < -2 \mu / w$ or $f'(0) < -2$. Further suppose that $w$ is $\varepsilon_0$-stable for some $\varepsilon_0$. Let us fix a small positive constant $k$ so that $f'(0) < -2 - k$. Then, there exists some $\varepsilon$ such that $f(x_0) / x_0 < -2  - k$ for any $x_0$ in the $\varepsilon$-neighborhood of the origin while $|w e^{x_0} - w| < \varepsilon_0$. This yields $x_1 / x_0 < -1 - k$. Hence, $|x_1|$ is greater than $|x_0|$. Due to the assumed stability, $|x_i| \leq |x_0|$ for all large $i$'s. This is a contradiction as $x_{i+1}$ is outside of the neighborhood for the same reason. 
    \qed
\end{proof}

\end{appendix}

\end{document}

%% file: symbols.tex
\newcommand{\vs}{\vspace*{3mm}}

\newcommand{\lt}{\left}
\newcommand{\rt}{\right}
\newcommand{\bex}{\begin{eqnarray*}}
\newcommand{\eex}{\end{eqnarray*}}

\newcommand{\rd}{{\rm d}}

\def\bt{{\bf t}}

\def\bw{{\bf w}}

\def\bmu{\mbox{\boldmath $\mu$}}

\def\bbE{\mathbb{E}}

\def\bbP{\mathbb{P}}

%% file: Ch2_ref.bib
@Online{jpmorgan,
 author = {Ossinger, J.},
 year = {2022},
 title = {{JPMorgan} Says Bitcoin Cost of Production May Be Down to \$13,000},
 journal = {Bloomberg},
 url = {https://www.bloomberg.com/news/articles/2022-07-14/jpmorgan-says-bitcoin-cost-of-production-may-be-down-to-13-000}
}

@article{hayes2019,
  title={Bitcoin price and its marginal cost of production: {S}upport for a fundamental value},
  author={Hayes, A. S.},
  journal={Applied Economics Letters},
  volume={26},
  pages = {554--560},
  year={2019}
}

@article{biais2023,
  title={Advances in blockchain and crypto economics},
  author={Biais, B. and Capponi, A. and Cong, L. W. and Gaur, V. and Giesecke, K.},
  journal={Management Science},
  volume={69},
  pages = {6417--6426},
  year={2023}
}

@article{dmitruk2005existence,
  title={Existence theorem in the optimal control problem on an infinite time interval},
  author={Dmitruk, A. V. and Kuz'kina, N. V.},
  journal={Mathematical Notes},
  volume={78},
  pages={466--480},
  year={2005}
}

@inproceedings{eyal2015miner,
  title={The miner's dilemma},
  author={Eyal, I.},
  booktitle={IEEE Symposium on Security and Privacy},
  pages={89--103},
  year={2015}
}

@inproceedings{kwon2019bitcoin,
  title={Bitcoin vs. bitcoin cash: {C}oexistence or downfall of bitcoin cash?},
  author={Kwon, Y. and Kim, H. and Shin, J. and Kim, Y.},
  booktitle={IEEE Symposium on Security and Privacy},
  pages={935--951},
  year={2019}
}

@article{eyal2018majority,
  title={Majority is not enough: {B}itcoin mining is vulnerable},
  author={Eyal, I. and Sirer, E.},
  journal={Communications of the ACM},
  volume={61},
  pages={95--102},
  year={2018}
}

@inproceedings{sapirshtein2017optimal,
  title={Optimal selfish mining strategies in bitcoin},
  author={Sapirshtein, A. and Sompolinsky, Y. and Zohar, A.},
  editor = {Grossklags, J. and Preneel, B.},
  booktitle={Financial Cryptography and Data Security},
  pages={515--532},
  year={2017},
  organization={Springer}
}

@article{hansjoerg2022profitability,
  title={On the profitability of selfish blockchain mining under consideration of ruin},
  author={Albrecher, H. and Goffard, P.-O.},
  journal={Operations Research},
  volume={70},
  pages={179--200},
  year={2022}
}

@inproceedings{carlsten2016instability,
  title={On the instability of bitcoin without the block reward},
  author={Carlsten, M. and Kalodner, H. and Weinberg, S. M. and Narayanan, A.},
  booktitle={Proceedings of the 2016 ACM SIGSAC Conference on Computer and Communications Security},
  pages={154--167},
  year={2016}
}

@inproceedings{tsabary2018gap,
  title={The gap game},
  author={Tsabary, I. and Eyal, I.},
  booktitle={Proceedings of the 2018 ACM SIGSAC Conference on Computer and Communications Security},
  pages={713--728},
  year={2018}
}

@inproceedings{goren2019mind,
  title={Mind the mining},
  author={Goren, G. and Spiegelman, A.},
  booktitle={Proceedings of the 2019 ACM Conference on Economics and Computation},
  pages={475--487},
  year={2019}
}

@inproceedings{fiat2019energy,
  title={Energy equilibria in proof-of-work mining},
  author={Fiat, A. and Karlin, A. and Koutsoupias, E. and Papadimitriou, C.},
  booktitle={Proceedings of the 2019 ACM Conference on Economics and Computation},
  pages={489--502},
  year={2019}
}

@inproceedings{kawase2017transaction,
  title={Transaction-confirmation time for bitcoin: {A} queueing analytical approach to blockchain mechanism},
  author={Kawase, Y. and Kasahara, S.},
  editor={Yue, W. and Li, Q.-L. and  Jin, S. and Ma, Z.},
  booktitle={Queueing Theory and Network Applications},
  pages={75--88},
  year={2017}
}

@article{kasahara2018effect,
  title={Effect of Bitcoin fee on transaction-confirmation process},
  author={Kasahara, S. and Kawahara, J.},
  journal={Journal of Industrial and Management Optimization},
  volume={15},
  pages={365--386},
  year={2018}
}

@article{bowden2020modeling,
  title={Modeling and analysis of block arrival times in the Bitcoin blockchain},
  author={Bowden, R. and Keeler, H. P. and Krzesinski, A. E. and Taylor, P. G.},
  journal={Stochastic Models},
  volume={36},
  pages={602--637},
  year={2020}
}

@inproceedings{gebraselase2021transaction,
  title={Transaction characteristics of bitcoin},
  author={Gebraselase, B. G. and Helvik, B. E. and Jiang, Y.},
  booktitle={IFIP/IEEE International Symposium on Integrated Network Management},
  pages={544--550},
  year={2021}
}

@article{huberman2021monopoly,
  title={Monopoly without a monopolist: {A}n economic analysis of the bitcoin payment system},
  author={Huberman, G. and Leshno, J. D. and Moallemi, C.},
  journal={Review of Economic Studies},
  volume={88},
  pages={3011--3040},
  year={2021}
}

@article{li2022analyzing,
  title={Analyzing Bitcoin transaction fees using a queueing game model},
  author={Li, J. and Yuan, Y. and Wang, F.-Y.},
  journal={Electronic Commerce Research},
  volume = {22},
  pages={135--155},
  year={2022}
}

@article{easley2019mining,
  title={From mining to markets: {T}he evolution of bitcoin transaction fees},
  author={Easley, D. and O'Hara, M. and Basu, S.},
  journal={Journal of Financial Economics},
  volume={134},
  pages={91--109},
  year={2019}
}

@inproceedings{kroll2013economics,
  title={The economics of {B}itcoin mining, or {B}itcoin in the presence of adversaries},
  author={Kroll, J. A. and Davey, I. C. and Felten, E. W.},
  booktitle={The 12th Workshop on the Economics of Information Security},
  number={11},
  year={2013}
}

@inproceedings{brenner2015trends,
author= {M{\"o}ser, M. and B{\"o}hme, R.},
editor={Brenner, M. and Christin, N. and Johnson, B. and Rohloff, K.},
title={Trends, Tips, Tolls: {A} Longitudinal Study of Bitcoin Transaction Fees},
booktitle={Financial Cryptography and Data Security},
year={2015},
pages={19--33}
}

@inproceedings{gervais2016security,
author = {Gervais, A. and Karame, G. O. and W\"{u}st, K. and Glykantzis, V. and Ritzdorf, H. and Capkun, S.},
title = {On the Security and Performance of Proof of Work Blockchains},
booktitle = {Proceedings of the 2016 ACM SIGSAC Conference on Computer and Communications Security},
pages = {3--16},
year = {2016}
}

@inproceedings{nayak2016stubborn,
  author={Nayak, K. and Kumar, S. and Miller, A. and Shi, E.},
  booktitle={IEEE European Symposium on Security and Privacy}, 
  title={Stubborn Mining: {G}eneralizing Selfish Mining and Combining with an Eclipse Attack}, 
  year={2016},
  pages={305--320}
}
